\documentclass[11pt]{amsart}

\usepackage{amsmath,amssymb,latexsym,amsthm,enumerate}
\usepackage{verbatim}
\usepackage{enumerate} 
\usepackage{hyperref}

\usepackage[dvips]{graphicx}
\usepackage{subfigure}

\addtolength{\hoffset}{-2cm}
\textwidth 16.5cm
\addtolength{\textheight}{1.8cm}
\topmargin -0.2cm
\sloppy
\linespread{1.3}

\theoremstyle{plain}
\newtheorem{theorem}{Theorem}%[section]
\newtheorem{proposition}[theorem]{Proposition}
\newtheorem{lemma}[theorem]{Lemma}
\newtheorem{corollary}[theorem]{Corollary}
\theoremstyle{definition}
\newtheorem{definition}[theorem]{Definition}

\theoremstyle{remark}

\newtheorem*{remark}{Remark}
\newtheorem*{remarks}{Remarks}

\newcounter{numpar}[section]
%\renewcommand*{\thenumpar}{\thesection.\arabic{numpar}}

%Ex.1: \newpar{ Separating times.} (note space!), Ex.2: \newpar{}. Then start a new line

\newcommand*{\wt}{\widetilde}
\newcommand*{\wh}{\widehat}

 %A mark for returning later

\newcommand*{\dd}{\mathrm d}     %For dx, dy, etc.

\newcommand*{\cD}{\mathcal D}

\newcommand*{\cF}{\mathcal F}

\newcommand*{\cI}{\mathcal I}
\newcommand*{\cJ}{\mathcal J}

\newcommand*{\bbC}{\mathbb C}
\newcommand*{\bbN}{\mathbb N}

\newcommand*{\bbR}{\mathbb R}

\newcommand*{\EE}{\mathsf E}
\newcommand*{\PP}{\mathsf P}

\DeclareMathOperator{\sgn}{sgn}

\newcommand*{\E}{\mathrm{e}}
\newcommand{\Rplus}{\mathbb{R}_{\geqslant 0}}
\newcommand{\pd}[2]{\frac{\partial #1}{\partial #2}}

\newcommand{\exit}{{\mbox{\, \vspace{3mm}}}\hfill\mbox{$\square$}}

%========================================================================================
\begin{document}
\title[Large deviations in affine stochastic volatility models with jumps]{Large 
deviations and stochastic volatility with jumps: 
\\asymptotic implied volatility for affine models}

\author{Antoine Jacquier}
\address{Institut f\"ur Mathematik - Technische Universit\"at Berlin, Germany}
\email{jacquier@math.tu-berlin.de}

\author{Martin Keller-Ressel}
\address{Institut f\"ur Mathematik - Technische Universit\"at Berlin, Germany}
%\address{Department of Mathematics, Technische Universit\"at Berlin, Germany}
\email{mkeller@math.tu-berlin.de}

\author{Aleksandar Mijatovi\'{c}}
\address{Department of Statistics, University of Warwick, UK}
\email{a.mijatovic@warwick.ac.uk}

\keywords{Large deviation principle; Stochastic volatility with jumps; Affine processes; Implied volatility 
in the large maturity limit}

\subjclass[2000]{60G44, 60F10, 91G20}

\thanks{AJ would like to thank MATHEON for financial support.}

\begin{abstract} 
Let
$\sigma_t(x)$
denote the implied volatility
at maturity
$t$
for a strike
$K=S_0\E^{xt}$,
where
$x\in\bbR$
and
$S_0$
is the current value of the underlying.
We show that
$\sigma_t(x)$
has a \textit{uniform} (in $x$) limit as maturity
$t$
tends to infinity, given by the formula
$\sigma_\infty(x)  =  \sqrt{2}\left(h^*(x)^{1/2}+\left(h^*(x)-x\right)^{1/2}\right)$,
for 
$x$
in some compact neighbourhood of
zero
in the class of affine stochastic volatility models.
The function 
$h^*$
is the convex dual
of  
the limiting cumulant generating function 
$h$
of the scaled log-spot process. We express 
$h$
in terms of the functional characteristics 
of the underlying model.
The proof of the limiting formula rests on the 
large deviation behaviour of the scaled log-spot process 
as time tends to infinity.
We apply our results to obtain the limiting smile for
several classes of stochastic volatility models
with jumps used in applications (e.g. Heston with state-independent jumps,
Bates with state-dependent jumps and Barndorff-Nielsen-Shephard model).
\end{abstract}

\maketitle

%========================================================================================
\section{Introduction}
\label{sec:intro} 
Let
the process
$S=\E^X$
model
a risky security under an
equivalent martingale measure
and let
$\sigma_t(x)$
denote the implied volatility
at maturity
$t$
for a strike
$K=S_0\E^{xt}$
(see~\eqref{eq:Simga_t_Def}
for the precise definition of
$\sigma_t(x)$).
The main result of the present paper
(Theorem~\ref{thm:AsymVol})
states that, if the log-spot
$X$
follows an affine stochastic volatility process with jumps,
then $\sigma_t(x)$ converges to $\sigma_\infty(x)$ as the maturity $t$ tends to infinity,
where $\sigma_\infty$ is given by the formula
\begin{eqnarray}
\label{eq:FirstEq}
%\sigma_\infty(x) & = & \sqrt{2}\left(\sqrt{h^*(x)}+\sqrt{h^*(x)-x}\right)
\sigma_\infty(x) & = & \sqrt{2}\left(h^*(x)^{1/2}+\left(h^*(x)-x\right)^{1/2}\right).
\end{eqnarray}
The function
$h$
is the limiting cumulant generating function of the scaled 
log-spot
$(X_t/t)_{t\geq1}$
and
$h^*$
is its convex dual (i.e. the Fenchel-Legendre transform of $h$).
Locally uniform convergence %on compact subsets 
of the implied volatility to $\sigma_\infty$ is also established.
%in
%$\bbR\setminus\{x^*,\widetilde x^*\}$,
%for some
%where
%$x^*<0<\widetilde x^*$,

In~\cite{FJ,FJM_note} the limiting behaviour of
the smile at large maturities in the Heston model
is investigated.
Theorem~\ref{thm:AsymVol} can be viewed as a generalisation
of the main result in~\cite{FJ,FJM_note}. Not only does it cover a large 
class of stochastic volatility models with jumps rather than a 
single affine model with continuous trajectories, but furthermore
provides a better understanding of the limit:
%the
%limiting behaviour is much better understood:
Theorem~\ref{thm:AsymVol}
states that the limit holds also at the critical points
$x^*$
and
$\widetilde{x}^*$,
which are excluded from the analysis in~\cite{FJ, FJM_note}, 
and the convergence on the set
$\bbR\setminus\{x^*,\widetilde{x}^*\}$
is shown to be uniform on compact subsets.

In the class of affine stochastic volatility models,
the formula for the limiting implied volatility for
a fixed strike proved in Tehranchi~\cite{Tehr09} (see also~\cite{Lewis00} 
in the case of the Heston model)
also follows from~\eqref{eq:FirstEq}, since in Theorem~\ref{thm:AsymVol}
the convergence is uniform on a 
compact neighbourhood of the origin.
In~\cite{GaoLee}, the authors give 
various representations
%a representation 
for the implied volatility, %for various asymptotics, 
including in the large-maturity regime,
based on an assumed asymptotic behaviour of certain European derivatives
in the underlying model, which is not specified.
This representation is not fully explicit in terms of the model parameters 
and it is therefore unclear how to apply it directly to the class 
of affine stochastic volatility models.

Contribution of the paper is twofold. %fall naturally in two groups.
First we study the properties
of the limiting cumulant generating function
$h$
of the affine stochastic volatility models.
Results in
Lemma~\ref{lem:h_not_zero}, Theorem~\ref{thm:LCGFh}
and Corollary~\ref{cor:LCGFh} give new properties of 
the function
$h$,
which are crucial for the understanding of the large
deviation behaviour of the model.
Second, the problem
of understanding the limiting behaviour of option
prices and the corresponding implied volatilities
using the large deviation principle 
is tackled.
The uniform limit in $x$
(on all compact subsets of $\bbR$)
of vanilla option prices is given in Theorem~\ref{thm:OptionPrices}
for non-degenerate affine stochastic volatility models
and %in Proposition~\ref{prop:BS}
exponential L\'evy models
(i.e. degenerate affine stochastic volatility models).
As mentioned above
Theorem~\ref{thm:AsymVol}
deals with the limiting implied volatility
smiles in these classes of models.

Besides giving a formula, which relates model parameters
and the limiting implied volatility smile, 
these theoretical results have the following practical consequences:\\
%for %the understanding of 
%the behaviour of option prices and implied volatilities at large maturities:\\
\noindent (1) in the large-maturity regime studied in this paper, the jumps in the model
influence the limiting implied volatility
smile as maturity tends to infinity (see examples in~\ref{subsec:PlotsOfSmiles}); \\
\noindent (2) for every affine stochastic volatility model there exists an exponential L\'evy model
such that the smiles of the two models in the limit coincide. In other words the stochasticity of
volatility does not (in the affine class) enlarge the family of possible limiting implied volatility
smiles (see Section~\ref{sec:App} for details).%\\
%\noindent (3) The converse question is interesting (and open): does there
%exist an affine stochastic volatility model with a given limiting exponential
%L\'evy smile?

The starting point of the analysis of the large deviation 
behaviour of an affine stochastic volatility 
process
$(X,V)$
in the present paper is %are Lemma~\ref{Lem:existence} and 
Theorem~\ref{Thm:wm_convergence},
taken from~\cite[Theorem~3.4]{K2008a}.
This result
describes certain properties of
the limiting cumulant generating function
$h$,
which are however 
%However the results in~\cite{K2008a}
insufficient to understand the essential smoothness of
$h$
required in 
establishing the large deviation
principle of 
$(X_t/t)_{t\geq1}$.
The main contribution of this paper in the area of 
affine processes is 
Theorem~\ref{thm:LCGFh},
which identifies sufficient conditions for the process
$(X,V)$
that imply essential smoothness of the function 
$h$.
The conditions in
Theorem~\ref{thm:LCGFh}
are easy to apply to the models of interest (see e.g. Section~\ref{subsec:Examples}).
Its proof goes beyond the analysis 
in~\cite{K2008a}
as %it is no longer sufficient to 
one is forced to study 
the special L\'evy-Khintchine form of
the characteristics
%$F,R$
of the process,
since their general convexity properties 
no longer suffice to establish the required behaviour of the limit.

%It is shown in~\cite{K2008a}
%that the function
%$h$
%can be obtained from the functions $F$ and $R$ without
%the explicit knowledge
%of $\phi$ and $\psi$
%(see Section~\ref{sec:Affine} for definition of
%$\psi,\phi$).

%which are needed in Section~\ref{sec:OptionPrices}
%but insufficient 

The rest of the paper is organized as follows.
In Section 2 we define the class of affine stochastic volatility processes and recall some of 
their properties.  In Section 3 we review briefly basic concepts in the theory of large deviations 
and state the G\"artner-Ellis theorem. Section 4 establishes the large deviations principle for the
scaled log-stock of an affine stochastic volatility model as maturity tends to infinity.
Sections 5 and 6 respectively translate this result into option price and implied volatility 
asymptotics. Numerical examples are given at the end of Section~6.

%========================================================================================
\section{Affine stochastic volatility models with jumps}
\label{sec:Affine}
Consider a stochastic model 
for a risky security 
$S=(S_t)_{t \geq 0}$
given by
\begin{eqnarray}
\label{eq:MainStockModel}
S_t & = &
\exp((r-d)t + X_t),\quad t \geq 0\;,
\end{eqnarray}
where the interest rate $r$ 
and the dividend yield $d$
are non-negative and constant and 
the log-price process $X=(X_t)_{t\geq0}$ 
%is the discounted 
%the log-price process 
starts at
$X_0\in \bbR$. 
Since the dynamics of 
$S$
is given under a risk-neutral measure,
the forward price process is 
$(\exp(X_t))_{t\geq0}$. 
We assume throughout the paper without loss of generality 
that $S$ is a forward price process (i.e. $r = d$).
Denote by $V=(V_t)_{t \geq 0}$ a process, starting at a constant
level 
$V_0 > 0$. 
The process
$V$
can be interpreted as the instantaneous variance process
of $X$ but may also control the arrival rate of jumps
of $X$. 
We make the following assumptions 
on the process
$(X,V)$
throughout the paper. 
\begin{description} 
\item[A1] $(X,V)$ is a stochastically continuous,
time-homogeneous Markov process with state-space $D = \bbR \times \Rplus$, 
where
$\Rplus:=[0,\infty)$.
\label{Eq:main_ass_sub1}
\item[A2] The cumulant generating function $\Phi_t(u,w)$ of
$(X_t,V_t)$ is of a particular affine form: there
exist functions $\phi(t,u,w)$ and $\psi(t,u,w)$ such that
\begin{eqnarray*}\label{Eq:affine}
\Phi_t(u,w) & := & \log \EE\left[\left.\exp(u X_t + w
V_t)\right|X_0,V_0\right]\\
& = & \phi(t,u,w) + V_0 \psi(t,u,w) + X_0 u
\end{eqnarray*}
for all $(t,u,w) \in \Rplus \times \bbC^2$, where the expectation
exists.\label{Eq:main_ass_sub2}
\end{description}

\begin{remarks}
\noindent (i) A1 and A2 make $(X,V)$ into an affine process in the sense
of~\cite{Schachermayer}.

%\smallskip
%\noindent (ii) By convention, the logarithm in A2 above denotes the principal branch of
%the complex logarithm.

%\smallskip
\noindent (ii) %The term $X_0 u$ in the cumulant generating function $\Phi_t(u,w)$
%corresponds to a reasonable homogeneity assumption on the model:
A1 and A2 imply a homogeneity property of $(X,V)$:
if the starting value $(X_0,V_0)$ is shifted by
$(x,0)\in D$, 
the law of the random variable 
$(X_t,V_t)$ is shifted by the vector $(x,0)$ for any $t \ge 0$.

%\smallskip
\noindent (iii) 
Assumptions A1 and A2 imply that the variance process
$V$ is a one-dimensional strong Markov process in its own right. 

%\smallskip
\noindent (iv) The law of iterated expectations applied to
$\Phi_t(u,w)$ yields the flow-equations for $\phi$ and $\psi$ 
(see 
\cite[Eq.~(3.8)--(3.9)]{Schachermayer}):
\begin{equation}\label{Eq:flow_prop}
\begin{split}
\phi(t+s,u,w) &= \phi(t,u,w) + \phi(s,u,\psi(t,u,w)),\\
\psi(t+s,u,w) &= \psi(s,u,\psi(t,u,w)),
\end{split}
\end{equation}
for all $t, s \ge 0$. 

%\smallskip
\noindent (v) It is
shown
in~\cite[Thm.~2.1]{K2008a}
(see
also~\cite{Schachermayer})
that if
$|\phi(\tau,u,\eta)|, |\psi(\tau,u,\eta)|< \infty$ 
for $(\tau,u,\eta) \in (0,\infty) \times \bbC^2$, then 
for all $t \in [0,\tau)$ and $w \in \bbC$ such that
$\Re\,w \le \Re\,\eta$,
the functions $\phi$ and $\psi$ satisfy
the generalized Riccati equations
\begin{subequations}\label{Eq:gen_Riccati}
\begin{align}
\partial_t \phi(t,u,w) &= F(u,\psi(t,u,w)), \quad \phi(0,u,w) = 0, \label{Eq:gen_Riccati_sub1}\\
\partial_t \psi(t,u,w) &= R(u,\psi(t,u,w)), \quad \psi(0,u,w) = w, \label{Eq:gen_Riccati_sub2}
\end{align}
\end{subequations}
where
\begin{equation} 
\label{eq:F_and_R}
F(u,w) :=
\left.\pd{}{t}\phi(t,u,w)\right|_{t = 0+}, \qquad R(u,w) :=
\left.\pd{}{t}\psi(t,u,w)\right|_{t = 0+}.
\end{equation}
Furthermore for all $t\in[0,\tau]$ 
we have
$|\phi(t,u,w)|,|\psi(t,u,w)| < \infty$.

%\smallskip
\noindent (vi) If 
$(X,V)$
is a diffusion process,
then ODEs~\eqref{Eq:gen_Riccati}
become classical Riccati. Note also that~\eqref{Eq:gen_Riccati}
follows from the flow equations~\eqref{Eq:flow_prop}
by differentiation with respect to 
$s$.

%\smallskip
\noindent (vii) $\phi$ and $\psi$ can for small 
$t$
be expressed implicitly in terms of 
$F$
and
$R$
as
\begin{equation*}
\phi(t,u,w) = \int_0^t{F(u,\psi(s,u,w))\;\dd s}\qquad\text{and}\qquad
    \int_w^{\psi(t,u,w)}{\frac{\dd\eta}{R(u,\eta)}} = t\;.
\end{equation*}
\end{remarks}

The functions 
$F$ and $R$, defined in~\eqref{eq:F_and_R}, must be of
L\'evy-Khintchine form
(see~\cite{Schachermayer}). In other words
\begin{subequations}\label{Eq:FR_form}
\begin{align}
\label{Eq:F_form}
F(u,w) &= \left\langle \frac{a}{2}(u,w)', (u,w)' \right\rangle +
\left\langle b, (u,w)' \right\rangle -c \\
&+\int_{D \setminus \{0\}} {\left(\E^{\langle\xi,(u,w)'\rangle} - 1 - \langle
\omega_F(\xi), (u,w)' \rangle \right)\,m(\dd\xi)} \notag,\\
R(u,w) &= \left\langle \frac{\alpha}{2}(u,w)', (u,w)' \right\rangle +
\left\langle \beta, (u,w)' \right\rangle -\gamma \\
&+\int_{D \setminus \{0\}}{\left(\E^{\langle\xi,(u,w)'\rangle} - 1 - \langle
\omega_R(\xi), (u,w)' \rangle \right)\,\mu(\dd\xi)} \notag,
\end{align}
\end{subequations}
where $D = \bbR \times \Rplus$,
$\langle\cdot,\cdot\rangle$
is the inner product on 
$\bbR^2$,
$(u,w)'$
denotes transposition,
$\omega_F$, $\omega_R$ are
suitable truncation functions, which we fix by defining
\begin{equation*}
\omega_F(\xi) = \left(\begin{array}{@{}c@{}}\frac{\xi_1}{1+\xi_1^2}\\0
\end{array}\right) \qquad \text{and} \qquad
\omega_R(\xi) =
\left(\begin{array}{@{}c@{}}\frac{\xi_1}{1+\xi_1^2}\\\frac{\xi_2}{1+\xi_2^2}
\end{array}\right),\quad\text{where} \quad
\xi =
\left(\begin{array}{@{}c@{}}\xi_1\\ \xi_2
\end{array}\right),
\end{equation*}
and the parameters $(a,\alpha,b,\beta,m,\mu)$
satisfy the following admissibility conditions:
\begin{itemize}
\item $a, \alpha$ are positive semi-definite 
$2\times2$-matrices with 
$a_{12} = a_{21} = a_{22} = 0$; 
\item $b \in D$, $\beta \in \bbR^2$ and $c,\gamma\in\Rplus$;  
%and
% \item $c, \gamma \in \Rplus$ 
\item $m$ and $\mu$ are L\'evy measures on $D$ and $\int_{D \setminus
\{0\}}{\left((\xi_1^2 + \xi_2) \wedge 1\right)\,m(\dd \xi)} < \infty$.
\end{itemize}

Assumptions A1 and A2, the
generalized Riccati equations and the L\'evy-Khintchine
decomposition~\eqref{Eq:FR_form} 
lead to the following interpretation of $F$ and $R$: 
$F$ characterizes the state-independent dynamics of the 
process $(X,V)$ 
while 
$R$
characterizes its state-dependent dynamics. 
The instantaneous characteristics of
the Markov process 
$(X,V)$
are given as follows:
$a + V\alpha $ the instantaneous covariance matrix, 
$b + V \beta$ the instantaneous drift, 
$m(\dd\xi) + V \mu(\dd\xi)$
the instantaneous arrival rate of jumps with jump heights in 
$\dd\xi$ and $c + \gamma V$ the instantaneous
killing rate.

The function $\chi$
defined below  
plays a key role in the characterisation of the 
martingale property of the process
$S=\exp(X)$.

\begin{definition}
\label{def:chi}
For each $u \in \bbR$ such that $R(u,0) < \infty$, define $\chi(u)$ as
\[\chi(u) := \left.\partial_2 R(u,w)\right|_{w = 0} := \left.\pd{R}{w}(u,w)\right|_{w = 0}\;.\]
\end{definition}

\begin{remarks}
\noindent (i) The condition $R(u,0) < \infty$
implies 
that, 
for some
$\delta>0$
the function $w\mapsto R(u,w)$
is 
convex on
$(-\delta,0]$ 
and differentiable 
on 
$(-\delta,0)$,
since the process
$V$
does not have negative jumps.
Therefore
$\chi(u)$ 
is a well-defined, possibly equal to 
$+\infty$, 
convex function 
given by the limit 
of 
$\partial_2 R(u,w)$
as $w \uparrow 0$.
It can be expressed explicitly
as
\[\chi(u) = \alpha_{12} u + \beta_1 + \int_{D \setminus \{0\}}
{\xi_2 \left(\E^{u\xi_1} - \frac{1}{1 + \xi_2^2}\right)\,\mu(\dd\xi)},\qquad
\text{where}\quad \xi'=(\xi_1,\xi_2)\;.\]

\noindent (ii) The sufficient and necessary condition for 
$S$
to be conservative and a martingale, in terms of 
$R,F$
and
$\chi$,
is given in~\cite[Thm.~2.5]{K2008a}. A simple sufficient
condition for these properties reads (see~\cite[Cor. 2.7]{K2008a}):
\begin{enumerate}
\item[$\bullet$] if $F(0,0) = R(0,0) = 0$ and $\chi(0) < \infty$ 
then $S=\exp(X)$
is conservative. 
\item[$\bullet$] if $S$ is conservative, 
$F(1,0) = R(1,0) = 0$ and $\chi(1) < \infty$, 
then $S=\exp(X)$ is a martingale.
\end{enumerate}
\end{remarks}

Since 
$S$ 
serves as a forward price process under a risk-neutral
measure 
$\PP$
in an arbitrage-free asset pricing model,
it has to be conservative and a martingale
and hence we assume:
\begin{description}
\item[A3] $F(0,0) = R(0,0) = F(1,0) = R(1,0) = 0$ and
$\chi(0)+\chi(1)<\infty$. 
\end{description}
In particular
$F(0,0) = R(0,0) = 0$
in assumption A3
means that the instantaneous killing rates
$c$
and
$\gamma$
in~\eqref{Eq:FR_form} 
are zero and the condition
$F(1,0) = R(1,0) = 0$ is closely related to the functions
$\psi(\cdot,1,0)$ and $\phi(\cdot,1,0)$ being identically
equal to zero (see the generalized Riccati equations 
in~\eqref{Eq:gen_Riccati}),
which implies the martingale property of $S=\exp(X)$.
The following non-degeneracy assumption will guarantee the
stochasticity of volatility of the process
$X$.

\begin{description}
\item[A4] There exists some $u \in \bbR$, such that
$R(u,0) \neq 0$.
\end{description}

\begin{definition}\label{Def:ASVM}
The process $(X,V)$ is a 
\textit{non-degenerate} 
(resp.~\textit{degenerate})
\textit{affine stochastic volatility process} 
if it satisfies assumptions A1~--~A4
(resp.
A1~--~A3 and does not satisfy A4)
and 
$S=\E^X$ 
is the corresponding \textit{affine stochastic volatility model}.
\end{definition}

\begin{remark} 
Assumption A4 excludes the degenerate case  where the distribution 
of $X$ does not depend on the volatility state $V_0$. Indeed, if 
A4 is not satisfied, i.e. 
$R(\cdot,0)\equiv 0$, then~\eqref{Eq:gen_Riccati}
implies that 
$\psi(t,u,0)=0$
and
$\phi(t,u,0)=tF(u,0)$
for all 
$(t,u)\in\Rplus\times\bbC$
where the expectation in assumption~A2
exists.
Hence
if A4 does not hold, then~A2, ~\eqref{Eq:F_form}
and the characterisation theorem for regular affine
processes~\cite[Theorem~2.7]{Schachermayer}
imply that %the affine stochastic volatility model
$S=\E^X$
is an exponential L\'evy model.
In particular the class of affine stochastic volatility models includes
the Black-Scholes model as a degenerate case.
\end{remark}

The following proposition describes certain properties 
of 
$F$
and
$R$
that will play a crucial 
role in Section~\ref{subsec:Non_degen}.
%what follows (see proof of Theorem~\ref{thm:LCGFh}). 

\begin{proposition}
\label{prop:Properties_F_and_R} 
Let $(X,V)$ be a non-degenerate affine stochastic volatility model and
let the sets
$\cD_F=\left\{(u,w)\in\bbR^2:F(u,w)<\infty\right\}$
and
$\cD_R=\left\{(u,w)\in\bbR^2:R(u,w)<\infty\right\}$
be the effective domains of the functions
$F$
and
$R$
respectively. Then the following holds:
\begin{enumerate}[(A)]
\item $F$ and $R$ are lower semicontinuous convex functions, which are continuously differentiable in 
the interiors 
$\cD_F^\circ$
and 
$\cD_R^\circ$
(in
$\bbR^2$),
and their effective domains 
$\cD_F$
and 
$\cD_R$ are also convex; 
%subsets in 
%$\bbR^2$;
%\item the map $u \mapsto R(u,0)$ is strictly convex;
\item $F$ and $R$ are either affine or strictly convex functions 
when restricted to one-dimensional affine subspaces of 
$\bbR^2$. 
%\item for $(u,w) \in \cD_F$ the function $\eta \mapsto F(u,\eta)$ is increasing and finite for all 
%$\eta \le w$. 
\end{enumerate}
\end{proposition}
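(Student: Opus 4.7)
My plan is to work directly from the L\'evy--Khintchine representation~\eqref{Eq:FR_form}; since the arguments for $F$ and $R$ are identical up to relabelling $(a,b,c,m)\leftrightarrow(\alpha,\beta,\gamma,\mu)$ and $\omega_F\leftrightarrow\omega_R$, I describe only the case of $F$.

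For part~(A), I would establish convexity summand by summand: the quadratic term is convex because $a$ is positive semi-definite, the term $\langle b,(u,w)'\rangle-c$ is affine, and for each fixed $\xi$ the integrand $\E^{\langle\xi,(u,w)'\rangle}-1-\langle \omega_F(\xi),(u,w)'\rangle$ is the sum of a convex exponential and an affine function, hence convex; integration against the positive measure $m(\dd\xi)$ preserves convexity. Convexity of $\cD_F$ is then the usual sublevel-set observation, and lower semicontinuity follows from Fatou's lemma after the standard split at $|\xi|=1$: near the origin the pointwise L\'evy--Khintchine bounds combined with the admissibility integrability $\int ((\xi_1^2+\xi_2)\wedge 1)\,m(\dd\xi)<\infty$ control the integrand from below, while on $\{|\xi|>1\}$ the exponential term is nonnegative up to a bounded truncation contribution. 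Continuous differentiability on $\cD_F^\circ$ is standard for Laplace-type transforms: at each interior point one finds a neighbourhood in which $\xi_i\,\E^{\langle\xi,(u,w)'\rangle}$ is $m$-integrable, so dominated convergence permits differentiation under the integral sign.

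For part~(B), I would fix an affine line $\ell=\{p+tv:t\in\bbR\}$ with $p=(u_0,w_0)\in\bbR^2$ and direction $v=(v_1,v_2)\in\bbR^2\setminus\{0\}$, and study $g(t):=F(p+tv)$ on the open interval $I:=\{t:p+tv\in\cD_F^\circ\}$. Differentiating twice, as justified above, yields the key identity
\begin{equation*}
g''(t) \;=\; \langle av,v\rangle + \int_{D\setminus\{0\}}\langle \xi,v\rangle^2\,\E^{\langle \xi,p+tv\rangle}\,m(\dd\xi),\qquad t\in I.
\end{equation*}
If $g''(t_0)=0$ at some $t_0\in I$, then both $\langle av,v\rangle=0$ and $\langle \xi,v\rangle=0$ for $m$-almost every $\xi$; neither condition depends on $t_0$, so $g''\equiv 0$ on $I$. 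Under these two conditions the L\'evy--Khintchine integrand at $p+tv$ reduces to $\E^{\langle\xi,p\rangle}-1-\langle \omega_F(\xi),p+tv\rangle$, which is $m$-integrable (since it was at $t=0$) and linear in $t$, so together with $\langle av,v\rangle=0$ the affine formula for $g$ extends to the whole real line and $g$ is globally affine. Otherwise $g''>0$ everywhere on $I$, giving strict convexity of $g$ on $\cD_F\cap\ell$.

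The delicate step, and the main obstacle, is precisely this rigidity claim: that the vanishing of $g''$ at a single interior point propagates to the entire line. It is what forces one to unpack the L\'evy--Khintchine structure rather than argue abstractly from convexity, since a generic convex function on $\bbR$ may well be affine on one subinterval and strictly convex on another. Every other assertion in the proposition then reduces to standard pointwise bounds on the L\'evy--Khintchine integrand.
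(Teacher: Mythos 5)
Your proof is correct and reaches the same conclusion as the paper's, but by a more explicit route. The paper's proof is quite terse and leans on the probabilistic interpretation: it observes that the L\'evy--Khintchine representation makes $F$ and $R$ cumulant generating functions of $\bbR^2$-valued infinitely divisible random vectors, so convexity is immediate from H\"older, and part~(B) follows because restricting a CGF to an affine line gives the CGF of a one-dimensional random variable, which is either a.s.\ constant (affine CGF) or not (strictly convex CGF, second derivative strictly positive). You instead unpack the L\'evy--Khintchine integrand directly: convexity summand by summand, lower semicontinuity via Fatou, and for (B) you differentiate $g(t)=F(p+tv)$ twice and observe that $g''(t_0)=0$ forces $\langle av,v\rangle=0$ and $\langle\xi,v\rangle=0$ $m$-a.e., conditions independent of $t_0$, hence $g''\equiv 0$. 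That rigidity computation is exactly the concrete content of the paper's ``constant a.s.\ or not'' dichotomy, so the two proofs are morally the same; yours is more elementary and self-contained, the paper's is shorter and reuses standard facts about CGFs (and additionally gets analyticity for free, which is stronger than the stated $C^1$ regularity). One small caution: your final remark that under $g''\equiv 0$ the affine formula for $g$ ``extends to the whole real line'' requires $\langle\omega_F(\xi),v\rangle$ to be $m$-integrable on its own, which does not follow from the admissibility conditions alone; this side claim is not needed for the proposition (which only asks for affine-versus-strictly-convex on $\cD_F\cap\ell$), so you should simply drop it.
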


%The last property is not shown in \cite{K2008a} but follows from the L\'evy-Khintchine
%decomposition and the admissibility conditions on the parameters of $F$.
\begin{proof}
The L\'evy-Khintchine representation for 
$F$
and
$R$
in~\eqref{Eq:FR_form}
implies that they are cumulant generating functions of some
(infinitely divisible) random vectors taking values
in $\bbR^2$. 
H\"older's inequality yields
that 
$F$
and
$R$
are convex. 
The dominated convergence theorem and 
the representation in~\eqref{Eq:FR_form}
implies that
$F$
and
$R$
are analytic in 
$\cD_F^\circ$
and 
$\cD_R^\circ$
respectively. 
Fatou's lemma implies that the functions
$F$
and
$R$
are lower semicontinuous.
Since  
$F$
and
$R$
are cumulant generating functions,
the second derivative of their restriction to an affine subspace
is either identically zero or strictly positive everywhere
(each affine subspace in 
$\bbR^2$ corresponds to a random variable which
takes values in $\bbR$ and may or may not be constant almost surely).
This concludes the proof.
%Dominated convergence theorem and representation of 
%$F$
%in~\eqref{Eq:FR_form}
%imply that the partial derivative
%$\partial_2F(u,\eta)$
%is non-negative and finite 
%for all
%$\eta\leq w$
%and hence 
%(C) follows.
%If 
%$u\mapsto R(u,0)$
%is not strictly convex, then it must be affine by (C). Assumption 
%A3 implies 
%$R(u,0)=0$ for all
%$u\in\bbR$
%which contradics A4. Hence (B) follows.
\end{proof}

%\begin{remark}
%We we also use the following properties of $F, R$ from \cite[Lemma~2.2, Corollary~2.7 and Lemma~2.9]{K2008a} 
%\end{remark}

\subsection{SDE representation of affine stochastic volatility processes}
In order to define an affine stochastic volatility model
one needs to choose admissible parameters 
$(a, \alpha, b, \beta, m, \mu)$
such that the corresponding process
$(X,V)$,
which 
exists by~\cite[Thm.~2.7]{Schachermayer},
satisfies 
assumptions A1 -- A3
(note that 
$c=\gamma=0$ 
by A3 and will henceforth be
ignored). This procedure yields a semigroup,
and hence the law, of the Markov process
$(X,V)$ which is in principle sufficient for
option pricing. However path-wise descriptions
of the pricing models in financial markets are 
widely used as they add to the intuitive understanding
of the properties of the model.
In the rest of this section 
we briefly describe a path-wise construction of the process
$(X,V)$,
given in~\cite{Dawson2006},
and relate it to the most popular affine stochastic
volatility models used in derivatives pricing.

Assume that 
the parameters 
$(a, \alpha, b, \beta, m, \mu)$ 
are admissible 
and suppose in addition that the tails (i.e. the large jumps) of
$m$
and
$\mu$
satisfy:
\begin{eqnarray}
\label{eq:BigJumpCond}
\int_{\|\xi\| > 1}{\|\xi\|\,m(\dd\xi)} < \infty\qquad\text{and}\qquad \int_{\|\xi\| >
1}{\|\xi\|\,\mu(\dd\xi)} <
\infty%,\quad\text{where}\quad\|\xi\|=\langle \xi,\xi\rangle.
\end{eqnarray}
where
$\|\xi\|=\langle \xi,\xi\rangle.$
Let
\begin{equation}
\label{eq:Reparamterisation}
\begin{split}
\widetilde{b}_1 & = b_1 +\int_{D \setminus \{0\}}\frac{\xi_1^3}{1+\xi_1^2}\,m(\dd \xi),\quad
\widetilde{b}_2=b_2,\\
%\quad\text{and}\quad
\widetilde{\beta}_i & = \beta_i +\int_{D \setminus \{0\}}\frac{\xi_i^3}{1+\xi_i^2}\,\mu(\dd
\xi)\quad\text{for}\quad i=1,2.
\end{split}
\end{equation}
Note that
the integrals in~\eqref{eq:Reparamterisation} 
are finite by~\eqref{eq:BigJumpCond}
and the parameters 
$(a, \alpha, \widetilde{b}, \widetilde{\beta}, m, \mu)$ 
are admissible with appropriate truncation functions.

Let  
$(\Omega, \cF, (\cF_t)_{t\geq0}, \PP)$ be a filtered probability space equipped with
\begin{enumerate}
\item[$\bullet$] a three-dimensional standard Brownian motion $(B^0, B^1, B^2)$,
\item[$\bullet$] a Poisson random measure $N_0(\dd s,\dd\xi)$ on $\Rplus \times D$ with compensator
$\dd s\,m(\dd\xi)$,
\item[$\bullet$] a Poisson random measure $N_1(\dd s,\dd u,\dd\xi)$ on $\Rplus^2 \times D$ with
compensator $\dd s\,\dd u\,\mu(\dd\xi)$,
\end{enumerate}
where as usual 
$D = \bbR \times \Rplus$
denotes the state-space of the model. 
Let
$$\widetilde{N}_{0}(\dd s,\dd\xi)=N_0(\dd s,\dd\xi)-\dd s\,m(\dd\xi)\quad\text{and}\quad
\widetilde{N}_{1}(\dd s,\dd u,\dd\xi)=N_1(\dd s,\dd u, \dd\xi)-\dd s\,\dd u\,\mu(\dd\xi)$$ 
be the compensated Poisson random measures and let 
$\sigma$ 
be a 
$2 \times 2$-matrix such
that $\sigma \sigma^\top = \alpha$. 
Theorem~6.2 in~\cite{Dawson2006}
implies that the system of SDEs
\begin{eqnarray}
\label{eq:SDE_X}
\dd X_t & = & \left(\widetilde{b}_1 + V_t \widetilde{\beta}_1\right)\dd t + \sqrt{a_{11}} \dd B_t^0 +
\sqrt{V_t}\sigma_{11}\dd B_t^1 + \sqrt{V_t}\sigma_{12}\dd B_t^2 + \\
& & \int_{D \setminus \{0\}}  \xi_1 \widetilde{N}_0(\dd t,\dd\xi) + \int_{D \setminus \{0\}}   \int_0^{V_{t-}}\xi_1 
\widetilde{N}_1(\dd t,\dd u,\dd\xi),\nonumber \\
\label{eq:SDE_V}
\dd V_t & = & \left(\widetilde{b}_2 + V_t \widetilde{\beta}_2\right)\dd t + \sqrt{V_t}\sigma_{21}\dd B_t^1 +
\sqrt{V_t}\sigma_{22}\dd B_t^2 + \\
& &  \int_{D \setminus \{0\}} \xi_2  N_0(\dd t,\dd\xi) + \int_{D \setminus \{0\}}  \int_0^{V_{t-}} \xi_2 \widetilde{N}_1(\dd
t,\dd u,\dd\xi),\nonumber
\end{eqnarray}
with initial condition 
$(X_0,V_0)\in \bbR\times(0,\infty)$,
has a unique strong solution 
$(X,V)$
that is an affine Markov process with admissible parameters
$(a, \alpha, b, \beta, m, \mu)$.
\begin{remarks}
\noindent (i) The change of parameters 
$b$
and
$\beta$
introduced in~\eqref{eq:Reparamterisation}
is inessential. Its function is to establish the
notational compatibility with~\cite{Dawson2006}.

\smallskip
\noindent (ii) The integrals 
in~\eqref{eq:SDE_X}--\eqref{eq:SDE_V}
against
$\widetilde{N}_1$
are taken over a random set 
whose 
$\dd s\,\dd u\,\mu(\dd\xi)$-volume 
is proportional to
$V_{t-}$. 
This, together with the structure of the Poisson random measure
$N_1$,
reinforces the intuition that the jumps 
of the process
$(X,V)$
that
correspond to the integral
term with respect to 
$\widetilde{N}_1$
have random intensity which is proportional to
$V$.
\end{remarks}

\subsection{Examples of affine stochastic volatility models}
\label{subsec:Examples}
We now describe some of the affine stochastic volatility models 
that are of interest in the financial markets and can be obtained
as solutions of the
special cases of SDE~\eqref{eq:SDE_X}--\eqref{eq:SDE_V}.

\subsubsection{Heston model}
The log-price $X$ and the stochastic 
variance process $V$ are given under the
risk-neutral measure by the SDE
\begin{align*}
\dd X_t &= -\frac{V_t}{2}\,\dd t + \sqrt{V_t}\,\dd W^1_t,\\
\dd V_t &= -\lambda(V_t - \theta)\,\dd t + \zeta \sqrt{V_t}\,\dd W^2_t,\,
\end{align*}
where $W^1, W^2$ are Brownian motions with correlation
parameter $\rho\in(-1,1)$, and $\zeta, \lambda, \theta > 0$
(see~\cite{Heston}). 
The affine characteristics of the model are
%form the model can be expressed as
\begin{subequations}\label{Eq:FR_Heston}
\begin{align}
F(u,w) &= \lambda \theta w,\\
R(u,w) &= \frac{1}{2}(u^2 - u) + \frac{\zeta^2}{2}w^2 - \lambda w
+ uw \rho \zeta.
\end{align}
\end{subequations}
It is easily seen that $\chi$ is given by 
\begin{equation}
\label{eq:Heston_chi}
\chi(u) = \rho \zeta u - \lambda
\end{equation}
and it is trivial to check that A1 -- A4 are satisfied.

\subsubsection{Heston model with state-independent jumps}
\label{sec:Heston_Jumps}
Let 
$J$ 
be a pure-jump L\'evy process
independent of the correlated Brownian motions
$W^1$
and
$W^2$.
The Heston-with-jumps model is defined by the SDEs
\begin{align*}
\dd X_t &= \left(\delta - \frac{V_t}{2}\right)\,\dd t + \sqrt{V_t}\,\dd W^1_t + \dd
J_t,\\
\dd V_t &= -\lambda(V_t - \theta)\,\dd t + \zeta\sqrt{V_t}\,\dd W^2_t,
\end{align*}
where $\zeta, \lambda, \theta > 0$
and
$\delta\in\bbR$.
Assume that 
$J$
is a spectrally negative L\'evy process with characteristic exponent
$\frac{1}{t}\log \EE[\E^{uJ_t}]=\int_{(-\infty,0)}(\E^{u\xi_1}-1-u\xi_1/(1+\xi_1^2))\,\nu(\dd \xi_1)$.
%This implies that 
Since 
$J$
only jumps down, this 
captures
the generic situation in the modelling 
of equity markets. 
%To make 
%matters simpler 
Assume further that the jumps of 
$J$
are integrable (i.e. $\int_{(-\infty,0)}|\xi_1|\,\nu(\dd \xi_1)<\infty$).
In order to identify the coefficients in~\eqref{eq:SDE_X}--\eqref{eq:SDE_V},
we compensate 
$J$
so that it becomes a martingale and can hence be expressed as an integral
against the compensated Poisson random measure
$\widetilde{N}_0$.
This
implies 
$$\widetilde{b}_1=\delta+\int_{(-\infty,0)}\frac{\xi_1^3}{1+\xi_1^2}\,\nu(\dd\xi_1).$$
It is easily seen that 
$a=0$,
$\alpha_{11}=1$, $\alpha_{12}=\rho\zeta$, $\alpha_{22}=\zeta^2$,
$\mu\equiv0$
and
$m(\dd \xi)=(\nu\otimes\delta_0)(\dd\xi)$,
where
$\delta_0$
is the Dirac delta measure.
Therefore 
$b_1=\delta$,
$b_2=\lambda\theta$,
$\beta_1=-1/2$
and
$\beta_2=-\lambda$.
The martingale condition ($F(1,0)=0$)
implies 
$\delta= -\int_{(-\infty,0)}\left(\E^{\xi_1} - 1-\xi_1/(1+\xi_1^2)\right)\,\nu(\dd \xi_1)$
and the affine form of
the model is given by
\begin{subequations}\label{Eq:HEston_Jumps}
\begin{align}
\label{eq:F_HEston_Jumps}
F(u,w) &= \lambda \theta w + \widetilde{\kappa}(u),\\
\label{eq:R_HEston_Jumps}
R(u,w) &= \frac{1}{2}(u^2 - u) + \frac{\zeta^2}{2}w^2 - \lambda w
+ uw \rho \zeta,
\end{align}
\end{subequations}
where $\widetilde{\kappa}(u)$ is the compensated cumulant generating
function of the jump part, i.e.
\begin{equation}
\label{eq:Kappa_Def}
\widetilde{\kappa}(u) =
\int_{(-\infty,0)}\left(\E^{\xi_1u} - 1 - u \left(\E^{\xi_1} -
1\right)\right)\,\nu(\dd\xi_1).
\end{equation}

\subsubsection{A model of Bates with state-dependent jumps}
\label{subsec:Bates}
We consider the model given by
\begin{align*}
\dd X_t &= -\left(\frac{1}{2}+\delta \right)V_t\,\dd t + \sqrt{V_t}\,\dd W^1_t +
\int_{\bbR\setminus\{0\}}{\xi_1\,\widetilde{N}(V_t,\dd t,\dd \xi_1)},\\
\dd V_t &= -\lambda(V_t - \theta)\,\dd t + \zeta\sqrt{V_t}\,\dd W^2_t,
\end{align*}
where as before $\lambda, \theta, \zeta > 0$,
$\delta\in\bbR$
and the Brownian
motions $W^1$
and
$W^2$
are correlated with correlation $\rho\in(-1,1)$. 
The jump component
is given by $\widetilde{N}(V_t,\dd t,\dd \xi_1) = N(V_t,\dd t,\dd \xi_1) -
n(V_t,\dd t,\dd \xi_1)$, where $N(V_t,\dd t,\dd \xi_1)$ is a Poisson random measure
independent of 
$W^1$
and
$W^2$
with intensity measure $n(V_t,\dd t,\dd \xi_1)$ of the
\emph{state-dependent} form $V_t \nu(\dd \xi_1) \dd t$. Here 
$\nu(\dd \xi_1)$ 
denotes
a L{\'e}vy measure on 
$\bbR\setminus\{0\}$.
%given in \eqref{Eq:FR_form}. 
A model of this kind has been proposed in~\cite{Bates} 
to explain the time-variation of
jump-risk implicit in observed option prices. 
%\cite{Bates} 
%also proposes a second variance factor, which we omit in this
%example, in order to remain in the scope of
%Definition~\ref{Def:ASVM}. It would however not be difficult to
%extend our approach to the two-factor Bates model, since the two
%proposed variance-factors are mutually independent, causing the
%corresponding generalized Riccati equations to decouple. 

As in Section~\ref{sec:Heston_Jumps}, 
we assume that 
the support of
$\nu(\dd \xi_1)$ 
is contained in 
$(-\infty,0)$
and that 
the inequality
$\int_{(-\infty,0)}|\xi_1|\,\nu(\dd \xi_1)<\infty$
is satisfied.
We can identify the parameters in~\eqref{eq:SDE_X}--\eqref{eq:SDE_V}
as
$a=0$,
$\alpha_{11}=1$, $\alpha_{12}=\rho\zeta$, $\alpha_{22}=\zeta^2$,
$\widetilde{\beta}_1=-1/2-\delta$,
$\widetilde{\beta}_2=-\lambda$,
$\widetilde{b}_1=0$,
$\widetilde{b}_2=b_2= \lambda\theta$,
$m\equiv0$
and
$\mu(\dd \xi)=(\nu\otimes\delta_0)(\dd\xi)$,
where
$\delta_0$
is the Dirac delta concentrated at
$0$.
Hence we find
$$\beta_1=-\frac{1}{2}-\delta-\int_{(-\infty,0)}\frac{\xi_1^3}{1+\xi_1^2}\,\nu(\dd\xi_1)$$
and 
$\beta_2=-\lambda$,
$b_1=0$.
The functions $F$ and $R$
for the Bates model are
\begin{subequations}\label{Eq:Bates}
\begin{align} 
\label{eq:F_Bates}
F(u,w) &= \lambda \theta w,\\
R(u,w) &= \frac{1}{2}(u^2 - u) + \frac{\zeta^2}{2}w^2 - \lambda w + uw \rho \zeta + \widetilde{\kappa}(u),
\label{eq:R_Bates}
\end{align}
\end{subequations}
where $\widetilde{\kappa}(u)=
\int_{(-\infty,0)}\left(\E^{\xi_1u} - 1 - u \left(\E^{\xi_1} -
1\right)\right)\,\nu(\dd\xi_1)$ 
%is given in~\eqref{eq:Kappa_Def}
and the martingale property
($R(1,0)=0$)
was used to determine the value of the parameter
$\delta= \int_{(-\infty,0)}\left(\E^{\xi_1} - 1-\xi_1 \right)\,\nu(\dd \xi_1)$.
It is clear that 
$\chi(u) = \rho \zeta u - \lambda$
and that A1 -- A4 are satisfied.

\subsubsection{The Barndorff-Nielsen-Shephard (BNS) model}
\label{subsec:BNS_Model}
The BNS model was introduced in~\cite{BNS} as a model for 
asset pricing. Under a risk-neutral measure, it can be defined
by the following SDE
\begin{align*}
\dd X_t &= (\delta - \frac{1}{2} V_t)\dd t + \sqrt{V_t}\,\dd W_t +
\rho\,\dd J_{\lambda t},\\
\dd V_t &= -\lambda V_t \,\dd t + \dd J_{\lambda t},
\end{align*}
where $\lambda > 0$, $\rho < 0$  and $(J_t)_{t \geq 0}$ is a
L\'evy subordinator with the L\'evy measure
$\nu$, i.e. a pure jump L\'evy process that
increases a.s. 
The cumulant 
generating function 
$\kappa(u)$
of $(J_t)_{t \geq 0}$
takes the form
\begin{eqnarray}
\label{eq:Kappa_BNS}
\kappa(u)=\int_{(0,\infty)}(\E^{u\xi_2}-1)\,\nu(\dd\xi_2).
\end{eqnarray}
To conform with~\eqref{eq:BigJumpCond}
we further assume that 
$\int_{(0,\infty)}\xi_2\,\nu(\dd \xi_2)<\infty$.
The drift $\delta$ will be determined by the martingale
condition for $S$. The time-scaling $J_{\lambda t}$ is
introduced in~\cite{BNS} to make the invariant distribution
of the variance process independent of $\lambda$. The distinctive
features of the BNS model are that the variance process has no
diffusion component, i.e. moves purely by jumps, and the
negative correlation between variance and price movements is
achieved by simultaneous jumps in $V$ and $X$. 

It follows from~\eqref{eq:SDE_X}--\eqref{eq:SDE_V}
and the SDE above that
$a=\alpha_{12}=\alpha_{22}=0$,
$\alpha_{11}=1$,
$\mu\equiv0$
and
$$
m(\dd\xi) = I_{\{\xi_1=\rho\xi_2\}} \lambda\nu(\dd\xi_2),
$$
where
$I_{\{\xi_1=\rho\xi_2\}} $
denotes the indicator function of 
the half-line $\xi_1=\rho\xi_2$
in 
$D\setminus \{0\}$.
Therefore it follows that 
$\widetilde{\beta}_1=\beta_1=-1/2$,
$\widetilde{\beta}_2=\beta_2=-\lambda$,
$\widetilde{b}_2=b_2=0$,
$\widetilde{b}_1=\delta+\lambda\int_{(0,\infty)}\rho\xi_2\,\nu(\dd\xi_2)$
and
$$
b_1=\delta+\lambda\int_{(0,\infty)}\frac{\rho\xi_2}{1+(\rho\xi_2)^2}\,\nu(\dd\xi_2).
$$
The definition of $F$
in~\eqref{Eq:FR_form}
and the martingale condition
$F(1,0)=0$
imply that we need to define
$\delta=-\lambda\kappa(\rho)$,
where
$\kappa$
is the cumulant generating function of 
$J$
given in~\eqref{eq:Kappa_BNS}.
The BNS model
is an affine stochastic volatility model with  $F$ and $R$ 
given by
\begin{subequations}\label{Eq:BNS}
\begin{align}
\label{eq:F_BNS}
F(u,w) &= \lambda \kappa(w + \rho u) - u \lambda \kappa(\rho),\\
R(u,w) &= \frac{1}{2}(u^2 - u) - \lambda w.
\label{eq:R_BNS}
\end{align}
\end{subequations}
We have
$\chi(u) = - \lambda$
and the assumptions 
A1 -- A4 are clearly satisfied.

%========================================================================================
\section{Large deviation principle and the G\"artner-Ellis theorem}
\label{sec:LargeDeviations}
%\subsection{Large deviation principle and the G\"artner-Ellis theorem} 

In this section we give a brief review of the key concepts of 
large deviations for a family of (possibly dependent) random variables
$(Z_t)_{t\geq1}$
and state a version of the G\"artner-Ellis theorem
(see Theorem~\ref{thm:GartnerEllis}) 
that will be used to obtain the asymptotic behaviour of the
option prices and implied volatilities. 
A general reference for all the concepts in this
section is~\cite[Section 2.3]{Dembo2010}.

Let %random variables 
$Z_t$
take values in 
$\bbR$
and 
recall that
$I:\bbR\to(-\infty,\infty]$
is \textit{lower semicontinuous}
if
$\{x:I\left(x\right)\leq\alpha\}$
is closed in
$\bbR$ for any
$\alpha\in\bbR$
(intuitively
for any
$x_0\in\bbR$
the values of
$I$
near
$x_0$
are either close to
$I(x_0)$
or greater than
$I(x_0)$).
A nonnegative 
lower semicontinuous
function 
$I$
is called a \textit{rate function}.
If in addition 
$\{x:I\left(x\right)\leq\alpha\}$
is compact for any
$\alpha\in\bbR$,
then 
$I$
is a \textit{good rate function}.

\begin{definition}
\label{eq:DefLDP}
The family
$(Z_t)_{t\geq1}$
satisfies the \textit{large deviation principle (LDP)}
with the \textit{rate function}
$I$
if for every Borel set
$B\subset\bbR$
we have
\begin{equation*}
%\label{eq:DefLDP}
-\inf\{I(x):x\in B^\circ\}\leq\liminf_{t\to\infty}\frac{1}{t}\log \PP\left[Z_t\in B\right]
\leq
\limsup_{t\to\infty}\frac{1}{t}\log \PP\left[Z_t\in
B\right]\leq-\inf\left\{I(x):x\in \overline B\right\},
%-\inf_{x\in B^\circ}I(x)\leq\liminf_{t\to\infty}\frac{1}{t}\log \PP\left[Z_t\in
%B\right]
%\leq
%\limsup_{t\to\infty}\frac{1}{t}\log \PP\left[Z_t\in B\right]\leq-\inf_{x\in \overline
%B}I(x),
\end{equation*}
with the convention
$\inf\emptyset =\infty$
(the interior 
$B^\circ$
and closure 
$\overline B$
%of the set 
%$B$
are relative to the topology of 
$\bbR$).
\end{definition}
An important consequence of 
Definition~\ref{eq:DefLDP}
is that if
$(Z_t)_{t\geq1}$ 
satisfies LDP \textit{and} 
$I$
is continuous on 
$\overline B$,
then 
$\lim_{t\to \infty}t^{-1}\log \PP\left[Z_t\in B\right]=
-\inf\{I(x):x\in B\}$.
\smallskip

%A family of random variables
%$(Z_t)_{t\geq1}$
%satisfies the \textit{large deviation principle (LDP)}
%with the good rate function
%$\Lambda^*$
%if for every Borel measurable set
%$B$
%in 
%$\bbR$
%the following inequalities hold
%\begin{equation*}
%-\inf\{\Lambda^*(x):x\in B^\circ\}\leq\liminf_{t\to\infty}\frac{1}{t}\log \PP\left[Z_t\in B\right]
%\leq
%%%\limsup_{t\to\infty}\frac{1}{t}\log \PP\left[Z_t\in
%B\right]\leq-\inf\left\{\Lambda^*(x):x\in \overline B\right\},
%\end{equation*}
%where 

The G\"artner-Ellis theorem (Theorem~\ref{thm:GartnerEllis}) gives sufficient 
conditions for 
$(Z_t)_{t\geq1}$
to satisfy
the LDP and in that case describes the rate function.
Let 
$\Lambda_t^Z(u):=\log \EE\left[\E^{uZ_t}\right]$
be a
cumulant generating function.
%$u\in\mathbb R$,
%be finite on some neighbourhood of 
%$0$.
Assume that for every
$u\in\bbR$
%the following limit 
%exists as an extended real number 
\begin{eqnarray}
\label{eq:0_in_Domain_Assumption}
%\label{eq:LDP_Assumption}
\Lambda(u) & := & \lim_{t\to \infty}\frac{1}{t}\Lambda_t^Z(tu)
%\qquad\text{and}\qquad%\\
%& 0  \in  \mathcal D_\Lambda^o &
%\mathcal D_\Lambda:=\{u\in\mathbb R\>:\>|\Lambda(u)|<\infty\}\quad\text{satisfies}\\
\quad\text{exists in } %as an extended real number}
%\quad
[-\infty,\infty]
\qquad\text{and}\qquad 0  \in  \cD_\Lambda^\circ,
\end{eqnarray}
where
$\mathcal D_\Lambda:=\{u\in\mathbb R\>:\>\Lambda(u)<\infty\}$
is the \textit{effective domain}
of 
$\Lambda$ 
and
$\cD_\Lambda^\circ$
is its interior in
$\bbR$.
%is the effective domain of 
%$\Lambda$.
%exists as an extended real number and that 
%$0$
%is in 
%$\mathcal D_\Lambda^o$,
%where
%$\mathcal D_\Lambda^o$
%is the interior (in 
%$\bbR$)
%of the effective domain
%$\mathcal D_\Lambda$.
%$\mathcal D_\Lambda:=\{u\in\mathbb R\>:\>|\Lambda(u)|<\infty\}$.
Since 
$\Lambda_t^Z$
is convex (by the H\"older inequality)
for every 
$t$,
the limit 
$\Lambda$
is also convex
by~\cite[Theorem~10.8]{Rockafellar}
and the set
$\mathcal D_\Lambda$
is an interval.
Since
$\Lambda(0)=0$,
convexity of
$\Lambda$
and
$0\in\cD_\Lambda^\circ$
imply
$\Lambda(u)>-\infty$
for all
$u\in\bbR$.
Furthermore the convexity implies that
$\Lambda$
is continuous on
$\mathcal D_\Lambda^\circ$.
{
The statement in~\eqref{eq:0_in_Domain_Assumption} is an important assumption of 
G\"artner-Ellis theorem
(Theorem~\ref{thm:GartnerEllis} below), which
%is
%\begin{eqnarray}
%\label{eq:0_in_Domain_Assumption}
%& 0  \in  \mathcal D_\Lambda^\circ. &
%%\mathcal D_\Lambda:=\{u\in\mathbb R\>:\>|\Lambda(u)|<\infty\}\quad\text{satisfies}\\
%\end{eqnarray}
%Note that assumption~\eqref{eq:0_in_Domain_Assumption}
in particular implies 
$\mathcal D_\Lambda^\circ \neq \emptyset$.
However the converse does not hold in general,
i.e. 
if
$0$
is a boundary point of a domain 
$\mathcal D_\Lambda$
with non-empty interior, LDP may
still hold true. 

A further property of the function 
$\Lambda:\bbR\to(-\infty,\infty]$,
which arises as an assumption in Theorem~\ref{thm:GartnerEllis},
is essential smoothness.
\begin{definition}
\label{def:Essential_Smoothness}
A convex function 
$\Lambda:\bbR\to(-\infty,\infty]$
is \textit{essentially smooth}
if 
\begin{itemize}
\item[(a)]
$\mathcal D_\Lambda^\circ$
is non-empty;
\item[(b)]
$\Lambda$ is
differentiable in 
$\mathcal D^\circ_\Lambda$;
\item[(c)] 
$\Lambda$ is \textit{steep},
in other words it satisfies 
$\lim_{n\to\infty}|\Lambda'(u_n)|=\infty$
for every sequence 
$(u_n)_{n\in\mathbb N}$
in
$\mathcal D^\circ_\Lambda$
that converges to a boundary point of
$\mathcal D^\circ_\Lambda$.
\end{itemize}
\end{definition}

The \textit{Fenchel-Legendre transform}
(or \textit{convex dual})
$\Lambda^*$
of 
$\Lambda$
is
defined by the formula
\begin{eqnarray}
\label{eq:DefFenchelLegendreTransf}
\Lambda^*(x) &:=& \sup\{ux-\Lambda(u)\>:\>u\in\bbR\}
\quad\text{for}\quad
x\in\bbR
\end{eqnarray}
with an effective domain
$\mathcal D_{\Lambda^*}:=\{x\in\mathbb R\>:\>\Lambda^*(x)<\infty\}$.
The following properties are immediate from the definition:
\begin{itemize}
\item[(i)] $0\leq \Lambda^*(x)\leq\infty$ for all $x\in\bbR$, since $\Lambda(0)=0$;
\item[(ii)]  $\Lambda^*(x) = \sup\{ux-\Lambda(u):u\in\mathcal D_\Lambda \}$
for all $x\in\bbR$ 
and  hence
$\Lambda^*$
is convex in the interval $\mathcal D_{\Lambda^*}$
and continuous in the interior $\mathcal D_{\Lambda^*}^\circ$; 
\item[(iii)]  $\Lambda^*$ is  lower semicontinuous on 
$\bbR$ as it is a supremum of continuous (in fact linear) functions.
Hence the level sets
$\{x:\Lambda^*(x)\leq\alpha\}$
are closed. 
\end{itemize}
In general 
$\mathcal D_{\Lambda^*}$
can be strictly contained
in $\bbR$
and
$\Lambda^*$
can be 
discontinuous 
at the boundary of 
$\mathcal D_{\Lambda^*}$
(see~\cite[Section 2.3]{Dembo2010}
for elementary examples of such rate functions).
Assumption~\eqref{eq:0_in_Domain_Assumption}
implies that 
for any
$\delta>0$,
such that
$(-\delta,\delta)\subset\mathcal D_\Lambda^\circ$,
and 
$c=\sup\{\Lambda(u):u\in[-\delta,\delta]\}$
we have
\begin{eqnarray}
\label{eq:Compac_LEvel_Sets}
\Lambda^*(x) \quad \geq \quad  \sup\{ux-\Lambda(u):u\in[-\delta,\delta]\} \quad \geq \quad 
\delta |x| -c.
%& \geq & \sup\{ux:u\in[-\delta,\delta]\} -\sup\{\Lambda(u):u\in[-\delta,\delta]\} =
%\delta |x| -c.\nonumber
\end{eqnarray}
Hence the set
$\{x:\Lambda^*(x)\leq\alpha\}$
is compact for any
$\alpha\in\bbR$
and therefore
$\Lambda^*$
is a good rate function.

\begin{remarks} (A) If 
$\Lambda$
is strictly convex, differentiable 
on 
$\mathcal D_\Lambda^\circ$
and steep,
which is the case in the
applications in this paper, 
then 
$\mathcal D_{\Lambda^*}=\bbR$
and for each
$x\in\bbR$
the equation 
$\Lambda'(u)=x$
has a unique solution 
$u_x$
in 
$\mathcal D_\Lambda^\circ$.
Furthermore the formula 
\begin{eqnarray}
\label{eq:FromulaForTrnsform}
\Lambda^*(x) & = & x u_x - \Lambda(u_x)
\end{eqnarray}
holds. This reduces the computation of 
$\Lambda^*(x)$
to finding the unique root of the 
equation
$\Lambda'(u)=x$,
where the strictly increasing function 
$\Lambda'$
is in most applications known in closed form.\\
\noindent (B) If 
$(Z_t)_{t\geq1}$
satisfies~\eqref{eq:0_in_Domain_Assumption}
and
the function 
$\Lambda$
satisfies the assumptions of Remark~(A)
and is twice differentiable with 
$\Lambda''(u)>0$
for all 
$u\in\mathcal D_\Lambda^\circ$,
then~\eqref{eq:FromulaForTrnsform}
implies that the Fenchel-Legendre transform
$\Lambda^*$
is differentiable with the derivative
\begin{eqnarray}
\label{eq:DerivativeOfRateFunction}
\left(\Lambda^*\right)'(x) & = & 
\left(\Lambda'\right)^{-1}(x) \quad\text{ for all }\quad x\in\bbR.
\end{eqnarray}
In particular~\eqref{eq:DerivativeOfRateFunction}
implies that 
$\Lambda^*$
is strictly convex on
$\bbR$
and that 
its global minimum is attained at 
the unique  point
$x^*$
given by
\begin{eqnarray*}
x^* & =  & \Lambda'(0). 
\end{eqnarray*}
\end{remarks}

We state a simple version of the G\"artner-Ellis theorem
(for the proof see~\cite[Section 2.3]{Dembo2010}).
%which suffices for our purposes.
%The proof of the G\"artner-Ellis theorem can be found 
%in~\cite[Section 2.3]{Dembo2010}.

\begin{theorem}
\label{thm:GartnerEllis}
Let 
$(Z_t)_{t\geq1}$
be a family of random variables 
that satisfies assumption~\eqref{eq:0_in_Domain_Assumption}
%with cumulant generating functions
%$\Lambda_t^Z$
with the limiting cumulant generating function
$\Lambda:\bbR\to(-\infty,\infty]$.
%Let
%$F$
%be a closed and
%$G$
%an open set in
%$\bbR$.
%Then the following inequalities hold
%\begin{eqnarray}
%\label{eq:Closed_GE_Thm}
%\limsup_{t\to\infty}\frac{1}{t}\PP\left[Z_t\in F\right]&\leq& -\inf\{\Lambda^*(x)\>:\>x\in F\},\\
%\liminf_{t\to\infty}\frac{1}{t}\PP\left[Z_t\in G\right]&\geq& -\inf\{\Lambda^*(x)\>:\>x\in G\cap\mathcal E\},
%\end{eqnarray}
%where 
%$\mathcal E:=\{y\in\bbR\>:\>y\text{ satisfies~\eqref{eq:ExposedCondition} with } u_y\in\mathcal D_\Lambda^o\}$.
%Furthermore if 
If
$\Lambda$
is essentially smooth and lower semicontinuous, then the LDP
holds for
$(Z_t)_{t\geq1}$
with the good rate function
$\Lambda^*$.
\end{theorem}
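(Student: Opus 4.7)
The plan is to verify the two inequalities in Definition~\ref{eq:DefLDP} separately: the upper bound over closed sets, and the lower bound over open sets. Both rest on exponential Chebyshev / change-of-measure (tilting) arguments, and the three hypotheses on $\Lambda$ play complementary roles: finiteness of $\Lambda$ near the origin (via~\eqref{eq:Compac_LEvel_Sets}) drives the upper bound, lower semicontinuity ensures the variational identity for $\Lambda^*$ behaves well at boundary points, and essential smoothness (steepness in particular) drives the lower bound.

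\textbf{Upper bound.} For $u \geq 0$ the exponential Chebyshev inequality gives
\[
\PP[Z_t/t \geq x] \;\leq\; \exp\bigl(-t u x + \Lambda_t^Z(t u)\bigr),
\]
so dividing by $t$, taking $\limsup_{t\to\infty}$ and optimising over $u \geq 0$ yields $\limsup_{t\to\infty} t^{-1}\log \PP[Z_t/t \geq x] \leq -\Lambda^*(x)$ for $x > \Lambda'(0)$, with a symmetric bound for $x < \Lambda'(0)$. A finite cover by intervals turns this into the upper bound $-\inf_{x\in K}\Lambda^*(x)$ for compact $K\subset\bbR$. The tail estimate~\eqref{eq:Compac_LEvel_Sets} delivers exponential tightness ($\limsup_{t\to\infty} t^{-1}\log \PP[|Z_t/t| > M] \leq c - \delta M \to -\infty$ as $M\to\infty$), which upgrades the compact upper bound to arbitrary closed subsets of $\bbR$ in the standard way.

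\textbf{Lower bound.} By exhausting open sets by countable unions of intervals, it suffices to show, for every $x$ with $\Lambda^*(x)<\infty$ and every $\eps>0$, that
\[
\liminf_{t\to\infty} t^{-1}\log\PP\bigl[Z_t/t \in (x-\eps,\,x+\eps)\bigr] \;\geq\; -\Lambda^*(x).
\]
Essential smoothness yields a unique $u_x \in \cD_\Lambda^\circ$ with $\Lambda'(u_x) = x$; steepness is exactly what guarantees this $u_x$ exists for every such $x$, including those whose preimage would otherwise escape to the boundary of $\cD_\Lambda^\circ$. Define the tilted measure $\tQQ_t$ on the $\sigma$-algebra generated by $Z_t$ by $\dd\tQQ_t/\dd\PP = \exp\bigl(u_x Z_t - \Lambda_t^Z(u_x)\bigr)$, and write
\[
\PP\bigl[Z_t/t \in (x-\eps,\,x+\eps)\bigr] \;\geq\; \exp\bigl(-t u_x x - t|u_x|\eps + \Lambda_t^Z(u_x)\bigr)\cdot\tQQ_t\bigl[Z_t/t \in (x-\eps,\,x+\eps)\bigr].
\]
Taking $t^{-1}\log$, passing to $\liminf$ and invoking~\eqref{eq:FromulaForTrnsform} produces $-\Lambda^*(x) - |u_x|\eps$ from the deterministic factor. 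The remaining term is shown to tend to $0$ by applying the already established upper bound to $(Z_t)$ under $\tQQ_t$: its limiting cumulant generating function is $u \mapsto \Lambda(u_x + u) - \Lambda(u_x)$, which attains its unique global minimum $0$ at $u=0$ corresponding to the point $x = \Lambda'(u_x)$, forcing $\tQQ_t[Z_t/t \notin (x-\eps, x+\eps)] \to 0$ exponentially. Letting $\eps \downarrow 0$ closes the lower bound.

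\textbf{Main obstacle.} The delicate step is the tilting argument: one must verify that under $\tQQ_t$ the variable $Z_t/t$ concentrates at $x$, which reduces to checking that the tilted limiting cumulant generating function satisfies the same hypotheses ($0$ in the interior of its effective domain, lower semicontinuous) so that the upper bound can be re-applied. Essential smoothness is indispensable here, both to solve $\Lambda'(u_x)=x$ for every $x$ with $\Lambda^*(x)<\infty$ and to prevent the tilted cumulant generating function from developing a pathology near $0$; lower semicontinuity of $\Lambda$ ensures the tilt's limiting object is correctly identified as $\Lambda(u_x+\cdot)-\Lambda(u_x)$ without loss at limit points.
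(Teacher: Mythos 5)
The paper does not supply its own proof here: Theorem~\ref{thm:GartnerEllis} is stated and the reader is referred to Dembo and Zeitouni for a proof, so there is no in-paper argument to compare against. Your sketch follows the standard Dembo--Zeitouni route (exponential Chebyshev plus exponential tightness for the upper bound, exponential tilting plus re-application of the upper bound for the lower bound), and the architecture is the right one.

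There are however two concrete issues worth fixing. First, a scaling inconsistency: with the paper's convention $\Lambda(u) = \lim_{t\to\infty}\frac{1}{t}\Lambda_t^Z(tu)$, your upper bound correctly uses $\Lambda_t^Z(tu)$, but your tilt $\dd\tQQ_t/\dd\PP = \exp(u_x Z_t - \Lambda_t^Z(u_x))$ and the deterministic prefactor $\exp(-tu_x x - t|u_x|\eps + \Lambda_t^Z(u_x))$ both drop a factor of $t$. As written, $\frac{1}{t}\Lambda_t^Z(u_x)$ tends to $\Lambda(0)=0$, not $\Lambda(u_x)$, so after applying~\eqref{eq:FromulaForTrnsform} the deterministic term yields $-u_x x - |u_x|\eps$ rather than $-\Lambda^*(x)-|u_x|\eps$. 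The tilt should be $\exp(t u_x Z_t - \Lambda_t^Z(t u_x))$; then the tilted limiting cumulant generating function is indeed $\tilde\Lambda(u)=\Lambda(u_x+u)-\Lambda(u_x)$. Second, a conceptual slip: $\tilde\Lambda$ takes the value $0$ at $u=0$ but does \emph{not} attain a minimum there --- its derivative at $0$ equals $\Lambda'(u_x)=x$, which is generically nonzero. The object that attains its unique global minimum $0$ at $x$ is the Fenchel--Legendre transform $\tilde\Lambda^*$, i.e.\ the rate function under $\tQQ_t$; this (together with goodness of $\tilde\Lambda^*$) is what forces $\tQQ_t[\lvert Z_t - x\rvert \geq \eps]$ to decay exponentially. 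Finally, your claim that the pointwise lower bound holds ``for every $x$ with $\Lambda^*(x)<\infty$'' is slightly stronger than what the tilting argument proves: it gives the bound for $x$ in the image of $\Lambda'$ (the exposed points whose exposing hyperplane lies in $\cD^\circ_\Lambda$), and one then passes to open sets using continuity of $\Lambda^*$ on $\cD^\circ_{\Lambda^*}$; steepness is what makes this image large enough. None of these is a structural gap --- the proof strategy is sound --- but as written the tilting step does not close.
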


%========================================================================================
\section{Limiting cumulant generating function in affine stochastic volatility models}
\label{sec:Cumulant}

\subsection{Non-degenerate affine stochastic volatility processes}
\label{subsec:Non_degen}
%\label{subsec:RateFunctionsAndOptionPrices}
Let
$(X,V)$
be a non-degenerate affine stochastic volatility process 
(see Definition~\ref{Def:ASVM}).
%defined in Section~\ref{sec:Affine}
%and assume 
%it
%satisfies assumptions A1~--~A4.
The goal of the present section is 
to describe the limiting cumulant generating function
$h$
of the family of variables
$(X_t/t)_{t\geq1}$,
defined by
\begin{eqnarray}
\label{eq:DefOfh}
h(u) & := & \lim_{t\to\infty}\frac{1}{t} \log \EE\left[\E^{uX_t}\right]
\end{eqnarray}
for every 
$u\in\bbR$
where the limit in~\eqref{eq:DefOfh} exists as an extended real number.
The function 
$h$
%turns out to be defined for all 
%$u\in\bbR$
%and 
will determine the limiting implied volatility smile of the model
$S=\E^X$.
%In the rest of the paper we assume that the affine stochastic volatility model 
%$(X,V)$
%satisfies the 
To ensure that 
%limiting cumulant generating 
$h$
is finite on an interval that contains
$[0,1]$,
which is key for establishing the LDP,
a further assumption will be required: 
\begin{description}
\item[A5] $\chi(0)<0$ and $\chi(1)<0$, where
$\chi$ 
is given in Definition~\ref{def:chi}.
\end{description}
This assumption will also imply that 
$h$
can be uniquely extended to a cumulant generating 
function of an infinitely divisible random variable.

In order to apply the G\"artner-Ellis theorem in
our setting, we need to answer the following 
three questions:
%that remain to be answered 
%are: 
is 
$h$
well-defined 
as an extended real number
by~\eqref{eq:DefOfh}
for every 
$u\in\bbR$,
%In fact, we will prove in Theorem~\ref{thm:LCGFh}
%that under additional assumptions, the right-hand side 
%in~\eqref{eq:DefOfh} exists for all 
%$u\in\bbR$
%as an extended real number and defines an infinitely 
%divisible cumulant generating function.
does the effective domain 
$\cD_h$
%$\cD_h^\circ$
contain 
$[0,1]$
in its interior 
\textit{and} is 
$h$
essentially smooth?
%is whether
%the effective domain 
%$\cD_h$
%contains 
%the interval 
%$[0,1]$
%in its interior. 
Answers to these questions play 
a crucial role in establishing the large deviation principle,
via Theorem~\ref{thm:GartnerEllis},
for affine stochastic volatility models.
%that the inclusion 
%$[0,1]\subset\cD_h^\circ$
%plays a key role 
%in establishing the large deviation principle 
%for affine stochastic volatility models. 
%under the risk-neutral 
%and share measures (see also Section~\ref{sec:OptionPrices}).
Theorem~\ref{thm:LCGFh} and Corollary~\ref{cor:LCGFh},
proved in this section,
provide easy to check sufficient conditions for the affirmative answers to hold.

%(see Theorem~\ref{Thm:wm_convergence}).
%In fact, we will prove in Theorem~\ref{thm:LCGFh}
%that under additional assumptions, the right-hand side 
%in~\eqref{eq:DefOfh} exists for all 
%$u\in\bbR$
%as an extended real number and defines an infinitely 
%divisible cumulant generating function.

%is essentially smooth and its effective domain
%$\cD_h$
%possesses the desired properties, which will imply the 
%large deviation principle for 
%$(X_t/t)_{t\geq1}$.

It is shown in ~\cite{K2008a}
that the function 
$h$
can be obtained from the functions $F$ and $R$ without 
the explicit knowledge 
of $\phi$ and $\psi$
(see Section~\ref{sec:Affine} for definition of
$\psi,\phi$).
Lemma~\ref{Lem:existence} and Theorem~\ref{Thm:wm_convergence}, 
taken from~\cite[Lemma~3.2 and Theorem~3.4]{K2008a}, 
describe certain properties of 
the limiting cumulant generating function 
$h$,
which are needed in Section~\ref{sec:OptionPrices}
but are insufficient to guarantee the essential smoothness of
$h$.
The main contribution of the present section is 
Theorem~\ref{thm:LCGFh},
which identifies sufficient conditions for the process
$(X,V)$
that imply essential smoothness of the function 
$h$.
The conditions in
Theorem~\ref{thm:LCGFh}
are easy to apply to the models of Section~\ref{subsec:Examples},
which will allow us to find their limiting implied volatility
smiles.

\begin{lemma}\label{Lem:existence}
Let $(X,V)$ be a non-degenerate affine stochastic volatility
process that satisfies assumption~A5.
Then there exist a maximal interval 
$\cI$
and a 
unique convex function
$w:\cI \to\bbR$
such that
$w \in C(\cI)\cap C^1(\cI^\circ)$ 
and
\begin{equation*}%\label{Eq:stationary_points}
R(u,w(u)) = 0 \qquad \text{for all} \quad u \in \cI,
\end{equation*}
where
$R$
is given in~\eqref{eq:F_and_R} (see also~\eqref{Eq:FR_form}).
Furthermore we have
\begin{enumerate}[(a)]
\item $[0,1] \subseteq \cI$ and $\partial_2R(u,w(u)) < 0$ for all $u \in \cI^\circ$;
\item $w(0) = w(1) = 0$ and $w(u) < 0$ for all $u \in (0,1)$; 
\item $w(u) > 0$ for all $u \in \cI \setminus [0,1]$. 
%\item $w'(u)>0$ for $u\in\cI^\circ\cap[1,\infty)$ and  $w'(u)<0$ for $u\in\cI^\circ\cap(-\infty,0]$. 
\end{enumerate} 
\end{lemma}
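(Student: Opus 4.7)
The plan is to realise the graph of $w$ as the lower boundary (in the $w$-direction) of the closed convex sublevel set $C:=\{(u,w)\in\bbR^2:R(u,w)\le 0\}$, extracting $C^1$-regularity from the implicit function theorem and convexity of $w$ from the joint convexity of $R$ established in Proposition~\ref{prop:Properties_F_and_R}. First I handle the $[0,1]$-part of (a) and (b). By Proposition~\ref{prop:Properties_F_and_R} the function $R$ is lower semicontinuous and convex, so $C$ is closed and convex, and assumption A3 places $(0,0),(1,0)\in C$. The restriction $u\mapsto R(u,0)$ is convex and vanishes at $u=0,1$, hence is $\le 0$ on $[0,1]$; Proposition~\ref{prop:Properties_F_and_R}(B) combined with A4 excludes the affine alternative (in which $R(\cdot,0)\equiv 0$), so $R(u,0)<0$ strictly for $u\in(0,1)$. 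The explicit formula for $\chi$ after Definition~\ref{def:chi} shows $\chi$ is convex in $u$ (an integral over $\xi_2\ge 0$ times a convex function of $u$), and A5 therefore upgrades to $\chi(u)<0$ on all of $[0,1]$. Since $\chi(u)$ is the left-derivative of $R(u,\cdot)$ at $w=0$, the convex function $w\mapsto R(u,w)$ is strictly decreasing through $0$ at $w=0$; strict convexity along this vertical line (Proposition~\ref{prop:Properties_F_and_R}(B) again, since affinity is incompatible with two distinct zeros) produces a unique smaller zero $w(u)\le 0$ with $\partial_2 R(u,w(u))<0$. This yields $w(0)=w(1)=0$, and for $u\in(0,1)$ the strict inequality $R(u,0)<0$ forces $w(u)<0$, proving (b) and the derivative-sign part of (a) on $[0,1]$.

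Next I extend $w$ to a maximal interval $\cI\supseteq[0,1]$. Since $R\in C^1$ on $\cD_R^\circ$ by Proposition~\ref{prop:Properties_F_and_R}(A) and $\partial_2 R(u,w(u))<0$, the implicit function theorem produces a local $C^1$ zero curve at each such point; let $\cI$ be the maximal interval on which this continuation yields a $C^1$ function with $\partial_2 R<0$ along its graph. Convexity of $w$ then follows from convexity of $C$: for $u_1,u_2\in\cI$ and $t\in[0,1]$ the convex combination $((1-t)u_1+tu_2,(1-t)w(u_1)+tw(u_2))$ lies in $C$, so $w((1-t)u_1+tu_2)\le(1-t)w(u_1)+tw(u_2)$ from the characterisation $w(u)=\inf\{v:(u,v)\in C\}$. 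Continuity at endpoints of $\cI$ is automatic from convexity and finiteness.

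For (c), suppose $w(u_0)\le 0$ for some $u_0>1$ in $\cI$. Applying the three-point convexity inequality to $0<1<u_0$ gives $0=w(1)\le\tfrac{u_0-1}{u_0}w(0)+\tfrac{1}{u_0}w(u_0)\le 0$, which forces $w$ to be affine on $[0,u_0]$, hence identically zero there, contradicting $w<0$ on $(0,1)$; the case $u_0<0$ is symmetric. For uniqueness of $w$, any competitor $\tilde w$ of the same type must select the smaller zero on $[0,1]$ because continuity and the shape of the zero-set of $R(u,\cdot)$ on $(0,1)$ force $\tilde w$ to that branch, and the IFT then propagates agreement with $w$ from a neighbourhood of any common point to all of $\cI$ by connectedness. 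The main obstacle will be this last step: showing cleanly that a convex $C^1$ competitor cannot switch between the smaller- and larger-zero branches at the endpoints $u=0$ or $u=1$ where the two branches coalesce at the same point, and that maximality of $\cI$ really is determined by the collision of the two branches (or by the boundary of $\cD_R$) rather than by some spurious failure of $C^1$ regularity.
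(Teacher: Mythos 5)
The paper does not prove this lemma itself: it is imported verbatim from \cite[Lemma~3.2]{K2008a}, and the remark immediately following the statement tells the reader that the proof in that reference proceeds by a qualitative ODE analysis of the generalized Riccati equations~\eqref{Eq:gen_Riccati}. Your proposal takes a genuinely different, essentially static route: you realise the graph of $w$ as the lower boundary of the convex sublevel set $C=\{R\le 0\}$, obtaining convexity of $w$ for free from convexity of $C$ and $C^1$ regularity from the implicit function theorem wherever $\partial_2 R<0$. This is a cleaner conceptual picture and it also buys an honest geometric explanation of where the sign conditions come from: A3, A4 and Proposition~\ref{prop:Properties_F_and_R}(B) force $R(\cdot,0)$ to be strictly convex and hence strictly negative on $(0,1)$, while A5 and the convexity of $\chi$ force $\chi<0$ on $[0,1]$, which pins down the lower zero branch. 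The price is that the dynamical picture of $w(u)$ as the stable equilibrium of $\dot\psi = R(u,\psi)$, which is what the cited Riccati argument uses to establish maximality of $\cI$ and uniqueness (and which is needed again for Theorem~\ref{Thm:wm_convergence}), has to be rebuilt by hand.

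There are a few concrete gaps worth flagging beyond the one you acknowledge at the end. First, when you pass from $\chi(u)<0$ to the existence of a zero $w(u)\le 0$ you appeal to strict convexity of $w\mapsto R(u,w)$, but that reasoning is circular for $u\in(0,1)$ where you do not yet know there are two zeros; the correct argument is simpler and does not need strictness: $\partial_2 R(u,\cdot)$ is non-decreasing, so $\partial_2 R(u,w)\le\chi(u)<0$ for $w\le 0$, whence $R(u,w)\ge R(u,0)+\chi(u)w\to+\infty$ as $w\to-\infty$, and $R(u,\cdot)$ is strictly decreasing on $(-\infty,0]$. This also requires the (unstated but true) observation that $R(u,w)<\infty$ for all $w\le 0$ once $R(u,0)<\infty$, which follows because $\mu$ is supported in $\bbR\times\Rplus$ so that $\E^{w\xi_2}\le 1$ for $w\le 0$. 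Second, the implicit function theorem as you apply it needs $(u,w(u))\in\cD_R^\circ$; under the hypotheses of this lemma alone, $\cD_R$ need not be open (that is an additional assumption only in Theorem~\ref{thm:LCGFh}), so interiority at the anchor points $(0,0)$ and $(1,0)$ is not automatic and has to be argued (e.g.\ using that $\{u\}\times(-\infty,0]\subset\cD_R$ for $u$ near $[0,1]$). Third, the worry you raise about the two branches ``coalescing at $u=0,1$'' is actually unfounded: A5 gives $\partial_2R=\chi<0$ at $(0,0)$ and $(1,0)$, so these are simple zeros and the lower and upper branches are separated there; the genuine delicacy for uniqueness and maximality is at the endpoints of $\cI$, where $\partial_2R$ may vanish or $\cD_R$ may be exited, and a clean treatment there would be needed to fully replace the ODE argument of \cite{K2008a}.
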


\begin{remarks}
\noindent (i) The proof of Lemma~\ref{Lem:existence} 
in~\cite{K2008a}
is based on the analysis of the qualitative properties of the
generalized Riccati equations in~\eqref{Eq:gen_Riccati}.

\smallskip
\noindent (ii) The function 
$u\mapsto w(u)$
from Lemma~\ref{Lem:existence}
can be extended naturally to a
lower semicontinuous function
$w:\bbR \to(-\infty,\infty]$
by 
$w\left(\bbR\setminus\cI\right)=\infty$.
Then the extension, again denoted by
$w(u)$,
has the following properties:
\begin{itemize}
\item $w$ is convex with 
effective domain 
$\cD_w=\cI$ and 
$\cI=\{u\in\bbR:w(u)<\infty\}$;
\item the maximality of 
$\cI$
implies that for $u \in \bbR\setminus\cI$
there exists no $w^*\in\bbR$ 
such that 
$R(u,w^*) = 0$.  
\end{itemize}
\end{remarks}

The next theorem, proved 
in~\cite[Theorem~3.4]{K2008a},
describes further properties of the function
$u\mapsto w(u)$
and specifies its relationship to the limiting cumulant generating
function
$h$
defined in~\eqref{eq:DefOfh}.

\begin{theorem}\label{Thm:wm_convergence}
Let $(X,V)$ be a non-degenerate affine stochastic volatility
process that satisfies assumption~A5
and let 
$w(u)$
be given by Lemma~\ref{Lem:existence}. Then 
the %limiting cumulant generating
function
$h(u)$
defined in~\eqref{eq:DefOfh}
satisfies
\begin{eqnarray}
\label{eq:ExpressionFor_h}
h(u) = F(u,w(u)) \quad\text{for any}\quad u\in\cJ := \{s \in \cI: F(s,w(s)) < \infty\},
\end{eqnarray}
where
$F$
is defined in~\eqref{eq:F_and_R} (see also~\eqref{Eq:FR_form}).
Furthermore the inclusions hold:
%the effective domain of
%$h$
%is
%$\cD_h=\cJ$
%and 
$[0,1] \subseteq \cJ \subseteq \cI$.
The functions
$w(u)$ 
and 
$h(u)$ 
can be extended uniquely to 
cumulant generating functions of infinitely divisible random
variables and
\begin{subequations}\label{Eq:limit_psiphi}
\begin{align}
\lim_{t \to \infty} \psi(t,u,0) &= w(u) \quad \text{for all} \quad u \in \cI;\label{Eq:limit_psiphi_sub1}\\
\lim_{t \to \infty} \frac{1}{t}\phi(t,u,0) &= h(u) \quad \text{for
all} \quad u \in \cJ.\label{Eq:limit_psiphi_sub2}
\end{align}
\end{subequations}
\end{theorem}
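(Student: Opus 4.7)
The plan is to exploit the generalized Riccati equations~\eqref{Eq:gen_Riccati} with initial data $w=0$: writing $y(t):=\psi(t,u,0)$, these reduce to the scalar autonomous ODE $\dot y(t)=R(u,y(t))$ with $y(0)=0$, and $\phi(t,u,0)=\int_0^t F(u,y(s))\,ds$. The first task is to analyse the long-time behaviour of $y$, use it to extract the limit of $\phi(t,u,0)/t$, and finally read off $h$ from the identity $\Phi_t(u,0)=\phi(t,u,0)+V_0\psi(t,u,0)+uX_0$.

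For~\eqref{Eq:limit_psiphi_sub1}, the key observation is that $w(u)$ is a stable equilibrium of $\dot y = R(u,y)$, since by Lemma~\ref{Lem:existence}(a) one has $R(u,w(u))=0$ and $\partial_2 R(u,w(u))<0$ on $\cI^\circ$. I would distinguish three cases. For $u\in\{0,1\}$ the identities $R(u,0)=0$ from~A3 give $y\equiv 0=w(u)$. For $u\in(0,1)$ the map $u\mapsto R(u,0)$ is either affine or strictly convex by Proposition~\ref{prop:Properties_F_and_R}(B); it vanishes at $0$ and $1$ and is not identically zero by~A4, hence it must be strictly convex and therefore strictly negative on $(0,1)$. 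Thus $y$ decreases monotonically, and convexity of $R(u,\cdot)$ together with the negative slope at $w(u)$ forces $R(u,y)<0$ on $(w(u),0)$, which prevents overshoot and yields $y(t)\downarrow w(u)$. For $u\in\cI\setminus[0,1]$ the symmetric argument with $R(u,0)>0$ and $w(u)>0$ (Lemma~\ref{Lem:existence}(c)) gives $y(t)\uparrow w(u)$.

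The second step handles $\phi$. For $u\in\cJ$ the value $F(u,w(u))$ is finite, and the trajectory $y(s)$ stays in the closed segment between $0$ and $w(u)$. Convexity of $F(u,\cdot)$ together with finiteness at both endpoints (the value at $0$ follows from the L\'evy--Khintchine representation~\eqref{Eq:F_form} and~A3, which bound $F(u,0)$ on $[0,1]$; the value at $w(u)$ is finite by definition of $\cJ$) forces $s\mapsto F(u,y(s))$ to be bounded and continuous. A standard Ces\`aro argument then gives $\tfrac{1}{t}\phi(t,u,0)\to F(u,w(u))$. Combining with~\eqref{Eq:limit_psiphi_sub1} in the decomposition of $\Phi_t(u,0)/t$ yields $h(u)=F(u,w(u))$, and the same finiteness discussion also establishes the inclusion $[0,1]\subseteq\cJ$.

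The final step is the infinite-divisibility extension. Each finite-$t$ object $v\mapsto \phi(t,u,0)+v\psi(t,u,0)$ is the log-MGF of the infinitely divisible law of $X_t$ under $V_0=v$, so pointwise limits produce, via L\'evy's continuity theorem and the fact that infinite divisibility is preserved under weak convergence, cumulant generating functions of (uniquely determined) infinitely divisible distributions. The main obstacle I anticipate is the Ces\`aro step: a priori $F(u,\cdot)$ could blow up somewhere along the trajectory of $y$, so the whole argument rests on convexity plus finiteness at the endpoints of the segment to secure local boundedness. Once that is in place the remaining steps amount to bookkeeping around the ODE analysis and an application of L\'evy's continuity theorem.
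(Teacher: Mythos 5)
The paper does not supply its own proof of this theorem; it is quoted verbatim from~\cite[Theorem~3.4]{K2008a}. Your reconstruction — view $y(t)=\psi(t,u,0)$ as the solution of the scalar autonomous Riccati ODE $\dot y = R(u,y)$, $y(0)=0$, identify $w(u)$ as the globally attracting equilibrium reached monotonically from $0$, and then Ces\`aro-average $\phi(t,u,0)/t=\frac1t\int_0^t F(u,y(s))\,\dd s$ — is exactly the strategy of the cited source and is sound. The strict negativity (positivity) of $u\mapsto R(u,0)$ on $(0,1)$ (on $\cI\setminus[0,1]$) by Proposition~\ref{prop:Properties_F_and_R}~(B) and A3/A4, the trapping of the trajectory in the segment between $0$ and $w(u)$, the boundedness of $F(u,\cdot)$ on that segment by convexity plus finiteness at both endpoints, and the appeal to L\'evy's continuity theorem for the infinite-divisibility extension are all the right ingredients; the inclusion $[0,1]\subseteq\cJ$ follows as you indicate.

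Three places deserve to be made more precise. First, the sentence about the ODE — ``the negative slope at $w(u)$ forces $R(u,y)<0$ on $(w(u),0)$'' — does not quite follow from the slope alone (a convex function with negative derivative at the left endpoint can still become positive to the right); what closes the argument is the \emph{chord bound}: $R(u,\cdot)$ is convex, vanishes at $w(u)$ and is strictly negative at $0$, hence lies strictly below zero on $(w(u),0]$. Second, the reduction to the Riccati ODE on $[0,\infty)$ needs a bootstrap: one must show $\psi(t,u,0)$ stays finite, and that $R(u,\cdot)$ and $F(u,\cdot)$ are finite along the trajectory; the latter follows from the fact that $\mu$ and $m$ are supported on $\bbR\times\Rplus$, so both $\cD_R$ and $\cD_F$ are closed under decreasing the second coordinate, which with convexity of these domains and A3 gives finiteness on the segment between $0$ and $w(u)$ for $u\in\cI$. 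Third, the claim that $w(u)$ is the CGF of an infinitely divisible law is stated a bit too quickly: $v\psi(t,u,0)=\Phi_t^{(v)}(u)-\Phi_t^{(0)}(u)$ is a \emph{difference} of CGFs, which is not automatically a CGF; one needs the branching property of the affine process (the law of $X_t$ given $V_0=v_1+v_2$ is the convolution of the independent laws given $V_0=v_1$ and $V_0=v_2$, modulo the immigration part) to conclude that $\psi(t,u,0)$ itself is, for each $t$, the CGF of an infinitely divisible random variable, after which L\'evy continuity and closedness of the ID class under weak limits give the extension for $w$. For $h$ the analogous argument uses that $\tfrac1t\phi(t,u,0)$ is the CGF of the $t$-th convolution root of the ID law of $X_t$ started from $(0,0)$.
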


\begin{remark}
Since 
$w(u)$ and $h(u)$ 
can be extended to
cumulant generating functions 
of some (infinitely divisible) random variables 
it follows that:
\begin{itemize}
\item $w$ (resp. $h$) is continuously differentiable in the interior of %its effective domain
%$\cD_w=\cI$
$\cI$
(resp.
$\cJ$);
%$\cD_h=\cJ$;
\item either 
$h''(u)>0$
for all
$u\in\cJ^\circ$
or 
$h(u)=0$
for all
$u\in\bbR$
(this follows from~\eqref{eq:ExpressionFor_h}, (b) in Lemma~\ref{Lem:existence} and assumption A3).
\end{itemize}
\end{remark}

%Recall that by definition 
We say that 
$R$
\emph{explodes at the boundary} 
if 
$\lim_{n\to\infty}R(u_n,w_n)=\infty$
for any sequence 
$\left((u_n,w_n)\right)_{n\in\bbN}$
in the interior 
$\cD_R^\circ$
that tends to a point in the boundary of
$\cD_R$
(both the boundary and the interior of 
$\cD_R$
are relative to the topology of
$\bbR^2$)
or equivalently
$\cD_R$
is open.
By Proposition~\ref{prop:Properties_F_and_R}~(A),
the gradient
$\nabla F=(\partial_1 F,\partial_2 F)$
is continuous on 
$\cD_F^\circ$.
Analogously to the one-dimensional case (see (c) in Definition~\ref{def:Essential_Smoothness}),
we say that 
$F$
is \textit{steep} if 
$\lim_{n\to\infty}\|\nabla F(u_n,w_n)\|=\infty$
for any sequence 
$\left((u_n,w_n)\right)_{n\in\bbN}$
in the interior 
$\cD_F^\circ$
that tends to a point in the boundary of
$\cD_F$.
It is clear that if 
$F$
explodes at the boundary, it is also steep 
but the converse may not be true.

Before we state and prove the main results of this section
(Theorem~\ref{thm:LCGFh} 
and 
Corollary~\ref{cor:LCGFh}),
we establish %another useful fact.
Lemma~\ref{lem:h_not_zero},
which states 
%The following lemma shows 
that in an affine stochastic volatility model, the limiting cumulant generating function  
$h$
cannot be identically equal to zero. This property will play an important
role in understanding the limiting behaviour of the implied volatility
smile  (see e.g. Theorem~\ref{thm:OptionPrices}).

\begin{lemma}
\label{lem:h_not_zero}
Let $(X,V)$ be a non-degenerate affine stochastic volatility
process that satisfies assumption~A5
and let
$h$
be given by~\eqref{eq:DefOfh}.
Assume further that 
the interior $\cD_F^\circ$
of the effective domain of 
$F$
contains the set
$\{(0,0),(1,0)\}$
and that
$F(u,w)\neq0$
for some
$(u,v)\in\cD_F$.
Then
$h(u)>0$
for all
$u\in\cJ\setminus[0,1]$
and
$h(u)<0$
for all
$u\in(0,1)$.
Furthermore we have
$h''(u)>0$
for all
$u\in\cJ^\circ$.
\end{lemma}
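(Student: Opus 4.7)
The plan is to exploit the dichotomy stated in the Remark following Theorem~\ref{Thm:wm_convergence}: either $h''(u)>0$ for every $u\in\cJ^\circ$, or $h\equiv 0$ on $\bbR$. Once we rule out the second alternative, $h$ is strictly convex on $\cJ^\circ$, and since $h(0)=F(0,w(0))=F(0,0)=0$ and $h(1)=F(1,w(1))=F(1,0)=0$ by assumption A3 and Lemma~\ref{Lem:existence}(b), the desired sign information on $(0,1)$ and on $\cJ\setminus[0,1]$ follows at once from the standard chord-slope characterisation of strictly convex functions with two prescribed zeros (and convexity extends the strict inequality from $\cJ^\circ$ to the boundary points of $\cJ$). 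So the real work is to contradict $h\equiv 0$.

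I would assume $h\equiv 0$, i.e.\ $F(u,w(u))=0$ for every $u\in\cJ$, and first extract two simple structural properties of $F$ from its L\'evy-Khintchine representation~\eqref{Eq:F_form}. Using $a_{12}=a_{22}=0$, $c=0$ (by A3), $b_2\geq 0$ and $\xi_2\geq 0$ on $D=\bbR\times\Rplus$, differentiation under the integral gives $\partial_2 F(u,w)\geq 0$, so $F$ is non-decreasing in $w$. Combined with Proposition~\ref{prop:Properties_F_and_R}(B) (any restriction of $F$ to a one-dimensional affine subspace is affine or strictly convex), this is the whole toolkit needed.

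First I would restrict $F$ to the $u$-axis $\{w=0\}$, where $F(0,0)=F(1,0)=0$. Strict convexity would give $F(u,0)<0$ for $u\in(0,1)$; but then, since $w(u)<0$ on $(0,1)$ (Lemma~\ref{Lem:existence}(b)) and $F$ is non-decreasing in $w$, we would have $0=F(u,w(u))\le F(u,0)<0$, impossible. Hence $F(\cdot,0)$ is affine and, having two zeros, is identically zero on its effective domain. Next fix $u_*\in(0,1)$; by convexity of $\cD_F^\circ$ together with the hypothesis $\{(0,0),(1,0)\}\subset\cD_F^\circ$, we get $(u_*,0)\in\cD_F^\circ$. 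On the vertical line $u=u_*$, the map $w\mapsto F(u_*,w)$ vanishes at the two distinct points $0$ and $w(u_*)<0$; strict convexity in $w$ together with the non-decreasing property would force $F(u_*,\cdot)$ to be constant on $[w(u_*),0]$, which is incompatible with strict convexity, so it must be affine and therefore identically zero on its one-dimensional effective domain.

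Combining the axis result with the vertical-line result, $F$ vanishes on a two-dimensional open neighbourhood of $(u_*,0)$ inside $\cD_F^\circ$. By the analyticity of $F$ on the open, connected (convex) set $\cD_F^\circ$ established in the proof of Proposition~\ref{prop:Properties_F_and_R}(A), this forces $F\equiv 0$ on $\cD_F^\circ$; and since~\eqref{Eq:F_form} with $c=0$ exhibits $F$ as the cumulant generating function of an infinitely divisible random vector in $\bbR^2$, the underlying vector is a.s.\ zero and $F\equiv 0$ on all of $\bbR^2$, contradicting the hypothesis that $F(u,w)\neq 0$ somewhere on $\cD_F$. The main obstacle in this approach is the step on the vertical line: one must simultaneously deploy the sign of $w(u_*)$, the non-decreasing character of $F$ in $w$, and Proposition~\ref{prop:Properties_F_and_R}(B) to squeeze out the affine conclusion, and one must verify that $(u_*,0)$ genuinely sits in $\cD_F^\circ$ so that analyticity can be invoked afterwards.
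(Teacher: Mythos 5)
Your argument is correct in substance but takes a genuinely different route from the paper. The paper proves the sign of $h$ \emph{directly}: it splits on whether $u\mapsto F(u,0)$ is strictly convex or identically zero. In the strictly convex case it observes $F(u,0)<0$ for $u\in(0,1)$ and writes
$h(u)=F(u,w(u))=F(u,0)-\int_{w(u)}^0\partial_2F(u,z)\,\dd z\le F(u,0)<0$,
getting the sign in one line; in the degenerate case it computes $F$ explicitly from the L\'evy--Khintchine form (deducing $a_{11}=0$, $m=\delta_0\otimes\nu$, $b_1=0$, hence $F(u,w)=b_2w+\int(\E^{w\xi_2}-1)\,\nu(\dd\xi_2)$) and then uses $F\not\equiv 0$ to read off $b_2>0$ or $\nu\neq 0$. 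You instead argue by \emph{contradiction}: assume $h\equiv 0$ and show $F\equiv 0$. Both approaches rest on exactly the same tools --- the affine/strictly-convex dichotomy of Proposition~\ref{prop:Properties_F_and_R}(B), the sign $\partial_2 F\ge 0$ from~\eqref{Eq:F_form}, and $w(u)<0$ on $(0,1)$ from Lemma~\ref{Lem:existence}(b) --- but the paper's direct handling of the degenerate case is cleaner, because it produces an explicit formula for $F$ and never needs to invoke the identity principle for real-analytic functions.

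There is one gap to flag. After fixing a single $u_*\in(0,1)$ you deduce vanishing of $F$ on the $u$-axis and on the vertical line $\{u=u_*\}$, and then claim this yields vanishing on a two-dimensional open neighbourhood of $(u_*,0)$. It does not: the union of those two lines is a cross, and a real-analytic function on a connected open set can vanish on two intersecting lines without vanishing identically (for instance $(u,w)\mapsto uw$); convexity of $F$ alone does not obviously repair this. The fix, however, is easy and is already implicit in your write-up: the vertical-line argument applies to \emph{every} $u_*\in(0,1)$, so $F$ vanishes on the open set $\{(u,w)\in\cD_F^\circ:u\in(0,1)\}$, which is nonempty and genuinely two-dimensional; the identity theorem on the connected set $\cD_F^\circ$ then applies and, combined with the cumulant-generating-function observation, gives $F\equiv 0$, the desired contradiction. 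With that clarification your proof goes through.
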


\begin{proof}
Note that since 
$h$
can be extended to 
a cumulant generating function of a random variable
by Theorem~\ref{Thm:wm_convergence},
it is smooth in
$\cJ^\circ$.
Since 
$h$
is either identically equal to zero or strictly convex 
on 
$\cJ$
by the remark following
Theorem~\ref{Thm:wm_convergence}, the statement 
$h''(u)>0$
for all
$u\in\cJ^\circ$
follows if we prove that
$h(u)<0$
for some
$u\in(0,1)$.

The function 
$u\mapsto F(u,0)$
is convex by (B) of Proposition~\ref{prop:Properties_F_and_R}.
Furthermore it is either (I) strictly convex or (II) identically equal to zero
(by A3). We analyse both cases.

(I) Strict convexity and A3 imply that for
$u\in(0,1)$ we have
$F(u,0)<0$.
The same argument implies that 
for 
$u\in\bbR\setminus[0,1]$,
such that 
$(u,0)\in\cD_F^\circ$,
the inequality 
$F(u,0)>0$
holds.
The L\'evy-Khintchine representation of 
$F$
in~\eqref{Eq:FR_form}
implies that 
\begin{eqnarray}
\label{eq:Derivative_In_w}
\partial_2 F(u,w) & = & b_2 +\int_{D\setminus\{0\}}\xi_2\,\E^{u\xi_1+w\xi_2}\,m(\dd\xi)
\end{eqnarray}
for any point in the interior of the effective domain 
$\cD_F$.
It is clear from~\eqref{eq:Derivative_In_w}
that 
$\partial_2 F\geq0$
on 
$\cD_F^\circ$.
Lemma~\ref{Lem:existence}
implies that for
$u\in(0,1)$
we have 
$w(u)<0$.
Identity~\eqref{eq:ExpressionFor_h} in Theorem~\ref{Thm:wm_convergence}
yields
$$
h(u) = F(u,w(u)) = F(u,0) - \int_{w(u)}^0\partial_2 F(u,z)\,\dd z\leq F(u,0)<0.
$$
The last inequality follows from the strict convexity of 
$u\mapsto F(u,0)$.
If
$u\in\cJ^\circ\setminus[0,1]$,
then
$(u,0)\in\cD_F^\circ$
and an analogous argument implies
that 
$h(u)>0$.
The inequality at the boundary points 
of the interval
$\cJ$
follows from the
convexity of 
$h$.

(II) Assume now that 
$u\mapsto F(u,0)$
is identically equal to zero.
For any 
$(u,0)$
in the interior of the effective domain of
$F$,
the L\'evy-Khintchine representation of 
$F$
in~\eqref{Eq:FR_form}
yields
$$
\partial_1^2 F(u,0)  =  a_{11}
+\int_{D\setminus\{0\}}\xi_1^2\,\E^{u\xi_1}\,m(\dd\xi)=0.
$$
This implies 
$a_{11}=0$
and
$m(\dd \xi) = (\delta_0\otimes\nu)(\dd \xi)$,
where
$\nu(\dd\xi_2)$
is a L\'evy measure 
on
$(0,\infty)$
with integrable small jumps
and 
$\delta_0$
is the Dirac delta.
The condition
$F(1,0)=0$
in A3
and 
the representation of 
$F$
in~\eqref{Eq:FR_form}
yield
$b_1=0$.
Hence we have
$$
F(u,w)  =  b_2w +\int_{\Rplus\setminus\{0\}}\left(\E^{w\xi_2}-1\right)\,\nu(\dd\xi_2).
$$ 
Since
by assumption
there exists
$(u,v)\in\cD_F$
such that
$F(u,w)\neq0$, 
either
$b_2>0$
or
$\nu\neq0$ holds.
Therefore
identity~\eqref{eq:ExpressionFor_h}
in Theorem~\ref{Thm:wm_convergence}, 
Lemma~\ref{Lem:existence} 
and this representation of $F$ conclude the proof.
\end{proof}

\begin{remark}
The assumption 
$\{(0,0),(1,0)\} \subset \cD_F^\circ$
in Lemma~\ref{lem:h_not_zero}
ensures that the interiors of the effective domains of 
$F$
and
$h$
are non-empty. It may not be necessary for 
Lemma~\ref{lem:h_not_zero} 
to hold. However, the assumption is crucial in Theorem~\ref{thm:LCGFh}
and hence does not restrict the applicability of Lemma~\ref{lem:h_not_zero}
in our setting.
\end{remark}

%We collect some properties of $w(u)$ and $h(u)$ that are shown in \cite[Lemma~3.2 and Theorem~3.4]{K2008a}:
%\begin{enumerate}[(a)]
% \item $w$ and $h$ are closed (= lower semicontinuous) convex functions with effective domains 
% $\cI$ and $\cJ$ respectively, and $[0,1] \subset \cJ \subset \cI$. \label{Prop1}
% \item $w$ and $h$ are continuously differentiable in the interior of their effective domains \label{Prop2}
% \item $w(0) = w(1) = 0$, $w(u) < 0$ for $u \in (0,1)$ and $w(u) > 0$ for $u \in \bbR \setminus [0,1]$ \label{Prop3}
% \item $R(u,w(u)) = 0$ for all $u \in \cI$, and there exists no $w_*$ such that $R(u,w_*) = 0$ for $u \not \in \cI$. \label{Prop4}
% \item $h(u) = F(u,w(u))$ \label{Prop5}
%\end{enumerate}

%We we also use the following properties of $F, R$ from \cite[Lemma~2.2, Corollary~2.7 and Lemma~2.9]{K2008a} :
%\begin{enumerate}[(A)]
% \item $F$ and $R$ are closed convex functions, continuously differentiable in the interior of their effective domains. \label{PropFR1}
% \item $F(0,0) = F(1,0) = 0$, $R(0,0) = R(1,0) = 0$ and $u \mapsto R(u,0)$ is strictly convex \label{PropFR2}
% \item Restricted to one-dimensional affine subspaces, $F$ and $R$ are either affine functions or strictly convex functions. \label{PropFR3}
% \ \item Let $(u,w) \in \cD_F$. Then function $\eta \mapsto F(u,\eta)$ is increasing and finite for all $\eta \le w$. \label{PropFR4}
%\end{enumerate}
%The last property is not shown in \cite{K2008a} but follows from the L\'evy-Khintchine decomposition and the admissibility conditions on the parameters of $F$.

\begin{theorem}
\label{thm:LCGFh}
Let $(X,V)$ be a non-degenerate affine stochastic volatility
process that satisfies assumption~A5
and suppose that the function 
$w\mapsto F(0,w)$, 
where 
$F$
is defined in~\eqref{eq:F_and_R},
is not identically equal to zero.
If $R$ explodes at the boundary
(i.e.
$\cD_R$
is open), 
$F$ is steep and $\{(0,0),(1,0)\} \subset \cD_F^\circ$, 
then the %limiting cumulant generating 
function $h(u)$ is well-defined by~\eqref{eq:DefOfh}
as an extended real number for every 
$u\in\bbR$
and its effective domain 
is given by
$\cD_h=\cJ$
(see~\eqref{eq:ExpressionFor_h} for the definition
of interval
$\cJ$).
Furthermore 
$h$
is essentially smooth and the set $\{0,1\}$ 
is contained in the interior 
$\cD_h^\circ$ 
(relative to $\bbR$) 
of %the effective domain
$\cD_h$.
\end{theorem}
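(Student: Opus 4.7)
The plan is to first extend the identity $h(u)=F(u,w(u))$ of Theorem~\ref{Thm:wm_convergence} beyond $\cJ$. For $u\in\cI\setminus\cJ$, I would combine the flow representation $\phi(t,u,0)=\int_0^t F(u,\psi(s,u,0))\,\dd s$, the convergence $\psi(s,u,0)\to w(u)$ from Theorem~\ref{Thm:wm_convergence}, and the lower semicontinuity of $F$ (Proposition~\ref{prop:Properties_F_and_R}(A)), via a Ces\`aro-type estimate, to conclude $\liminf_{t\to\infty}\phi(t,u,0)/t\geq F(u,w(u))=+\infty$. For $u\notin\cI$, $R(u,\cdot)$ has no zero on $\bbR$; the Riccati ODE $\partial_t\psi=R(u,\psi)$ from $\psi(0)=0$ drives $\psi$ monotonically towards the boundary of $\cD_R$, and openness of $\cD_R$ (the $R$-explodes hypothesis) rules out approach to a finite boundary without $R$ blowing up. The implicit representation $\int_0^{\psi(t,u,0)}\dd\eta/R(u,\eta)=t$ then forces finite-time explosion $T^\ast<\infty$, past which $\EE[\E^{uX_t}]=+\infty$ and so $h(u)=+\infty$. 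This yields $\cD_h=\cJ$.

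\textbf{Step 2 ($\{0,1\}\subset\cD_h^\circ$ and analyticity on $\cD_h^\circ$).}
Assumption~A5 gives $\partial_2R(0,0)=\chi(0)<0$ and $\partial_2R(1,0)=\chi(1)<0$, so the implicit function theorem applied to $R(u,w(u))=0$ at $(0,0)$ and $(1,0)$ places $\{0,1\}\subset\cI^\circ$ with $w$ continuous and $w(0)=w(1)=0$. Openness of $\cD_F^\circ$ around $(0,0)$ and $(1,0)$ then ensures $(u,w(u))\in\cD_F^\circ$ in small neighborhoods of $0$ and $1$, so $u\in\cJ$ in these neighborhoods and $\{0,1\}\subset\cJ^\circ=\cD_h^\circ$. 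Differentiability on $\cD_h^\circ$ (part (b) of Definition~\ref{def:Essential_Smoothness}) is automatic: by Theorem~\ref{Thm:wm_convergence}, $h$ extends to a CGF of an infinitely divisible random variable and is hence analytic on the interior of its domain.

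\textbf{Step 3 (reduction of steepness and case analysis).}
By Lemma~\ref{lem:h_not_zero}, $h$ is strictly convex on $\cJ^\circ$. At a finite boundary point $u^\ast$ of $\cJ^\circ$, the convex slope bound $h'(u)\geq(h(u)-h(u_0))/(u-u_0)$ shows $|h'(u)|\to+\infty$ provided $h(u)\to+\infty$; the steepness task thus reduces to establishing this divergence. I would distinguish three sub-cases. \emph{Case (i):} $u^\ast\in\cI^\circ$; then $(u^\ast,w(u^\ast))\in\partial\cD_F$ (otherwise $u^\ast\in\cJ^\circ$ by continuity of $w$ and openness of $\cD_F^\circ$), and lower semicontinuity of $F$ along the curve $(u,w(u))\to(u^\ast,w(u^\ast))$ yields $\liminf h(u)\geq F(u^\ast,w(u^\ast))=+\infty$. \emph{Case (ii):} $u^\ast\in\cI\cap\partial\cI$ with $(u^\ast,w(u^\ast))\in\cD_F^\circ$; maximality of $\cI$ forces $\partial_2R(u^\ast,w(u^\ast))=0$, so $w'(u)=-\partial_1R/\partial_2R\to\pm\infty$, and the chain rule $h'(u)=\partial_1F+\partial_2F\cdot w'(u)$ combined with strict positivity of $\partial_2F$ on $\cD_F^\circ$ (established in Step 4) forces $|h'(u)|\to+\infty$ directly. \emph{Case (iii):} $u^\ast\in\partial\cI\setminus\cI$; by openness of $\cD_R$ the curve $w(u)$ must escape to $+\infty$ (the sign following from Lemma~\ref{Lem:existence}(c)).

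\textbf{Step 4 (resolving the main obstacle --- Case (iii)).}
The hard part is Case (iii): proving $F(u,w(u))\to+\infty$ when $w(u)\to+\infty$. Here the hypothesis $w\mapsto F(0,w)\not\equiv 0$ is crucial. Inspecting the L\'evy-Khintchine form $\partial_2F(u,w)=b_2+\int\xi_2\E^{u\xi_1+w\xi_2}\,m(\dd\xi)$ together with $F(0,0)=0$ (by A3), this hypothesis rules out the degenerate scenario $b_2=0$ with $m$ supported on $\{\xi_2=0\}$; hence either $b_2>0$ or $m$ charges $\{\xi_2>0\}$. This gives $\partial_2F>0$ strictly on $\cD_F^\circ$ (used in Case (ii)) and $F(u,w)\to+\infty$ as $w\to+\infty$ along any ray in $\cD_F$ (used in Case (iii)). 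Combined with steepness of $F$, which forces $\|\nabla F(u,w(u))\|\to+\infty$ as the curve approaches $\partial\cD_F$, and the convex support inequality $F(u,w)\geq F(u_0,w_0)+\langle\nabla F(u_0,w_0),(u-u_0,w-w_0)\rangle$, this yields $F(u,w(u))\to+\infty$ along any escape curve, completing the steepness proof. The technical heart of the argument is therefore the boundary geometry of $\cD_F$ and $\cD_R$ together with the behavior of the zero-curve $w(u)$ near their limits; I expect the escape-to-infinity analysis in Case (iii) to be the main obstacle.
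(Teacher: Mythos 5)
Your outline identifies the right hypotheses and the right overall strategy (use openness of $\cD_R$, steepness of $F$, and $F(0,\cdot)\not\equiv 0$ to force steepness of $h$; use A5 and $\{(0,0),(1,0)\}\subset\cD_F^\circ$ to place $\{0,1\}$ in $\cJ^\circ$), and Steps 2 and 4's observation that $F(0,\cdot)\not\equiv 0$ forces $\partial_2 F>0$ on $\cD_F^\circ$ is exactly the paper's Claim 4. However, there are genuine gaps in the steepness argument. In Case (i), the assertion ``$F(u^*,w(u^*))=+\infty$'' does not follow from $(u^*,w(u^*))\in\partial\cD_F$: the boundary of $\cD_F$ may contain points at which $F$ is finite (i.e., $u^*\in\cJ$), and then your slope bound $h'(u)\geq(h(u)-h(u_0))/(u-u_0)$ yields nothing because $h(u_n)\to h(u^*)<\infty$. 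The paper resolves this by passing through the gradient of $F$: when $\sup\cJ<\sup\cI$, steepness of $F$ forces $\|\nabla F(u_n,w_n)\|\to\infty$, and then the monotonicity inequality $\langle\nabla F(u_n,w_n)-\nabla F(1,0),(u_n-1,w_n)\rangle\geq 0$ together with $\partial_2 F\geq 0$ gives $h'(u_n)\geq\partial_2 F(u_n,w_n)\bigl(w'(u_n)-w_n/(u_n-1)\bigr)$; divergence of either $\partial_1 F$ or $\partial_2 F$ then yields $h'(u_n)\to\infty$. Your support-inequality substitute in Step 4, with a \emph{fixed} base point, cannot capture this because it does not convert $\|\nabla F\|\to\infty$ into a lower bound on $h'$. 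Similarly, in Case (ii) the step ``maximality of $\cI$ forces $\partial_2 R=0$'' is unjustified as stated: $\partial_2 R<0$ would indeed be excluded by the implicit function theorem, but you must also rule out the $0/0$ case $\partial_1 R=\partial_2 R=0$, which the paper does by a convexity argument (the supporting-hyperplane inequality at $(u_+,w_+)$ contradicts $R(u,0)<0$ for $u\in(0,1)$).

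Your Step 1 for $u\notin\cI$ also takes a different route from the paper and has a gap: you need to show that $\psi(\cdot,u,0)$ actually explodes in finite time (or that $\phi(t,u,0)/t\to\infty$) rather than merely that $\psi$ grows; since $1/R\to 0$ near $\partial\cD_R$ under the explosion hypothesis, the implicit integral $\int_0^{\psi(t)}\dd\eta/R(u,\eta)=t$ does not immediately give a finite explosion time, and you have not argued why $\int_0^t F(u,\psi(s,u,0))\,\dd s/t\to\infty$ in this regime. The paper sidesteps the ODE analysis entirely for this step: having established in Step (I) that the extension of $h|_\cJ$ by $+\infty$ is steep, it uses that $X_{t_n}$ is infinitely divisible, extracts a weak limit $X$ of the associated divisors $\wh X_n$, identifies the CGF $H$ of $X$ with $h$ on $\cJ$, argues $\cD_H=\cJ$ from steepness, and then applies Skorokhod representation and Fatou's lemma to conclude $\liminf_n\frac{1}{t_n}\log\EE[\E^{uX_{t_n}}]\geq H(u)=\infty$ for $u\notin\cJ$. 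That argument is cleaner and avoids the finite-time-explosion issue entirely.
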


\begin{corollary}
\label{cor:LCGFh}
Let $(X,V)$ be a non-degenerate affine stochastic volatility
process that satisfies assumption~A5
and assume that 
$w\mapsto F(0,w)$
is not identically equal to zero.
If either of the following conditions 
holds
\begin{enumerate}[(i)]
\item $\mu$ has exponential moments of all orders, $F$ is steep, and $\cD_F^\circ$
contains $(0,0)$ and $(1,0)$, \label{Cond3}
%$m,\mu$ have exponential moments of all orders \label{Cond3}
\item  $(X,V)$ is a diffusion, \label{Cond4}
%$(X,V)$ is a diffusion with $a_{11} \neq 0$,
\end{enumerate}
then the function 
$h$ 
is well-defined by~\eqref{eq:DefOfh}
for every 
$u\in\bbR$
with effective domain 
$\cD_h=\cJ$.
Moreover
$h$
is essentially smooth 
and
$\{0,1\} \subset \cD_h^\circ$.  
\end{corollary}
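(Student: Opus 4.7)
My plan is to deduce Corollary~\ref{cor:LCGFh} directly from Theorem~\ref{thm:LCGFh}. In each of the two cases it will suffice to verify the three structural hypotheses of that theorem: that $R$ explodes at the boundary of its effective domain $\cD_R$, that $F$ is steep, and that $\{(0,0),(1,0)\}\subset\cD_F^\circ$. The remaining hypothesis of Theorem~\ref{thm:LCGFh}, namely that $w\mapsto F(0,w)$ is not identically zero, is common to both statements and therefore already available.

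For case (i), I would show that $\cD_R=\bbR^2$, from which openness of $\cD_R$ (and the ``explodes at the boundary'' property) follows vacuously. Using the L\'evy--Khintchine representation~\eqref{Eq:FR_form} together with assumption A3 (which forces $\gamma=0$), $R(u,w)$ is the sum of a quadratic form, a linear form and an integral against $\mu$. The polynomial contributions are finite on all of $\bbR^2$. Splitting the integral at $\|\xi\|=1$, the small-jump part is controlled by the standard second-order Taylor estimate and the L\'evy integrability of $\mu$, while the large-jump part is finite precisely because $\mu$ has exponential moments of all orders. Steepness of $F$ and the inclusion $\{(0,0),(1,0)\}\subset\cD_F^\circ$ are assumed outright in (i), so Theorem~\ref{thm:LCGFh} applies.

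For case (ii), the assumption that $(X,V)$ is a diffusion forces $m\equiv0$ and $\mu\equiv0$, so the L\'evy--Khintchine decomposition~\eqref{Eq:FR_form} reduces $F$ and $R$ to quadratic polynomials in $(u,w)$. Consequently $\cD_F=\cD_R=\bbR^2$; in particular $\cD_R$ is trivially open, $\{(0,0),(1,0)\}\subset\cD_F^\circ$, and steepness of $F$ holds vacuously because $\cD_F$ has empty boundary. Theorem~\ref{thm:LCGFh} again yields the conclusion.

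The only point that I would treat with some care, and hence the ``hard part'' of what is otherwise a short verification, is the precise reading of ``exponential moments of all orders'' for a L\'evy measure $\mu$ on $D=\bbR\times\Rplus$ in case (i): it must be interpreted as $\int_{\|\xi\|>1}\E^{\langle\xi,z\rangle}\,\mu(\dd\xi)<\infty$ for every $z\in\bbR^2$, since the small-jump contribution is automatically integrable by the L\'evy property irrespective of tail behaviour. With this convention fixed, both cases reduce to a direct check of the hypotheses of Theorem~\ref{thm:LCGFh}.
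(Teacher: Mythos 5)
Your proof is correct and follows the same overall route as the paper: reduce both cases to Theorem~\ref{thm:LCGFh} by verifying that $\cD_R$ is open (i.e.\ $\cD_R=\bbR^2$), that $F$ is steep, and that $\{(0,0),(1,0)\}\subset\cD_F^\circ$, with the ``$w\mapsto F(0,w)$ not identically zero'' hypothesis carried over unchanged. Your verification of $\cD_R=\bbR^2$ in case~(i) (splitting the jump integral at $\|\xi\|=1$) is exactly what is needed, and your reading of ``exponential moments of all orders'' as $\int_{\|\xi\|>1}\E^{\langle\xi,z\rangle}\mu(\dd\xi)<\infty$ for all $z\in\bbR^2$ is the intended one. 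The only place you diverge from the paper is in case~(ii): the paper writes out $F(u,w)=a_{11}u^2/2+b_1u+b_2w$ and argues ``$F$ is steep if $b_2\neq0$,'' then excludes $b_2=0$ via the $F(0,\cdot)$ hypothesis; you instead observe that steepness is vacuous whenever $\cD_F=\bbR^2$ (empty boundary means the defining limit condition is trivially satisfied), which is the cleaner reading of Definition~\ref{def:Essential_Smoothness}(c) and dispenses with the case split on $b_2$. This is a genuinely tidier phrasing, though it does not change the substance: the paper's appeal to $b_2\neq0$ is in fact needed elsewhere in the proof of Theorem~\ref{thm:LCGFh} (Claim~4) and is supplied there by the same $F(0,\cdot)$ hypothesis you correctly carry over.
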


\noindent \textit{Proof of Corollary~\ref{cor:LCGFh}.}  
Note that either of
the
conditions~\eqref{Cond3}
or~\eqref{Cond4} implies 
that $\cD_R = \bbR^2$ and hence $\cD_R$ is open. 
Therefore~\eqref{Cond3} and the assumptions of Corollary~\ref{cor:LCGFh}
imply the assumptions of Theorem~\ref{thm:LCGFh}.
If~\eqref{Cond4} holds, 
then $(X,V)$ is a diffusion and
\[F(u,w) = a_{11}\frac{u^2}{2} + b_1 u + b_2 w\qquad \text{with}\quad 
a_{11},b_2 \ge
0\quad\text{and}\quad b_1 \in \bbR. \]
Clearly $\cD_F^\circ = \bbR^2$  
contains the set 
$\{(0,0),(1,0)\}$
and
$F$ is steep if $b_2$ is non-zero. 
In the case 
$b_2 = 0$,  
the map
$w \mapsto F(0,w)$
is identically equal to zero, 
which contradicts the assumption in 
Corollary~\ref{cor:LCGFh}.
Thus Corollary~\ref{cor:LCGFh} follows from Theorem~\ref{thm:LCGFh}.~\exit
%Note that either of the conditions in~\eqref{Cond3} and~\eqref{Cond4} 
%implies that $\cD_F = \cD_R = \bbR^2$, which in turn implies the assumptions of 
%Theorem~\ref{thm:LCGFh}. Thus Corollary~\ref{cor:LCGFh} follows from Theorem~\ref{thm:LCGFh}.

\medskip

\noindent \textit{Proof of Theorem~\ref{thm:LCGFh}.}  
The proof of this theorem is in two steps. In step (I)
we show that 
$\{0,1\}\subset\cJ^\circ$
and that,
if we extend 
$h|_{\cJ}$
by
$+\infty$
to 
$\bbR\setminus\cJ$,
we obtain an essentially smooth convex function.
In step (II) of the proof we show that 
the limit in definition~\eqref{eq:DefOfh}
exists for any
$u\in\bbR$
as an extended real number and 
that
definition of 
$h$
in~\eqref{eq:DefOfh}
%this extension 
%of 
%$h$
agrees 
for every 
$u\in\bbR$
with the extension of 
$h|_{\cJ}$
from the first part of the proof.

\noindent \textit{Step (I)}.
Throughout this step we abuse notation by using
$h$
to denote the extension of 
$h|_{\cJ}$
to 
$\bbR$
described above.
Theorem~\ref{Thm:wm_convergence}
and the remark following it imply that 
$h$
is essentially smooth (see Definition~\ref{def:Essential_Smoothness})
if it is steep. We will prove the steepness of 
$h$ 
at the right endpoint 
$u_+ = \sup\{u:u\in \cJ\}$
of the interval 
$\cJ$
and show that
$1\in\cJ^\circ$. 
The left endpoint 
$u_- = \inf\{u:u\in \cJ\}$ 
and the fact
$0\in\cJ^\circ$
can be treated by a 
completely symmetrical argument. 

Let 
$(u_n)_{n\in\bbN}$ 
be a sequence in 
$\cJ^\circ$ 
converging to 
$u_+$. 
We use the shorthand notation 
$w_n = w(u_n)$ 
and 
$w_+ = \lim_{n \to \infty} w_n$,
where
$u\mapsto w(u)$
is the function given in Lemma~\ref{Lem:existence}
(note that the limit 
$w_+$
exists but may be infinite since 
$u\mapsto w(u)$
is a cumulant generating function of a random variable
and $\cJ\subset\cI=\cD_w$).
Since $u\mapsto w(u)$ is
convex on 
$\cD_w$,
the value
$w_+$ 
is independent of the choice of sequence
$(u_n)_{n\in\bbN}$.

\noindent \textbf{Claim 1.} The inequalities $u_+ > 1$ and $w_+ > 0$ hold. \\
Indeed, since $R(1,0) = 0$ 
by assumption A3, we get that $(1,0) \in \cD_R=\cD_R^\circ$. 
%Since
%$R$ 
%is assumed to explode at the boundary, this implies
%$(1,0) \in \cD_R^\circ$. 
Assume now that $u_+ = 1$. Then 
by Lemma~\ref{Lem:existence} we have
$w_+ = 0$ 
and 
$(u_+,w_+) \in \cD_R^\circ$. 
Since
$R$ 
is continuously differentiable in 
$\cD_R^\circ$ and $\partial_2 R(1,0)=\chi(1) < 0$ 
by assumption A5, %property~\eqref{PropFR2}. 
the implicit function theorem 
and Lemma~\ref{Lem:existence}
imply that 
$u_+$
is in the set
$\cD_w^\circ=\cI^\circ$.
Since
$(1,0)\in\cD_F^\circ$,
there exists
$u\in\cI^\circ$
such that
$u>u_+$
and
$(u,0)\in\cD_F^\circ$.
Identity~\eqref{eq:ExpressionFor_h}
in Theorem~\ref{Thm:wm_convergence} 
therefore
implies that
$h(u)<\infty$,
which contradicts the definition of
$u_+$.
%($\cD_h$
%is the effective domain of 
%$h$).
Therefore $u_+ > 1$. 
Lemma~\ref{Lem:existence}
implies that the sequence 
$(w_n)_{n\in\bbN}$
is eventually (certainly when
$u_n>1$) non-decreasing 
and strictly positive. 
This yields that $w_+ > 0$
and the claim follows.

Discarding finitely many elements we may assume that $u_n > 1$ and $w_n > 0$ for all $n$. 
If $u_+$ is infinite, it is not in the boundary of
$\cJ$
and the steepness of $h$ follows.
If 
$u_+$
is finite but $w_+$ is infinite, 
identity~\eqref{eq:ExpressionFor_h}
and the assumption
that
$w\mapsto F(0,w)$
is non-zero
imply
$\lim_{n\to\infty}h(u_n)=\infty$.
The steepness of $h$ follows from the convexity of
$h$.
Therefore
in the rest of the proof we can assume 
\begin{equation}
\label{eq:Important_Reduction}
u_+\in(1,\infty)\qquad
\text{and}\qquad
w_+\in(0,\infty)
\end{equation}
without loss of generality.

\noindent \textbf{Claim 2.} 
%For any
%$u\in(1,\infty)$
The following statements hold true:
\begin{enumerate}[(a)]
\item \label{Item:a} if $u \in \cI^\circ$, where
$\cI$
is defined in Lemma~\ref{Lem:existence},
then $(u,w(u)) \in \cD_R^\circ$ and
\begin{equation}\label{Eq:R_derivatives}
0 = \partial_1 R(u,w(u)) + \partial_2 R(u,w(u)) w'(u);
\end{equation}
\item \label{Item:b} if $u \in \cJ^\circ\cap(1,\infty)$, where
$\cJ$
is defined in Theorem~\ref{Thm:wm_convergence},
then $(u,w(u)) \in \cD_F^\circ$ and
\begin{equation}\label{Eq:F_derivatives}
h'(u) = \partial_1 F(u,w(u)) + \partial_2 F(u,w(u)) w'(u).
\end{equation}
\end{enumerate}

The statement in~\eqref{Item:a}
follows from Lemma~\ref{Lem:existence},
assumption 
$\cD_R= \cD_R^\circ$
and the chain rule.
To prove the first statement in~\eqref{Item:b},
note that 
$u\in\cJ^\circ\cap(1,\infty)\subset \cI^\circ$
and hence 
$\cI^\circ\setminus[0,1]\neq \emptyset$.
Lemma~\ref{Lem:existence}
therefore implies that the function
$w:\cJ\to\bbR$
is strictly convex
with 
$w(0)=w(1)=0$
and therefore strictly increasing
on $\cJ^\circ\cap(1,\infty)$.
Pick
$u'\in\cJ^\circ\cap(1,\infty)$
such that
$u'>u$
and note that 
$(u',w(u'))\in\cD_F$
(by the definition of 
$\cJ$)
and
$(u',0)\in\cD_F$
(by representation~\eqref{Eq:FR_form} 
and
$0\leq w(u')$).
Assumption 
$(1,0)\in\cD_F$
in the theorem and the fact
$w(u)<w(u')$
imply
that the point 
$(u,w(u))$
lies in the interior of
the triangle with vertices
$(u',w(u')), (1,0), (u',0)$
in the convex set 
$\cD_F$.
Therefore 
$(u,w(u))\in\cD_F^\circ$.
Equality~\eqref{Eq:F_derivatives}
follows by the chain rule. This proves the claim.

\noindent \textbf{Claim 3.} 
The following holds
for any strictly increasing sequence
$(u_n)_{n\in\bbN}$
with limit
$u_+$:
\begin{enumerate}[(a)]
\item \label{Claim3:a} if 
$u_+ = \sup \cJ = \sup \cI$, then
$$|w'(u_n)| \to \infty \qquad\text{as}\quad n \to \infty;$$ 
\item \label{Claim3:b} if 
$u_+ = \sup \cJ < \sup \cI$,
then
$$\|\nabla F(u_n,w_n)\| \to \infty \qquad\text{as}\quad n \to \infty.$$ 
\end{enumerate}

To prove the claim, assume that the conclusion of~\eqref{Claim3:a}
does not hold. Since the sequence 
$(w'(u_n))_{n\in\bbN}$
is non-decreasing by Lemma~\ref{Lem:existence},
there exists a finite positive number, denoted by
$w'(u_+)>0$,
such that
$\lim_{n\to\infty}w'(u_n)=w'(u_+)$.
Claim~2\eqref{Item:a},
applied to
$u=u_n$,
implies 
$(u_n,w_n)\in\cD_R^\circ$
for all 
$n\in\bbN$
and hence by~\eqref{eq:Important_Reduction}
$(u_+,w_+)$
is in the closure of 
$\cD_R$.
However
$(u_+,w_+)$
cannot be in the boundary of 
$\cD_R$
since
$R$
explodes at the boundary by assumption and it holds
$\lim_{n\to\infty}R(u_n,w_n)=0$
(recall that $R(u_n,w_n)=0$ for all 
$n\in\bbN$).
Therefore 
$(u_+,w_+)\in\cD_R^\circ$.
The derivatives
$\partial_1 R, \partial_2 R$
are hence continuous at 
$(u_+,w_+)$
and, in the limit
as 
$n\to\infty$,
formula~\eqref{Eq:R_derivatives}
and the fact
$w'(u_+)>0$
imply
\begin{equation*} %\label{Eq:R_derivatives}
\partial_2 R(u_+,w_+) = -\frac{\partial_1 R(u_+,w_+)}{w'(u_+)}.
\end{equation*}
Therefore either $0=\partial_1 R(u_+,w_+)=\partial_2 R(u_+,w_+)$ 
or both partial derivatives 
at
$(u_+,w_+)$
are non-zero. Suppose the former.
For an arbitrary 
$u \in (0,1)$,
the convexity of $R$ yields
\[- R(u,0) = R(u_+,w_+) - R(u,0) \le \nabla R(u_+,w_+) \cdot (u_+ - u, w_+)' = 0.\]
Since $R(u,0) < 0$ (see assumptions~A3~and~A4), this leads to a contradiction. 
Hence $\partial_1 R(u_+,w_+)$ and $\partial_2 R(u_+,w_+)$ are non-zero and 
related by the equality above.
By the implicit 
function theorem there exists 
an open interval
$N$ 
containing 
$u_+$ 
and a function 
$\wt{w}:N\to\bbR $,
such that 
$R(u,\wt{w}(u)) = 0$ 
for all 
$u\in N$. 
This contradicts the maximality 
of $\cI$ and 
proves Claim~3\eqref{Claim3:a}.
%we conclude that $w'(u_+) = \infty$, and hence that $h$ is steep at $u_+$.
Note that under assumption of Claim~3\eqref{Claim3:b},
%In the second case $u_+ = \sup \cJ < \sup \cI$, 
%and we can not guarantee that $|w'(u_n)| \to \infty$. 
$(u_+,w_+)$ must be a boundary point of $\cD_F$. 
Since $F$ is steep this implies 
$\|\nabla F(u_n,w_n)\| \to \infty$
and the claim follows.
%and hence that $h$ is steep at $u_+$. 

Theorem~\ref{thm:LCGFh} follows easily if
$\|\nabla F(u_n,w_n)\| \to \infty$ as
$n\to\infty$.
Indeed,
assumption~\eqref{eq:Important_Reduction}
and Lemma~\ref{Lem:existence}
imply that the sequence
$w'(u_n)>0$
is strictly increasing and positive
for all large 
$n$.
Since 
$F(0,0)=F(1,0)=0$,
Proposition~\ref{prop:Properties_F_and_R}~(B)
implies that 
$\partial_1 F(1,0)\geq0$.
The L\'evy-Khintchine representation of 
$F$
in~\eqref{Eq:FR_form}
implies
$\partial_2 F(u,w)\geq0$
for all
$(u,w)\in\cD_F$.
Since the gradient 
%$\nabla F$
of the convex function 
$F$
is \emph{monotone}
on 
$\cD_F^\circ$
and 
$(u_n,w_n),(1,0)\in\cD_F^\circ$
for all 
$n$,
we find 
% \[\left(\nabla F(u,w) - \nabla F(1,0) \right) \cdot \left(u - 1,w\right)^\top \ge 0\]
%  for all $u,w$ in $\cD_F^\circ$.
%  Rearranging terms this implies that
$$\partial_1 F(u_n,w_n) (u_n-1) + \partial_2 F(u_n,w_n) w_n \ge \partial_1 F(1,0)(u_n-1) + \partial_2
F(1,0) w_n\geq 0.$$
%  Let $u_n \uparrow u_+$. Without loss of generality we may assume
%  that $u_n > 1$ and $w_n > 0$. We also know that $\partial_1 F(1,0)
%  > 0$, since $F \neq 0$ with $F(0,0) = F(1,0) = 0$ and hence by (C) is a strictly convex
%  function. From the L\'evy-Khintchine form of $F(u,w)$ we obtain that
%  \[\partial_2 F(u,w) = b_2 + \int_{D \setminus \set{0}}\xi_2e^{u\xi_1 + w \xi_2}m(d\xi).\]
%  where $b_2 \ge 0$ and $m$ is supported on $\RR \times \Rplus$. Clearly $\partial_2 F(u,w) \ge 0$,
%  and we obtain
Therefore by~\eqref{Eq:F_derivatives}  we obtain
%\begin{equation*}\label{Eq:pf_eq}
%\partial_1 F(u_n,w_n) (u_n-1) + \partial_2 F(u_n,w_w) w_n \ge 0.
%\end{equation*}
%  Moreover, $\partial_2 F(u,w) = 0$ for $u \neq 0$ only if $b_2 = 0$ and $m$ is supported on $\RR
%  \times \set{0}$. In this case we obtain, again from the L\'evy-Khintchine form of $F$, that $F(0,w)
%  = 0$ for all $w \in \RR$, in contradiction to the assumption of the theorem. We conclude that
%  $\partial_2 F(u,w) > 0$ for all $u \neq 0$. Combining \eqref{Eq:pf_eq} with
%  \eqref{Eq:R_derivatives2} we get
\begin{equation}\label{Eq:h_bound}
h'(u_n) %= \partial_1 F(u_n,w_n) + \partial_2 F(u_n,w_n) w'(u_n) 
\ge \partial_2 F(u_n,w_n)
\left(w'(u_n) - \frac{w_n}{u_n-1}\right).
\end{equation}
%It also follows from the
%strict convexity of $u \mapsto w(u)$ that
%\[w(u_n) = w(u_n) - 1 < w'(u_n) (u_n - 1),\]
%and hence that the factor $\left(w'(u_n) - \frac{w_n}{u_n-1}\right)$
%is strictly positive, even in the limit $u_n \uparrow u_+$. As shown above,
%$|w'(u_n)| \to \infty$ or $|\nabla F(u_n,w_n)| \to \infty$. In
%the first case $w'(u_n)$ is eventually positive and hence $w'(u_n) \to \infty$. Using the last term
%in \eqref{Eq:h_bound} and the fact that $\partial_2 F(u_n,w_n)$ stays strictly positive shows that
%$h'(u_n) \to \infty$ and hence that $h$ is steep. In the case that $|\nabla F(u_n,w_n)| \to \infty$
%either $\partial_2 F(u_n,w_n) \to \infty$, or $\partial_2 F(u_n,w_n)$ stays bounded, but
%$|\partial_1 F(u_n,w_n)| \to \infty$. In the first case the last term in \eqref{Eq:h_bound} shows
%that $h'(u_n) \to \infty$. In the second case the middle term in \eqref{Eq:h_bound} goes to
%$\infty$, and shows that also $h'(u_n)$ does.
If
$|\partial_2 F(u_n,w_n)| \to \infty$ as
$n\to\infty$,
the steepness of 
$h$
at
$u_+$
follows from~\eqref{eq:Important_Reduction},~\eqref{Eq:h_bound}
and the fact that 
$w'(u_n)$
is strictly positive and increasing.
%This concludes the proof of Theorem~\ref{thm:LCGFh}
%if
%$|\partial_2 F(u_n,w_n)| \to \infty$ as
%$n\to\infty$.
If
$|\partial_1 F(u_n,w_n)| \to \infty$ as
$n\to\infty$,
then,
since
$\partial_2 F\geq0$
on
$\cD_F^\circ$,
formula~\eqref{Eq:F_derivatives}
implies
Theorem~\ref{thm:LCGFh}.
%$\|\nabla F(u_n,w_n)\| \to \infty$ as
%$n\to\infty$.

If $\|\nabla F(u_n,w_n)\|$ does not tend to infinity as
$n\to\infty$,
the following facts hold:
$(u_+,w_+)\in\cD_F^\circ$
(since
$(u_+,w_+)$
is in the closure of 
$\cD_F$
and 
$F$
is steep), 
$|w'(u_n)| \to \infty$ as
$n \to \infty$
(by Claim~3) 
and
$\partial_2 F(u,w)\geq0$
for all
$(u,w)\in\cD_F^\circ$
(by L\'evy-Khintchine representation~\eqref{Eq:FR_form}
of 
$F$).
%Claim~2\eqref{Item:b}
%and
%assumption~\eqref{eq:Important_Reduction}
%imply that points
%for all
%$n\in\bbN$
%and 
%yields
%$\partial_2 F(u,w)\geq0$
%for all
%$(u,w)\in\cD_F^\circ$.
The next claim plays a key role in the proof 
of steepness of
$h$.\\ 
\noindent \textbf{Claim 4.} 
If $\|\nabla F(u_n,w_n)\|$ does not tend to infinity as
$n\to\infty$,
then
$\partial_2 F(u_+,w_+)>0$.
In particular there exists 
$\delta>0$
such that
$\partial_2 F(u_n,w_n)>\delta$
for all large
$n\in\bbN$.

Note first that
$\partial_2 F(u_+,w_+)$
is well-defined since
$(u_+,w_+)\in\cD_F^\circ$.
If
$\partial_2 F(u_+,w_+)=0$,
differentiation under the integral 
in~\eqref{Eq:FR_form}
implies that
$b_2=0$
and the support of 
$m$
is contained in the set
$\bbR\times\{0\}$.
This would imply that 
$F(0,w)=0$
for all 
$w\in\bbR$,
which contradicts the assumption in the theorem.
Hence the claim follows.

To conclude the proof of Step~(I), 
it remains to note that equality~\eqref{Eq:F_derivatives}
applied at
$u_n$
together with Claim~4
yield the steepness of 
$h$
in the case 
$\|\nabla F(u_n,w_n)\|$ does not tend to infinity.

\medskip

\noindent \textit{Step (II)}. We now prove that for any
$u\in \bbR\setminus \cJ$,
the limit in~\eqref{eq:DefOfh}
is equal to 
$+\infty$. This will conclude the proof of Theorem~\ref{thm:LCGFh}.

Let $t_n \downarrow 0$ and define 
$h_n(u) = \frac{1}{t_n}\log \EE{\E^{uX_{t_n}}}$ 
for all 
$u \in
\bbR$. 
We know that 
$\lim_{n\to\infty}h_n(u) =h(u)$ 
for all
$u\in\cJ$.
Moreover by Step~(I),
$h(u)$ is steep at the boundary of $\cJ$
and
$0\in\cJ^\circ$. 
Since $X_t$ is infinitely divisible for all $t \ge 0$
(see~\cite[Theorem 2.15]{Schachermayer}), 
there exist random variables 
$\wh{X}_n$ % = X_n^{\star 1/t_n}$ 
such that 
$h_n(u) = \log \EE{\E^{u\wh{X}_{n}}}$ 
(i.e. $h_n$ is the cumulant generating function of 
$\wh{X}_n$).  
Therefore there exists 
a random variable $X$ such that 
$\wh{X}_n \to X$ 
in distribution and,
if we define 
%The pointwise convergence of 
%$h_n(u)$ to $h(u)$ on 
%$\cJ$ containing $0$ in its interior we conclude that there exists 
%a random variable $X$ such that 
%$\wh{X}_n \to X$ 
%weakly. Moreover, define 
$H(u) = \log \EE{\E^{uX}}$
for all
$u \in \bbR$, 
the equality
$H(u) = h(u)$ holds on $\cJ$. 
Since 
$H$
is a cumulant generating function, 
it is lower semicontinuous and convex, and in particular 
continuously differentiable in the interior of 
its effective domain $\cD_H$. 
But $h$ is steep 
and hence non-differentiable at the boundary of 
$\cJ$. 
Therefore it follows that
$\cD_H = \cJ$ 
and 
$H(u) = \infty$ for all 
$u \in \bbR \setminus \cJ$. 
However for all 
$u \in \bbR \setminus \cJ$,
the Skorokhod representation theorem 
and Fatou's lemma imply 
\[\liminf_{n \to \infty} \EE{\E^{u \wh{X}_n}} \ge \EE{\E^{u X}} = \E^{H(u)} = \infty.\]
Hence the equality
$\lim_{n \to \infty} \frac{1}{t_n}\log \EE{\E^{uX_{t_n}}} = \infty$ 
holds for 
$u \in \bbR \setminus \cJ$. This concludes the proof of Theorem~\ref{thm:LCGFh}.~\exit

\subsection{Degenerate affine stochastic volatility models}
%\subsection{Exponential L\'evy models}
\label{subsec:Levy}

The remark following
Definition~\ref{Def:ASVM}
implies that in the case of a degenerate affine stochastic volatility
process
$(X,V)$,
the model 
$S=\E^X$
is an exponential L\'evy model
(note also that~A5 in 
this setting fails).
Therefore 
Definition~\ref{Def:ASVM}
and~\eqref{Eq:FR_form}
%We now investigate the subclass of affine stochastic volatility
%models 
%$(X,V)$
%where the volatility process 
%$V$
%degenerates to a constant,
%i.e.
%$V_t=V_0\in\Rplus$
%for all 
%$t\geq0$.
%This implies
%that
%the parameters satisfy
%$b_2=\beta_2=\alpha_{21}=\alpha_{22}=0$
%and the measures 
%$m(\dd\xi),\nu(\dd\xi)$
%are supported in the subset
%$\{\xi_2=0\}$
%of the state-space 
%$D=\bbR\times\Rplus$
%of 
%$(X,V)$.
%L\'evy-Khintchine representation~\eqref{Eq:FR_form}
%implies that in such a model neither
%$F$
%nor
%$R$
%depend on 
%$w$.
%In spite of the apparent 
%degeneracy,
%assumptions A1~--~A4
%are satisfied
%if 
%$R$
%is not trivial, which can always be assumed
%without loss of generality
%if 
%$X$
%itself 
%is non-degenerate.
%It follows from Riccati equations~\eqref{Eq:gen_Riccati}
%and A2
%that the cumulant generating function is linear in time and takes the form
%$
%\log \EE\left[\exp\left(uX_t\right)|X_0\right] = t(F(u)+V_0R(u))+uX_0.
%$
%The representation for 
%$F$
%and
%$R$
imply that the characteristic exponent 
$h(u):=F(u,0)$
of 
$X$
possesses a L\'evy-Khintchine 
%representation.
%The characterisation theorem for regular affine 
%processes~\cite[Theorem~2.7]{Schachermayer}
%yields that 
%$X$
%itself must be a L\'evy process
%with a 
characteristic triplet
$(\delta,\sigma^2,\nu)$,
where
$\delta,\sigma\in\bbR$
and
$\nu$
a L\'evy measure on
$\bbR\setminus\{0\}$,
and satisfies
\begin{eqnarray}
\label{eq:CharExpLevy}
h(u) & =  &
\log \EE\left[\exp\left(uX_1\right)|X_0=0\right]\\
& = & u\delta+\frac{1}{2}\sigma^2u^2+\int_{\bbR\setminus\{0\}}
\left(\E^{u\xi_1}-1-u\frac{\xi_1}{\xi_1^2+1}\right)\nu(\dd\xi_1)
\nonumber
\end{eqnarray}
for all
$u\in\bbC$
where the expectation 
%in the definition
%$\Lambda(u)=\log \EE\left[\exp\left(uX_1\right)|X_0=0\right]$
exists.
%The resulting model for the risky security is
%an exponential L\'evy model
%$S=\E^X$.
The independence and stationarity of the increments of  
$X$
imply that 
$S$
is a martingale if and only if
$h(1)=0$,
which is, in terms of the characteristic triplet
$(\delta,\sigma^2,\nu)$,
equivalent to 
$ \int_{(1,\infty)}\E^{\xi_1}\nu(\dd\xi_1)<\infty$
and
\begin{eqnarray}
\label{eq:Exp_Levy_Drift}
%& \int_{\bbR\setminus(-1,1)}\E^{\xi_1}\nu(\dd\xi_1)<\infty\qquad\text{and} & \\
 \delta=-\frac{1}{2}\sigma^2-\int_{\bbR\setminus\{0\}}
\left(\E^{\xi_1}-1-\frac{\xi_1}{\xi_1^2+1}\right)\nu(\dd\xi_1). 
\end{eqnarray}

The limiting cumulant generating function for the family
of random variables
$(X_t/t)_{t\geq1}$,
defined by the limit in~\eqref{eq:0_in_Domain_Assumption},
is in the case 
when 
$X$
is a L\'evy process
given trivially by 
$h$
in~\eqref{eq:CharExpLevy}, 
which therefore also coincides with
definition~\eqref{eq:DefOfh}.
The martingale condition for
$S=\E^X$
and the convexity of 
$h$
imply that 
$[0,1]$
is contained in the effective domain 
$\cD_h$.
In the case of affine stochastic volatility models
we had to establish Theorem~\ref{thm:LCGFh} to obtain 
sufficient condition for the set
$\{0,1\}$
to be contained in the interior 
$\cD_h^\circ$
of the effective domain
of $h$.
In the setting of L\'evy processes 
it is  well known 
(see e.g.~\cite[Theorem 25.17]{Sato})
that
$\{0,1\}\subset\cD_h^\circ$
if and only if 
\begin{eqnarray}
\label{eq:Eff_Dom_Levy}
\int_{(-\infty,-1)}\E^{u_0\xi_1}\nu\left(\dd \xi_1\right) +
\int_{(1,\infty)}\E^{u_1\xi_1}\nu\left(\dd \xi_1\right) <\infty\qquad
\text{for some}\quad
u_0<0\quad\text{and}\quad
u_1>1.
\end{eqnarray}
Condition~\eqref{eq:Eff_Dom_Levy}
implies that the interior of the effective domain
of 
$h$
is of the form
$\cD_h^\circ=(u_-,u_+)$
for some
$u_-\in[-\infty,0)$
and
$u_+\in(1,\infty]$.
It is therefore clear that 
$h$
is steep if and only if
\begin{eqnarray}
\label{eq:Lambda_Steep}
\int_{(-\infty,-1)}|\xi_1|\E^{\xi_1u_-}\nu\left(\dd \xi_1\right)=\infty &\text{and}& 
\int_{(1,\infty)}\xi_1\E^{\xi_1u_+}\nu\left(\dd \xi_1\right) =\infty,
\end{eqnarray}
where the integrals are taken to be infinite 
if the integrands take infinite value for some finite
$\xi_1$
(e.g. if 
$u_-=-\infty$
or
$u_+=\infty$).
Note also that
under assumption~\eqref{eq:Exp_Levy_Drift},
the L\'evy process 
$X$
is non-constant if and only if 
there is a Brownian component
(i.e. $\sigma^2>0$)
or its paths are discontinuous 
%there are non-trivial jumps
(i.e.
$\nu\neq0$).
Hence the equality
$$
h''(u) = \sigma^2 +
\int_{\bbR\setminus\{0\}}\xi_1^2\,\E^{u\xi_1}\,\nu(\dd\xi_1),%\qquad \text{for all}
\qquad u\in\cD_h^\circ,
$$
implies
$h''(u)>0$
for all
$u\in\cD_h^\circ$. 
These arguments therefore imply 
Proposition~\ref{prop:OptionPricesLevy}, which
is the analogue of Theorem~\ref{thm:LCGFh}
for L\'evy processes. 

\begin{proposition}
\label{prop:OptionPricesLevy}
Let 
$X$
be a non-constant L\'evy process 
(i.e. 
%$X$
%is 
the first component of a degenerate affine stochastic volatility process) 
with state-space 
$\bbR$,
%with 
%the volatility component which is a constant process),
characteristic triplet
$(\delta,\sigma^2,\nu)$
and the characteristic exponent 
$h$
given by~\eqref{eq:CharExpLevy}. 
Assume further that conditions~\eqref{eq:Exp_Levy_Drift},~\eqref{eq:Eff_Dom_Levy}
and~\eqref{eq:Lambda_Steep}
are satisfied.
Then the interior 
$\cD_h^\circ$
of the effective domain 
of
$h$
is an interval
$(u_-,u_+)$,
where
$u_-\leq u_+$
are extended real numbers, 
$h$
is a convex essentially smooth limiting cumulant generating function for the
family
$(X_t/t)_{t\geq1}$
and the set
$\{0,1\}$
is contained in the interior of %the effective domain
$\cD_h$.
Furthermore,
$h$
is smooth on 
$\cD_h^\circ$
and
$h''(u)>0$
for all
$u\in\cD_h^\circ$.
\end{proposition}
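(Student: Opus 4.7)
The plan is to assemble the observations laid out in the paragraph preceding the statement into a single structured argument; most of them are standard facts about L\'evy processes (cf.~\cite{Sato}). First, stationarity and independence of the increments of $X$ give $\EE[\E^{uX_t}]=\exp(th(u))$ on $\cD_h$, so the limit in~\eqref{eq:0_in_Domain_Assumption} applied to $Z_t=X_t/t$ coincides with $h$ itself with no genuine asymptotic analysis required. Convexity of $h$ (via H\"older's inequality, or by direct inspection of the L\'evy-Khintchine representation) then implies that $\cD_h$ is an interval and its interior has the form $(u_-,u_+)$.

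Next I would verify the inclusion $\{0,1\}\subset\cD_h^\circ$. The martingale condition~\eqref{eq:Exp_Levy_Drift} gives $h(1)=0$, so $\{0,1\}\subset\cD_h$. Hypothesis~\eqref{eq:Eff_Dom_Levy} exhibits $u_0<0<1<u_1$ at which the large-jump exponential integrability holds; since the Gaussian piece and the small-jump piece in~\eqref{eq:CharExpLevy} are finite everywhere, one concludes $u_0,u_1\in\cD_h$, hence $u_-\le u_0<0<1<u_1\le u_+$ as claimed. Smoothness of $h$ on $\cD_h^\circ$ then follows from differentiation under the integral in~\eqref{eq:CharExpLevy}, justified by locally uniform exponential dominating functions on every compact subinterval of $\cD_h^\circ$. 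The resulting formula for $h''$ quoted in the text is then manifestly strictly positive, since non-constancy of $X$ forces either $\sigma^2>0$ or $\nu\neq 0$.

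The one substantive step is essential smoothness, which reduces to verifying steepness of $h$ at $u_\pm$. Differentiating~\eqref{eq:CharExpLevy} produces
\[
h'(u)=\delta+\sigma^2 u+\int_{\bbR\setminus\{0\}}\Bigl(\xi_1\E^{u\xi_1}-\frac{\xi_1}{\xi_1^2+1}\Bigr)\,\nu(\dd\xi_1),\qquad u\in\cD_h^\circ.
\]
The contribution of $|\xi_1|\le 1$ is bounded uniformly in $u$ on a one-sided neighbourhood of either endpoint, so steepness at $u_+$ is governed by $\int_{(1,\infty)}\xi_1\E^{u\xi_1}\,\nu(\dd\xi_1)$ and symmetrically at $u_-$ by $\int_{(-\infty,-1)}|\xi_1|\E^{u\xi_1}\,\nu(\dd\xi_1)$. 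The main obstacle is interchanging the limit $u\to u_\pm$ with the integral; this is handled by monotone convergence, using that $\xi_1\E^{u\xi_1}$ is increasing in $u$ when $\xi_1>0$ and $|\xi_1|\E^{u\xi_1}$ is decreasing in $u$ when $\xi_1<0$. Once the limit is inside, hypothesis~\eqref{eq:Lambda_Steep}---together with the stated convention that these integrals are infinite when $u_\pm$ is infinite---delivers $|h'(u)|\to\infty$ at both endpoints, proving essential smoothness. Combining the four ingredients yields the proposition.
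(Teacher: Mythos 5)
Your proposal assembles exactly the facts the paper itself uses in the paragraph preceding the proposition (triviality of the limit in~\eqref{eq:0_in_Domain_Assumption} by stationary independent increments, the Sato criterion~\eqref{eq:Eff_Dom_Levy} for $\{0,1\}\subset\cD_h^\circ$, $h''>0$ from non-constancy, and steepness via the large-jump tails of $\nu$), so it is correct and follows essentially the same route. The only place you go beyond the paper is in spelling out the splitting of the integral into small- and large-jump parts and the monotone-convergence justification for passing the limit $u\to u_\pm$ through the integral defining $h'$, a step the paper leaves implicit when it asserts that $h$ is steep if and only if~\eqref{eq:Lambda_Steep} holds.
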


%========================================================================================
\section{Rate functions and the option prices far from maturity}
\label{sec:OptionPrices}
In this section %Section~\ref{subsec:Affine} 
we describe the limiting behaviour of 
a family of European options under an affine stochastic 
volatility model
$S=\E^X$. 
%Section~\ref{subsec:Levy} deals with the same
%problem in the setting of exponential L\'evy models. 
These results will be used in Section~\ref{sec:App}
to prove the formulae for the limiting implied volatility 
smile.

In order to understand the limits of the vanilla option prices
far from maturity
in an affine stochastic volatility model
$S$,
we will need to apply the large deviation
principle for the family
$(X_t/t)_{t\geq1}$
%denotes the log-stock of the underlying model,
under a risk-neutral measure 
$\PP$
\textit{and} under
the measure 
$\widetilde{\PP}$,
known as the \textit{share measure}.\footnote{The name
stems from the fact that under 
$\widetilde\PP$ the numeraire asset is the risky security
$S=\E^X$.}
Recall that 
for every
$t\geq0$
the measure 
$\widetilde{\PP}$
is equivalent to 
$\PP$
%is defined on a filtered probability space
%$(\Omega, \cF, (\cF_t)_{t\geq0},\PP)$
%and
%$\widetilde{\PP}$
%is, when restricted to the 
on the 
$\sigma$-field
$\cF_t$
and the Radon-Nikodym derivative is given by
$$
\frac{\dd \widetilde{\PP}}{\dd\PP}\Big\lvert_{\cF_t} \>= \> \E^{X_t}.
$$
%The process
%$(\E^{X_t})_{t\geq0}$
%is a martingale
%by assumption,
%since $\PP$
%is an equivalent martingale measure,
%and hence
%$\widetilde{\PP}$
%is a well-defined probability measure
%on
%$(\Omega,\cF_t)$
%for any
%$t\in[0,\infty)$. 
%Note that in general 
%we may not be able to extend 
%$\widetilde\PP$
%to the 
%$\sigma$-field
%$\cF_\infty=\sigma(\cup_{t\geq0}\cF_t)\subset\cF$
%(such an extension is for example not possible in the Black-Scholes 
%model when 
%$X$
%is a Brownian motion with drift, since the Radon-Nikodym 
%derivative is not a uniformly integrable martingale).
%However the statement that 
%the family
%%$(X_t/t)_{t\geq1}$
%satisfies a large deviation principle 
%under 
%$\widetilde\PP$
%is well-posed, since it is an assertion about the behaviour of the
%family of laws on 
%$\bbR$
%given by random variables 
%$X_t/t$,
%for each 
%$t\geq1$,
%under the measure 
%$\widetilde\PP$.
The limiting cumulant generating function 
for
$(X_t/t)_{t\geq0}$
under 
$\widetilde\PP$
is defined by
\begin{eqnarray}
\label{eq:h_tilede_Def} \wt h(u) & :=  & \lim_{t\to\infty}\frac{1}{t}\log \wt\EE\left[\E^{uX_t}\right].
\end{eqnarray}
The function 
$\wt h$
and its effective domain 
$\cD_{\widetilde h}$
satisfy
\begin{eqnarray}
\label{eq:hTilda_Def}
\wt h(u) = h(u+1)\qquad\text{and}\qquad\cD_{\wt h}=\{u\in\bbR\>:\>1+u\in\cD_h\},  
\end{eqnarray} 
where 
$h$
is 
the limiting cumulant generating
function defined in~\eqref{eq:DefOfh}
and
$\cD_h$
is its effective domain.
Note that 
$0\in\cD_{\wt h}^\circ$
if and only if 
$1\in\cD_{h}^\circ$.
The identity in~\eqref{eq:hTilda_Def}
implies the following relationship between 
the Fenchel-Legendre transforms (see~\eqref{eq:DefFenchelLegendreTransf} 
for the definition) of %the limiting cumulant generating functions
$h$
and
$\wt h$:
%which will play a crucial role in the proof of Theorem~\ref{thm:OptionPrices}:
%and Proposition~\ref{prop:OptionPricesLevy}:
\begin{equation}
\label{eq:LambaStarInPTilde}
\widetilde{h}^*(x)=h^*(x)-x\quad \text{for all}\quad x\in\bbR.
\end{equation}
%The relationship in~\eqref{eq:LambaStarInPTilde}
%between 
%$\widetilde{h}^*$
%and 
%$h^*$
%will play a crucial role in the proofs of Theorem~\ref{thm:OptionPrices}
%and Proposition~\ref{prop:OptionPricesLevy}.

%\subsection{Affine stochastic volatility models}
%\label{subsec:Affine}

Theorem~\ref{thm:OptionPrices} below describes
the limiting behaviour of certain European derivatives under 
an affine stochastic volatility process 
$(X,V)$.
Before we state it, we collect the following facts.

\begin{remarks}
\noindent (i) If
$(X,V)$
is a non-degenerate affine stochastic volatility process that satisfies assumptions
%A1~--~A5. 
%
%Theorem~\ref{thm:OptionPrices} below describes
%the limiting behaviour of certain European derivatives under this model.
%Before we state it, note that 
%by 
of Lemma~\ref{lem:h_not_zero},
then
the limiting cumulant generating functions
$h$
and
$\wt h$
(defined in~\eqref{eq:DefOfh}
and~\eqref{eq:h_tilede_Def}
respectively)
are strictly convex with strictly positive second derivatives
in the interior of their respective effective domains.
%The identity in~\eqref{eq:hTilda_Def}
%implies that the same holds for the function
%$\wt h$.
Remark~(B) after Definition~\ref{def:Essential_Smoothness}
implies that         
their convex duals
$h^*$
and
$\wt h^*$ 
are strictly convex and differentiable with respective unique global minima 
attained at
\begin{eqnarray}
\label{eq:GlobalMinima}
x^* = h'(0) &\text{and} & 
\wt x^* = \wt h'(0) = h'(1).
\end{eqnarray}
Lemma~\ref{lem:h_not_zero}
also implies
the following inequalities:
\begin{eqnarray}
\label{eq:GlobalMinima_Inequalities}
x^*& <\quad 0\quad < & \wt x^*. 
\end{eqnarray}
\noindent (ii) If
$(X,V)$
is a degenerate affine stochastic volatility process that satisfies assumptions
of Proposition~\ref{prop:OptionPricesLevy},
then
$h'$
is strictly increasing on 
$\cD_h^\circ$
and its image is equal to 
$\bbR$.
The unique
global minima 
of the Fenchel-Legendre transforms
$h^*$
and
$\wt h^*$
are (by Remark~(B) after Definition~\ref{def:Essential_Smoothness})
explicitly given by
%implies that the Fenchel-Legendre transforms
%$h^*$
%and
%$\wt h^*$
%are differentiable and have respective unique global minima at
\begin{subequations}
\label{eq:MinimumLevyFLTransform}
\begin{align}
%\begin{eqnarray}
%\label{eq:MinimumLevyFLTransform}
x^* & =  h'(0) = -\frac{1}{2}\sigma^2-
\int_{\bbR\setminus\{0\}}\left(\E^{\xi_1}-1-\xi_1\right)\,\nu(\dd\xi_1),\\
\wt x^* & =  h'(1) = \frac{1}{2}\sigma^2 +
\int_{\bbR\setminus\{0\}}\left(\E^{\xi_1}(\xi_1-1)+1\right)\,\nu(\dd\xi_1).
%\label{eq:MinimumLevyFLTransformStar}
\end{align}
\end{subequations}
%\end{eqnarray}
Formulae~\eqref{eq:MinimumLevyFLTransform} %and~\eqref{eq:MinimumLevyFLTransformStar}
show that the inequalities in~\eqref{eq:GlobalMinima_Inequalities}
hold also in the degenerate case.
\\
%It is clear that the inequality 
%$x^*<0$
%is satisfied.
%Furthermore Theorem~\ref{thm:GartnerEllis}
%implies that the LDP holds for the family 
%$(X_t/t)_{t\geq1}$
%under the measure
%$\PP$
%with the good rate function 
%$h^*$.
%\\
%\noindent (iii) The LDP for 
%$(X_t/t)_{t\geq1}$
%also holds under the share measure 
%$\wt \PP$,
%defined in the beginning of 
%Section~\ref{sec:OptionPrices},
%with the good rate function 
%$\wt h^*$,
%which satisfies
%$\wt h^*(x) = h^*(x) - x$
%for all 
%$x\in\bbR$.
%The unique minimum 
%$\wt x^*$
%of 
%$\wt h^*$
%is given by the formula
%\begin{eqnarray}
%\label{eq:MinimumLevyFLTransformStar}
%\wt x^* = h'(1) = \frac{1}{2}\sigma^2 +
%\int_{\bbR\setminus\{0\}}\left(\E^{\xi_1}(\xi_1-1)+1\right)\,\nu(\dd\xi_1).
%\end{eqnarray}
%It is clear that the inequality 
%$\wt x^*>0$
%is satisfied.\\
\noindent (iii) In the case of the Black-Scholes model 
(i.e. 
$\nu=0$),
the assumptions of Proposition~\ref{prop:OptionPricesLevy}
are satisfied. 
The effective domains of 
$h_{\mathrm{BS}}$
and
$\wt h_{\mathrm{BS}}$
are equal to 
$\bbR$ 
and the following %explicit 
formulae hold
\begin{eqnarray}
h_{\mathrm{BS}}(u)  =  \frac{1}{2}\sigma^2(u^2-u)& \text{and} &
\wt h_{\mathrm{BS}}(u)  =  \frac{1}{2}\sigma^2(u^2+u) 
\quad\text{for}\quad u\in\bbR,\\
h^*_{\mathrm{BS}}\left(x;\sigma^2\right)  = \frac{1}{2\sigma^2} \left(x+\frac{\sigma^2}{2}\right)^2
& \text{and} &
\wt h^*_{\mathrm{BS}}\left(x;\sigma^2\right)  = \frac{1}{2\sigma^2}
\left(x-\frac{\sigma^2}{2}\right)^2
\quad\text{for}\quad x\in\bbR.
\label{eq:BS_Polynomial}
\end{eqnarray}
Therefore we have
$x^*=-\sigma^2/2$
and
$\wt x^*=\sigma^2/2$.
\end{remarks}

\begin{theorem}
\label{thm:OptionPrices} 
Let 
$(X,V)$
be a non-degenerate (resp. degenerate) 
affine stochastic volatility process that
satisfies the assumptions of Theorem~\ref{thm:LCGFh}
or Corollaries~\ref{cor:LCGFh}~(i),~\ref{cor:LCGFh}~(ii)
(resp. Proposition~\ref{prop:OptionPricesLevy}).
Then the family of random variables 
$(X_t/t)_{t\geq1}$
satisfies the LDPs 
under the measures
$\PP$
and
$\widetilde{\PP}$
with the respective good rate functions
$h^*$
and
$\widetilde{h}^*$,
where
$h$
is given in~\eqref{eq:DefOfh}
(resp.~\eqref{eq:CharExpLevy})
and 
$\widetilde{h}$
%is given 
in~\eqref{eq:h_tilede_Def}.
Fix
$x\in\bbR$,
%be a
%fixed
%number
%and 
let
$x^*,\wt x^*$
be as in~\eqref{eq:GlobalMinima}
(resp.~\eqref{eq:MinimumLevyFLTransform}) % and~\eqref{eq:MinimumLevyFLTransformStar})
and denote
$S=\E^{X}$
and
$y^+:=\max\{0,y\}$
for 
$y\in\bbR$.
\begin{enumerate}
\item[(i)] 
The asymptotic behaviour of a put option
with strike
$S_0\E^{xt}$
is given by the following formula
%$0\in\mathcal{D}^o_\Lambda$
%\begin{eqnarray*}
%\label{eq:call}
%\lim_{t\to\infty}t^{-1}\log\mathbb{E}\left[\left(\E^{xt}-\E^{X_t}\right)^+\right]  & =   &
%x-\Lambda^*\left(x\right)\quad  \text{if }x\leq\Lambda'\left(0\right).
%\end{eqnarray*}
\begin{eqnarray*}
\label{eq:call}
\lim_{t\to\infty}t^{-1}\log\EE\left[\left(S_0\E^{xt}-S_t\right)^+\right]  & =   &
\left\{
\begin{array}{ll}
x-h^*\left(x\right)\quad & \text{if }x\leq x^*,\\
x\quad & \text{if }x>x^*.
\end{array}
\right.
\end{eqnarray*}
\item[(ii)] 
The asymptotic behaviour of a call option,
struck at 
$S_0\E^{tx}$,
is given by the formula
\begin{eqnarray*}
%\label{eq:CoveCall}
\lim_{t\to\infty}t^{-1}\log\EE\left[\left(S_t-S_0\E^{xt}\right)^+\right]  & =  & 
\left\{
\begin{array}{ll}
%x-h^*\left(x\right)\quad & \text{if }x\geq \wt x^*,\\
-\wt h^*\left(x\right)\quad & \text{if }x\geq \wt x^*,\\
0\quad & \text{if }x<\wt x^*. 
\end{array}
\right.
\end{eqnarray*}
\item[(iii)] 
The asymptotic behaviour of a covered call option
with payoff
$S_t-(S_t-S_0\E^{tx})^+$
is given by %the formula
\begin{eqnarray*}
\label{eq:CoveCall}
\lim_{t\to\infty}t^{-1}\log\left(S_0-\EE\left[\left(S_t-S_0\E^{xt}\right)^+\right]\right) & =  & 
\left\{
\begin{array}{ll}
0\quad & \text{if }x> \wt x^*,\\
x-h^*\left(x\right)\quad  &\text{if }x\in\left[x^*,\wt x^*\right],\\
x\quad & \text{if }x< x^*. 
\end{array}
\right.
\end{eqnarray*}
\end{enumerate}
Furthermore the convergence in
(i)-(iii)
is uniform in 
$x$
on compact subsets of 
$\bbR$.
%where
%$\Lambda_+\left(-1\right)$
%and
%$\Lambda'_+\left(-1\right)$
%are defined as in
%Lemma~\ref{lem:perturbation}.
\end{theorem}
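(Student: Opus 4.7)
The plan is to first establish the large deviation principle (LDP) for $(X_t/t)_{t\geq 1}$ under both the risk-neutral measure $\PP$ and the share measure $\widetilde{\PP}$, and then derive the three option price asymptotics via elementary sandwich estimates combined with these LDPs. Under $\PP$ the assumptions of Theorem~\ref{thm:LCGFh}, Corollary~\ref{cor:LCGFh} or Proposition~\ref{prop:OptionPricesLevy} supply exactly the hypotheses of the G\"artner-Ellis Theorem~\ref{thm:GartnerEllis}: $h$ is essentially smooth, lower semicontinuous and $0 \in \cD_h^\circ$. This yields the LDP with rate function $h^*$, whose goodness is automatic from~\eqref{eq:Compac_LEvel_Sets}. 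For the share measure, the identity $\widetilde{\EE}[\E^{uX_t}]=\EE[\E^{(u+1)X_t}]$ coming from $\dd\widetilde{\PP}/\dd\PP|_{\cF_t}=\E^{X_t}$ gives $\widetilde{h}(u)=h(u+1)$; since $1\in\cD_h^\circ$, essential smoothness of $\widetilde{h}$ is inherited from $h$, and G\"artner-Ellis again yields the LDP under $\widetilde{\PP}$ with good rate function $\widetilde{h}^*(x)=h^*(x)-x$, as in~\eqref{eq:LambaStarInPTilde}.

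The option asymptotics will then follow from the change-of-measure decomposition
\[
\EE\bigl[(S_t-K)^+\bigr] \;=\; S_0\,\widetilde{\PP}[X_t/t \geq x] \;-\; S_0\E^{xt}\,\PP[X_t/t \geq x], \qquad K=S_0\E^{xt},
\]
together with the analogous formula for puts and put-call parity. For (i) with $x \leq x^*$, the upper bound $\EE[(K-S_t)^+]\leq K\,\PP[X_t/t\leq x]$ yields $\limsup \leq x-h^*(x)$, while for small $\epsilon>0$ the lower bound $\EE[(K-S_t)^+]\geq K(1-\E^{-\epsilon t})\,\PP[X_t/t\leq x-\epsilon]$ gives $\liminf \geq x-h^*(x-\epsilon)\to x-h^*(x)$ by continuity of $h^*$ (using monotonicity of $h^*$ on $(-\infty,x^*]$). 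For $x>x^*$ the upper bound $\EE[(K-S_t)^+]\leq K$ gives $x$, and the lower bound $(K/2)\PP[S_t\leq K/2]$ gives the same, since the probability factor tends to one by the LDP as $X_t/t$ concentrates at $x^*<x$. Part (ii) is symmetric under $\widetilde{\PP}$: for $x\geq\widetilde{x}^*$ sandwich $\EE[(S_t-K)^+]$ between $S_0\widetilde{\PP}[X_t/t\geq x]$ and $S_0(1-\E^{-\epsilon t})\widetilde{\PP}[X_t/t\geq x+\epsilon]$ to obtain the rate $-\widetilde{h}^*(x)$; for $x<\widetilde{x}^*$ the first term of the decomposition above tends to $S_0$ while the second is bounded by $\E^{-t\widetilde{h}^*(x)}\to 0$, so $\EE[(S_t-K)^+]\to S_0$.

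Part (iii) follows from the dual identity
\[
S_0-\EE\bigl[(S_t-K)^+\bigr] \;=\; S_0\,\widetilde{\PP}[X_t/t<x] \;+\; S_0\E^{xt}\,\PP[X_t/t\geq x]:
\]
for $x>\widetilde{x}^*$ the first summand tends to $S_0$ and the second decays, for $x<x^*$ the opposite holds, with dominant rate $x$, and for $x\in[x^*,\widetilde{x}^*]$ the two summands share the common exponential order $\E^{-t\widetilde{h}^*(x)}=\E^{t(x-h^*(x))}$ by monotonicity of $h^*$ and $\widetilde{h}^*$ on either side of their respective minima. Uniform convergence on compact subsets follows because the sandwich bounds are continuous in $x$, the limiting rate functions $h^*$ and $\widetilde{h}^*$ are continuous and finite on $\bbR$ by~\eqref{eq:Compac_LEvel_Sets} together with essential smoothness of $h$, and the option prices are monotone in $x$; pointwise convergence of a monotone family to a continuous limit upgrades to uniform convergence on compact sets.

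The main obstacle is the boundary behaviour at the critical points $x^*$ and $\widetilde{x}^*$ where the relevant rate function vanishes and several exponentially decaying contributions collide, so the $\epsilon$-shift in the sandwich must be handled with care via continuity of $h^*$ and Remark~(B) after Definition~\ref{def:Essential_Smoothness}. Here Lemma~\ref{lem:h_not_zero} plays a crucial role by ruling out the degenerate case $h\equiv 0$ and ensuring the strict inequalities~\eqref{eq:GlobalMinima_Inequalities}, which cleanly separate the three regimes appearing in (i)-(iii) and guarantee that the quantities $x-h^*(x)$, $-\widetilde{h}^*(x)$ and $x$ agree at the transitions.
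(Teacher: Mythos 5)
Your proof is correct and follows essentially the same route as the paper: apply G\"artner-Ellis under $\PP$ and under the share measure $\widetilde{\PP}$ (via $\widetilde{h}(u)=h(u+1)$), then sandwich each payoff between indicators shifted by a small $\varepsilon$ to transfer the LDP estimates to the option prices, treating the regimes separated by $x^*$ and $\widetilde{x}^*$. The only cosmetic differences are that you exploit the exact change-of-measure identity $S_0-\EE[(S_t-K)^+]=S_0\widetilde{\PP}[X_t/t<x]+S_0\E^{xt}\PP[X_t/t\geq x]$ where the paper writes the corresponding two-sided inequalities, and that you cite the P\'olya/Dini monotonicity argument for uniform convergence where the paper spells out the explicit two-endpoint estimate; both are equivalent.
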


\begin{remarks} (I) The formulae in (i), (ii) and (iii) of 
Theorem~\ref{thm:OptionPrices}
are continuous in 
$x$
since
the value of 
the Fenchel-Legendre transforms 
$h^*$
(resp. 
$\wt h^*$)
at
$x^*$
(resp. 
$\wt x^*$)
is equal to zero.
Note further that the formulae in 
Theorem~\ref{thm:OptionPrices}
are independent of the starting value
$(X_0,V_0)$
of the %stochastic volatility 
model.  \\
\noindent (II) The reason for studying the limiting behaviour of the put, call and 
covered call in Theorem~\ref{thm:OptionPrices}
lies in the fact that these payoffs yield non-trivial limits
on complementary subintervals of
$\bbR$,
thus 
%If we consider each of the European payoffs on the appropriate interval
obtaining a non-trivial limit for every
$x\in\bbR$.
This limit will be
compared 
in Section~\ref{sec:App}
with the corresponding limit %of the same European security
in the Black-Scholes
model, which will yield
%This comparison of the non-trivial limits will yield the final
the formula for the limiting implied volatility smile
under affine stochastic volatility models.
\end{remarks}

\begin{proof}
Assume first that 
$(X,V)$
is a non-degenerate affine stochastic volatility process.
The limiting cumulant generating function 
$h$
satisfies~\eqref{eq:0_in_Domain_Assumption}
and is essentially smooth either by Theorem~\ref{thm:LCGFh}
or by Corollaries~\ref{cor:LCGFh}~(i),~\ref{cor:LCGFh}~(ii).
Hence its convex dual %Fenchel-Legendre transform 
$h^*$
is non-negative
(by~\eqref{eq:DefFenchelLegendreTransf} and the fact
$h(0)=0$),
has
compact level sets (by~\eqref{eq:Compac_LEvel_Sets}
and $0\in\cD_h^\circ$)
and is differentiable on 
$\cD_{h^*}=\bbR$
with strictly increasing first derivative
(by Lemma~\ref{lem:h_not_zero} and Remark~(B) following Definition~\ref{def:Essential_Smoothness}).
Therefore by Theorem~\ref{thm:GartnerEllis}
the family
$(X_t/t)_{t\geq1}$
satisfies the LDP under
$\PP$
with the good rate function
$h^*$.
%Theorem~\ref{thm:LCGFh}
%further implies that 
Since
$1\in\cD_h^\circ$,
by~\eqref{eq:hTilda_Def}
the function
$\wt h$
satisfies 
the condition in~\eqref{eq:0_in_Domain_Assumption}.
Therefore 
%$\wt h$
%satisfies 
all of the assumptions 
of Theorem~\ref{thm:GartnerEllis}
hold
under
$\wt \PP$
and hence
$(X_t/t)_{t\geq1}$
satisfies the LDP 
with the good rate function
$\wt h^*$.
Furthermore
$\wt h^*$
enjoys the same regularity
on %its effective domain
$\cD_{\wt h^*}=\bbR$
as the rate function
$h^*$.
The LDPs in the degenerate case follow 
from the same argument with 
Theorem~\ref{thm:LCGFh}, Corollaries~\ref{cor:LCGFh}~(i),~\ref{cor:LCGFh}~(ii)
and Lemma~\ref{lem:h_not_zero}
replaced by Proposition~\ref{prop:OptionPricesLevy}.

We now prove the formulae in Theorem~\ref{thm:OptionPrices}.
Without loss of generality we may assume that 
$S_0=1$, i.e.
$X_0=0$.
The following inequality holds for all 
$t\geq1$
and
$\varepsilon>0$:
$$
\E^{tx}\left(1-\E^{-\varepsilon}\right)
I_{\{X_t/t< x-\varepsilon\}}
\leq 
\left(\E^{xt}-\E^{X_t}\right)^+
\leq 
\E^{tx}
I_{\{X_t/t< x\}}.
$$
Hence by taking expectations, logarithms, multiplying by
$1/t$
and applying the LDP for 
$(X_t/t)_{t\geq1}$
under
$\PP$
we obtain the following inequalities
$$
x-\inf_{y<x-\varepsilon}h^*(y)\leq 
\liminf_{t\to\infty}
\frac{1}{t}\log\EE\left[\left(\E^{xt}-\E^{X_t}\right)^+\right]\leq
\limsup_{t\to\infty}
\frac{1}{t}\log\EE\left[\left(\E^{xt}-\E^{X_t}\right)^+\right]\leq
x-\inf_{y\leq x}h^*(y).
$$
Since 
$h^*$
is continuous 
on 
$\bbR$,
strictly decreasing
for 
$x\leq x^*$
and takes value 
$0$
at
$x^*$,
the formula in Theorem~\ref{thm:OptionPrices}~(i)
holds.

We now consider the call option case.
The following inequality holds for all 
$t\geq1$
and
$\varepsilon>0$:
$$
\E^{X_t}\left(1-\E^{-\varepsilon}\right)
I_{\{X_t/t> x+\varepsilon\}}
\leq 
\left(\E^{X_t}-\E^{xt}\right)^+
\leq 
\E^{X_t}
I_{\{X_t/t> x\}}.
$$
Again by taking expectations, changing measure to 
$\widetilde{\PP}$,
applying logarithms, multiplying by
$1/t$
and applying the LDP for 
$(X_t/t)_{t\geq1}$
under
$\widetilde{\PP}$
we obtain the following inequalities
$$
-\inf_{y>x+\varepsilon}\widetilde{h}^*(y)\leq 
\liminf_{t\to\infty}
\frac{1}{t}\log\EE\left[\left(\E^{X_t}-\E^{xt}\right)^+\right]\leq
\limsup_{t\to\infty}
\frac{1}{t}\log\EE\left[\left(\E^{X_t}-\E^{xt}\right)^+\right]\leq
-\inf_{y\geq x}\widetilde{h}^*(y).
$$
Note that 
$\wt x^*$
is a global minimum 
for 
$\wt h^*$
at which value $0$
is attained.
The continuity of 
$\widetilde{h}^*$
implies 
the formula in Theorem~\ref{thm:OptionPrices}~(ii).

%We now need to deal with the 
In the case of the covered call, 
%Note that
the following simple inequalities hold for all
$x\in\bbR$:
\begin{eqnarray}
\label{eq:between}
\E^{xt}I_{\{X_t/t\geq x\}}
& \leq\quad
\E^{X_t}-\left(\E^{X_t}-\E^{tx}\right)^+\quad= &
\E^{X_t}I_{\{X_t/t< x\}}+
\E^{xt}I_{\{X_t/t\geq x\}},\\
\label{eq:Above}
\E^{X_t}I_{\{X_t/t\leq x\}}
&
\leq\quad
\E^{X_t}-\left(\E^{X_t}-\E^{tx}\right)^+
\quad \leq&
\E^{X_t},\\
\label{eq:Below}
\E^{xt}I_{\{X_t/t\geq x\}}
& \leq \quad
\E^{X_t}-\left(\E^{X_t}-\E^{tx}\right)^+
\quad \leq &
\E^{xt}.
\end{eqnarray}
Inequality~\eqref{eq:between}
and the LDP under measures
$\PP$
and
$\wt \PP$ 
imply
the inequalities
\begin{eqnarray*}
\E^{xt}\PP\left[X_t/t\geq x\right]& \leq &
1-\EE\left[\left(\E^{X_t}-\E^{xt}\right)^+\right]\quad
=\quad
\widetilde{\PP}\left[X_t/t< x\right]+
\E^{xt}\PP\left[X_t/t\geq x\right]\\
&\leq& 
\exp\left(-t\inf_{y\leq x}\widetilde{h}^*(y)+\varepsilon t\right)+
\E^{xt}\exp\left(-t\inf_{y\geq x}h^*(y)+\varepsilon t\right)
\end{eqnarray*}
for any 
$x\in\bbR$,
$\varepsilon>0$
and 
$t$
large enough.
Assume now 
$x\in\left[x^*,\wt x^*\right]$
and note  that in this case we have
$\inf_{y\geq x}h^*(y)=h^*(x)$
and
$\inf_{y\leq x}\wt{h}^*(y)=\wt h^*(x)$.
By~\eqref{eq:LambaStarInPTilde}
we obtain
$$
x+\frac{1}{t}\log\PP\left[X_t/t\geq x\right]\leq
\frac{1}{t}\log\left(1-\EE\left[\left(\E^{X_t}-\E^{xt}\right)^+\right]\right)
\leq x-h^*(x)+\varepsilon+\frac{1}{t}\log2
$$
for any 
$\varepsilon$
and all large
$t$.
Therefore we find the inequalities 
\begin{eqnarray*}
x-h^*(x)
&\leq&
\liminf_{t\to\infty}\frac{1}{t}\log\left(1-\EE\left[\left(\E^{X_t}-\E^{xt}\right)^+\right]\right)\\
&\leq&
\limsup_{t\to\infty}\frac{1}{t}\log\left(1-\EE\left[\left(\E^{X_t}-\E^{xt}\right)^+\right]\right)
\leq x-h^*(x)+\varepsilon
\end{eqnarray*}
for all
$\varepsilon>0$.
This proves the formula in Theorem~\ref{thm:OptionPrices}~(iii)
for 
$x\in[x^*,\wt x^*]$.

Assume that
$x>\wt x^*$
and 
take expectations, change measure to 
$\widetilde{\PP}$,
apply logarithms and multiply by
$1/t$
the inequalities in~\eqref{eq:Above}
to obtain the following: 
$$
\frac{1}{t}\log\wt\PP\left[X_t/t\leq x\right]\leq
\frac{1}{t}\log\left(1-\EE\left[\left(\E^{X_t}-\E^{xt}\right)^+\right]\right)
\leq 0. 
$$
Since 
$\inf_{y<x} \wt h^*(y) = 0$
the LDP for 
$(X_t/t)_{t\geq1}$
under
$\widetilde{\PP}$
implies the formula in Theorem~\ref{thm:OptionPrices}~(iii)
that corresponds to 
$x>\wt x^*$.

Finally let
$x<x^*$.
Inequalities in~\eqref{eq:Below}
imply the following
$$
x+\frac{1}{t}\log\PP\left[X_t/t\geq x\right]\leq
\frac{1}{t}\log\left(1-\EE\left[\left(\E^{X_t}-\E^{xt}\right)^+\right]\right)
\leq x. 
$$
An application of the LDP for 
$(X_t/t)_{t\geq1}$
under
$\PP$
completes the proof of part~(iii).

We now show that the limits in the theorem are uniform
in
$x$
on compact sets in 
$\bbR$.
Since the argument is similar in all the cases, we concentrate on
Theorem~\ref{thm:OptionPrices}~(i).
Let
$(x_0,y_0)$
be a finite interval in
$\bbR$
and define for any
$x\in\bbR$
and
$t\geq1$
$$V(t,x)=
t^{-1}\log\EE\left[\left(\E^{xt}-\E^{X_t}\right)^+\right] - v(x),
$$
where 
$v(x)$
denotes the continuous limit that appears %as a limit in
in Theorem~\ref{thm:OptionPrices}~(i).
It follows that 
$$
V(t,x_0)+v(x_0)-v(x)
\leq V(t,x)
\leq
V(t,y_0)+v(y_0)-v(x) \qquad\text{for any}\quad x\in(x_0,y_0).
$$
We therefore find
\begin{eqnarray}
\label{eq:Ineq_Uniform_conv}
\lvert V(t,x) \rvert
&\leq& \max\left\{ \lvert V(t,y_0)\rvert,
\lvert V(t,x_0) \rvert
\right\} 
+
\max\left\{
\lvert v(x)-v(x_0)\rvert,
\lvert v(x)-v(y_0)\rvert
\right\}.
\end{eqnarray}
Since we have already proved that 
$\lim_{t\to\infty}|V(t,x)|=0$
for every $x$
and the limiting function
$v(x)$
is continuous, and hence uniformly continuous on every compact
set, the inequality in~\eqref{eq:Ineq_Uniform_conv}
implies that the logarithms of the put option prices converge to 
$v(x)$
uniformly in 
$x$
on compact sets in 
$\bbR$.

\end{proof}

%========================================================================================
\section{Asymptotic behaviour of the implied volatility} %in affine stochastic volatility models}
\label{sec:App}
The value 
$C(S_0,K,t,\sigma^2)$
of the European call option with strike
$K$
and expiry 
$t$
in 
a Black-Scholes 
model
(i.e. degenerate affine stochastic volatility model
without jumps, see Section~\ref{subsec:Levy})
%model~\eqref{eq:MainStockModel}
%with
%$X_t=X_0-t\sigma^2/2+\sigma W_t$
%and
%$\sigma>0$,
%$r=d=0$)
is given by 
the Black-Scholes formula
\begin{eqnarray}
\label{eq:BS_Formula}
C(S_0,K,t,\sigma^2) & = & S_0\,N(d_+)-K\,N(d_-),\qquad
\text{where}
\quad
d_{\pm}=\frac{\log(S_0/K)\pm\sigma^2t/2}{\sigma\sqrt{t}}
\end{eqnarray}
and
$N(\cdot)$
is the standard normal cumulative distribution function.
Let 
$S=\E^{X}$
be an affine stochastic volatility model
from Definition~\ref{Def:ASVM}
%defined in~\eqref{eq:MainStockModel}
with the starting point 
$S_0=\E^{X_0}$.
%and assume
%$r=d=0$.
The \textit{implied volatility}
in the model 
$S=\E^X$
for the strike 
$K>0$
and
maturity 
$t>0$
is the unique positive number
$\wh \sigma(K,t)$
that satisfies the following equation in the variable
$\sigma$:
\begin{eqnarray}
\label{eq:DefOfImpiedVol}
C\left(S_0,K,t,\sigma^2\right)=\EE\left[\left(S_t-K\right)^+\right].
\end{eqnarray}
%It is not difficult to see that 
Implied volatility is well-defined since 
the function
$\sigma\mapsto C\left(S_0,K,t,\sigma^2\right)$
is strictly increasing for 
positive
$\sigma$
(i.e. \textit{vega} of a call option
$\frac{\partial C}{\partial \sigma}(S_0,K,t,\sigma^2)
=S_0 N'(d_+)\sqrt{t}$
is strictly positive) and
the right-hand side of~\eqref{eq:DefOfImpiedVol} 
lies in the image of the Black-Scholes formula
by a no-arbitrage argument.
Put-call parity, which holds since 
$S=\E^X$
is a true martingale, implies the identity
$P\left(S_0,K,t,\wh \sigma(K,t)^2\right)=\EE\left[\left(K-S_t\right)^+\right]$,
where
$P\left(S_0,K,t,\sigma^2\right)$
denotes the price of the put option in the Black-Scholes model
with volatility 
$\sigma$.

%It is well-known and it will follow from the results below
%that, 
If the affine stochastic volatility process
$(X,V)$
satisfies the assumptions of 
Theorem~\ref{thm:OptionPrices},
%or Proposition~\ref{prop:BS},
then the implied volatility has the following limit
\begin{eqnarray}
\label{eq:Fixed_Strike_Limit}
\lim_{t\to\infty}\wh \sigma(K,t)=2\sqrt{2 h^*(0)}=
2\sqrt{-2 h\left((h')^{-1}(0)\right)} %\,,
\end{eqnarray}
for any fixed strike
$K>0$,
where 
$h^*$
is the rate function of the model
%in the sense of Section~\ref{sec:LargeDeviations}
(the second equality in~\eqref{eq:Fixed_Strike_Limit} 
follows from~\eqref{eq:FromulaForTrnsform}
and~\eqref{eq:DerivativeOfRateFunction}).
Tehranchi~\cite{Tehr09} 
proved that the first equality in~\eqref{eq:Fixed_Strike_Limit}
holds uniformly 
in
$K$
on compact sets 
in
$\Rplus$
%in 
%$K$ 
for non-negative local martingales 
with cumulant generating functions 
that satisfy certain additional conditions.
Note that the limit in~\eqref{eq:Fixed_Strike_Limit}
is independent of 
$K$,
which corresponds to the well-known %phenomenon of the 
flattening of the implied volatility smile at large maturities.
The uniform limit (in $K$)
on compact subsets 
of
$\Rplus$,
given in~\eqref{eq:Fixed_Strike_Limit},
also follows from Theorem~\ref{thm:AsymVol}
for affine stochastic volatility processes
(both in non-degenerate and degenerate, i.e. L\'evy, cases).
%and exponential 
%L\'evy processes.

In order to obtain a non-trivial limit at infinity we define 
the implied volatility  
$\sigma_t(x)$
for the strike
$K=S_0\exp(tx)$,
where
$x\in\bbR$,
by
\begin{eqnarray}
\label{eq:Simga_t_Def}
\sigma_t(x) & =  & \wh \sigma\left(S_0\exp(tx),t\right).
\end{eqnarray}
We will show that if 
$(X,V)$
satisfies the assumptions of 
Theorem~\ref{thm:OptionPrices},
%or Proposition~\ref{prop:BS},
then the limiting implied volatility takes the form
\begin{eqnarray}
\label{eq:Def_Sigma_Infty}
 \sigma_\infty(x) & = & \sqrt{2}\left[\sgn(\wt x^*-x)\sqrt{\wt h^*(x)}+\sgn(x-x^*) \sqrt{h^*(x)}\right]
\qquad\text{for}\quad x\in\bbR,
\end{eqnarray}
where
$h^*$
and
$\wt h^*$
are the Fenchel-Legendre transforms (see~\eqref{eq:DefFenchelLegendreTransf} for definition)
of the limiting cumulant generating functions 
$h$
and
$\wt h$
of 
$X$
under
$\PP$
and
$\wt \PP$
respectively
%(see~\eqref{eq:DefOfh} and~\eqref{eq:h_tilede_Def})
%and~\eqref{eq:hTilda_Def}).
%under the equivalent martingale and share measures
%$\PP$
%and
%$\wt \PP$
%respectively
and 
$x^*=h'(0)$,
$\wt x^*=\wt h'(0)=h'(1)$.
The function 
$\sgn(x)$
is  by definition equal to 
$1$
if 
$x\geq0$
and
$-1$
otherwise.

\begin{remarks} (i) 
Under the assumptions of 
Theorem~\ref{thm:OptionPrices},
%or Proposition~\ref{prop:BS}
the points 
$x^*$,
$\wt x^*$
are the locations of the unique global minima of the good rate functions
$h^*$
and 
$\wt h^*$
respectively and
by~\eqref{eq:GlobalMinima_Inequalities}
satisfy 
$x^*<0<\wt x^*$.
Note that 
$\wt h^*(x)\leq h^*(x)$
for 
$x\geq0$
and 
$\wt h^*(x)\geq h^*(x)$
for 
$x\leq0$
and hence 
the following strict inequality 
$ \sigma_\infty(x)>0$
holds
for all 
$x\in\bbR$.\\
\noindent (ii) The function $\sigma_\infty:\bbR\to(0,\infty)$
in~\eqref{eq:Def_Sigma_Infty}
is chosen so that it satisfies the following identities:
\begin{eqnarray}
\label{eq:ImportantIdentity}
h^*_{\mathrm{BS}}\left(x;\sigma_\infty(x)^2\right) = h^*(x)
\qquad\text{and}\qquad
\wt h^*_{\mathrm{BS}}\left(x;\sigma_\infty(x)^2\right) = \wt h^*(x),
\qquad x\in\bbR,
\end{eqnarray}
where the polynomials 
$h^*_{\mathrm{BS}}\left(x;\sigma^2\right)$
and
$\wt h^*_{\mathrm{BS}}\left(x;\sigma^2\right)$
are given in~\eqref{eq:BS_Polynomial}.  
Quantities of importance in the proof of 
Theorem~\ref{thm:AsymVol}
will be the following partial derivatives
\begin{eqnarray}
\label{eq:PartialDeriv_Lambda_BS}
\frac{\partial h^*_{\mathrm{BS}}}{\partial \sigma^2}\left(x;\sigma^2\right)  =
\frac{\partial\wt h^*_{\mathrm{BS}}}{\partial \sigma^2}\left(x;\sigma^2\right)  
= \frac{1}{8\sigma^4} \left(\sigma^2+2x\right)\left(\sigma^2-2x\right).
\end{eqnarray}
\noindent (iii) In formula~\eqref{eq:Def_Sigma_Infty},
the function 
$\sigma_\infty(x)$
is given as a linear combination of 
$\sqrt{h^*(x)}$
and
$\sqrt{\wt h^*(x)}$.
The coefficients in this linear combination are not 
uniquely determined by identities~\eqref{eq:ImportantIdentity}
(there are four possibilities).
However definition~\eqref{eq:Def_Sigma_Infty}
is the only choice that 
implies the following important properties
\begin{eqnarray}
%\label{eq:Large_Negative}
\label{eq:Large}
 \sigma_\infty(x)^2 & < &
2|x|,\qquad \text{for}\quad x\in\bbR\setminus[x^*,\wt x^*]\quad\text{and}\quad
 \sigma_\infty(x^*)^2=2x^*,\quad
 \sigma_\infty(\wt x^*)^2=2\wt x^*,\\
%\\
%\label{eq:Large_Positive}
% \sigma_\infty(x)^2 & < &
%2x,\qquad 
%\text{for}\quad
%x>\wt x^*
%%\quad\text{and}\quad
% \sigma_\infty(\wt x^*)^2=2\wt x^*,\\
\label{eq:Noraml_Size}
 \sigma_\infty(x)^2 & > & 2|x|,
\qquad 
\text{for}\quad
x\in(x^*,\wt x^*),
\end{eqnarray}
which will be crucial in the proof of Theorem~\ref{thm:AsymVol}.
Note that~\eqref{eq:Large} and~\eqref{eq:Noraml_Size}
trivially hold in the Black-Scholes model:
$x^*=-\sigma^2/2$,
$\wt x^*=\sigma^2/2$
and the limiting smile is constant
and equal to
$\sigma$.
%and will be crucial in the proof of Theorem~\ref{thm:AsymVol}.
The inequality in~\eqref{eq:Large}
on the interval 
$(-\infty,x^*)$
follows from the identity
$ \sigma_\infty(x)^2 + 2x = 4\left[h^*(x)-\sqrt{h^*(x)^2-xh^*(x)} \right]$
and the fact that
for 
$x<x^*$
we have
$h^*(x)>0$. 
Likewise the identity
$ \sigma_\infty(x)^2 + 2x = 4\left[h^*(x)+\sqrt{h^*(x)^2-xh^*(x)} \right]$
for 
$x\in(x^*,0)$
yields half of~\eqref{eq:Noraml_Size}.
The other half of~\eqref{eq:Large} and~\eqref{eq:Noraml_Size}
follow from analogous identities
involving
$\wt h^*$.\\
%\eqref{eq:Noraml_Size}
%was established in remark (i).\\
\noindent (iv) The arbitrary choice for
$\sgn(0)=1$
in~\eqref{eq:Def_Sigma_Infty}
is of no consequence since
$h^*(x^*)=\wt h^*(\wt x^*)=0$.
\end{remarks}

\begin{theorem}
\label{thm:AsymVol}
Let 
$(X,V)$
be %the first component of 
an affine stochastic volatility
process that satisfies the assumptions of 
Theorem~\ref{thm:OptionPrices}.
%or a L\'evy process that satisfies assumptions of
%Proposition~\ref{prop:BS}.
Then we have
$$
\lim_{t\nearrow \infty}
\sigma_t(x) =  \sigma_\infty(x)
\quad\text{for any}\quad 
x\in\bbR,
$$
where
$ \sigma_t(x)$,
defined in~\eqref{eq:Simga_t_Def},
is the implied volatility in the model
$S=\E^X$
for the strike
$K=S_0\E^{xt}$,
and 
$\sigma_\infty(x)$
is given in~\eqref{eq:Def_Sigma_Infty}.
Furthermore for any
compact subset
$C$
in 
$\bbR\setminus\{x^*,\wt x^*\}$,
where 
$x^*,\wt x^*$
are defined in~\eqref{eq:GlobalMinima},
we have
$$
%\lim_{t\nearrow \infty}
\sup_{x\in C}\>\Big\lvert \sigma_t(x) -  \sigma_\infty(x)
\Big\rvert
%=0.
%\stackrel{t\uparrow \infty}{\longrightarrow}0.
\longrightarrow 0
\quad\text{as}\quad t\nearrow \infty.
$$
\end{theorem}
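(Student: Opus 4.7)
The strategy is to compare the model and Black-Scholes sides of~\eqref{eq:DefOfImpiedVol} at the level of large-deviation rates. Since the Black-Scholes model is itself a degenerate affine stochastic volatility model (with $\nu\equiv 0$), Theorem~\ref{thm:OptionPrices} applies to it with the polynomials in~\eqref{eq:BS_Polynomial} in place of $h^*,\wt h^*$ and with $-\sigma^2/2,\sigma^2/2$ in place of $x^*,\wt x^*$. For every fixed $\sigma>0$ this produces an explicit large-time asymptotic for the put, call and covered-call prices struck at $S_0\E^{xt}$, which will be matched against the model asymptotics from Theorem~\ref{thm:OptionPrices} and transferred to $\sigma_t(x)$ by strict monotonicity of the Black-Scholes call in $\sigma$ (positivity of vega).

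I would first prove pointwise convergence on $\bbR\setminus\{x^*,\wt x^*\}$ by partitioning into the three open intervals $(-\infty,x^*)$, $(x^*,\wt x^*)$, $(\wt x^*,\infty)$ and using on each the payoff whose model rate is strictly negative: the put on $(-\infty,x^*)$, the covered call on $(x^*,\wt x^*)$, the call on $(\wt x^*,\infty)$. Fix such an $x$ and let $\eps>0$ be small enough that $\sigma_\infty(x)\pm\eps$ still satisfies the regime constraint imposed by~\eqref{eq:Large} or~\eqref{eq:Noraml_Size}. The identity~\eqref{eq:ImportantIdentity} says the model and Black-Scholes rates for the selected payoff agree at $\sigma_\infty(x)$; the sign of the derivative~\eqref{eq:PartialDeriv_Lambda_BS} (positive when $\sigma^2>2|x|$, negative when $\sigma^2<2|x|$) implies that they differ strictly at $\sigma_\infty(x)\pm\eps$. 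A case check in the three regions (using put-call parity to convert covered-call comparisons into call comparisons) shows that applying Theorem~\ref{thm:OptionPrices} separately to $S=\E^X$ and to the two Black-Scholes models yields, for all large $t$, the strict inequalities $C_{\mathrm{BS}}(\sigma_\infty(x)-\eps)<\EE[(S_t-S_0\E^{xt})^+]<C_{\mathrm{BS}}(\sigma_\infty(x)+\eps)$. Positivity of vega then forces $\sigma_\infty(x)-\eps<\sigma_t(x)<\sigma_\infty(x)+\eps$, proving pointwise convergence away from $\{x^*,\wt x^*\}$.

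The hard part is the two critical points, where the large-deviation rate of a single payoff vanishes and gives only one-sided control over $\sigma_t$. To pin down $\sigma_t(x^*)$ I combine two payoffs. For $\sigma<\sigma_\infty(x^*)=\sqrt{-2x^*}$ one has $x^*<-\sigma^2/2$, so the Black-Scholes put rate $x^*-h^*_{\mathrm{BS}}(x^*;\sigma^2)$ is strictly less than the model put rate $x^*$, whence the Black-Scholes put price is eventually strictly smaller than the model put and $\sigma_t(x^*)>\sigma$. For $\sigma>\sigma_\infty(x^*)$ one has $-\sigma^2/2<x^*$, so the Black-Scholes covered-call rate $x^*-h^*_{\mathrm{BS}}(x^*;\sigma^2)$ is strictly less than the model covered-call rate $x^*$, which forces $S_0-C_{\mathrm{BS}}<S_0-C_{\mathrm{model}}$ eventually and hence $\sigma_t(x^*)<\sigma$. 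The symmetric argument at $\wt x^*$, pairing the call (for $\sigma<\sigma_\infty(\wt x^*)=\sqrt{2\wt x^*}$) with the covered call (for $\sigma>\sigma_\infty(\wt x^*)$), completes the proof of pointwise convergence.

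For uniform convergence on a compact $C\subset\bbR\setminus\{x^*,\wt x^*\}$ I would upgrade the pointwise $\pm\eps$ argument. On $C$ the continuous positive function $\sigma_\infty$ is bounded, and $|\sigma_\infty(x)^2-2|x||$ is bounded below by~\eqref{eq:Large}--\eqref{eq:Noraml_Size}, so the strict gap between the Black-Scholes and model rates at $\sigma_\infty(x)\pm\eps$ is a continuous, strictly positive function of $x$ and hence is bounded below on $C$. Combined with the uniformity in $x$ on compact subsets of the limits in Theorem~\ref{thm:OptionPrices}(i)--(iii), this yields $\sup_{x\in C}|\sigma_t(x)-\sigma_\infty(x)|\to 0$.
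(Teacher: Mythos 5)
Your proposal is correct and follows essentially the same route as the paper's proof: Theorem~\ref{thm:OptionPrices} is applied to both the model and two Black--Scholes comparison models, the identity~\eqref{eq:ImportantIdentity} combined with the sign structure of~\eqref{eq:PartialDeriv_Lambda_BS} via the inequalities~\eqref{eq:Large}--\eqref{eq:Noraml_Size} produces strict rate gaps at $\sigma_\infty(x)^2\pm\varepsilon$, vega positivity converts this into a sandwich on $\sigma_t(x)$, and the critical points $x^*,\wt x^*$ are pinned down by pairing two payoffs (put/covered-call at $x^*$, call/covered-call at $\wt x^*$). The only difference is presentational -- you state the three-interval-plus-two-points partition and the payoff assignment up front, and you do pointwise convergence before upgrading to uniform, whereas the paper works on a $\delta$-neighbourhood of a fixed $x_0>\wt x^*$ from the start and then says the remaining cases are analogous; the underlying argument is the same, in particular both rely on the uniformity in $x$ of the limits in Theorem~\ref{thm:OptionPrices}.
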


\begin{remark}
Theorem~\ref{thm:AsymVol}
implies formula~\eqref{eq:Fixed_Strike_Limit}
obtained in~\cite{Tehr09}:
define
$x=\log(K/S_0)/t$
and apply the uniform convergence of 
$ \sigma_t(x)$
on a compact neighbourhood of zero.
\end{remark}

\begin{proof}
We can assume without loss of generality that
$S_0=1$.
Assume further that
$x_0>\wt x^*$
and pick 
$\varepsilon>0$.
The goal is to find
$\delta>0$
such that the following inequality 
holds for all large 
$t$:
\begin{eqnarray}
\label{eq:EpsilonInequality}
\lvert\sigma_t(x)^2-\sigma_\infty(x)^2 \rvert< \varepsilon,
\quad\text{where}\quad
x\in(x_0-\delta,x_0+\delta).
\end{eqnarray}
Inequality~\eqref{eq:Large}
implies that there exists 
$\varepsilon'\in(0,\varepsilon)$
such that 
$ \sigma_\infty(x_0)^2+\varepsilon' <2x_0$
and
$0< \sigma_\infty(x_0)^2-\varepsilon'$.
By~\eqref{eq:PartialDeriv_Lambda_BS}
we conclude that 
$\sigma^2\mapsto \wt h^*_{\mathrm{BS}}\left(x;\sigma^2\right)$
is strictly decreasing on the interval
$(0,2x)$
and hence obtain
the following inequalities:
$$
\wt h^*_{\mathrm{BS}}\left(x_0; \sigma_\infty(x_0)^2-\varepsilon'\right)\> >\>
\wt h^*_{\mathrm{BS}}\left(x_0; \sigma_\infty(x_0)^2\right)\> >\>
\wt h^*_{\mathrm{BS}}\left(x_0; \sigma_\infty(x_0)^2+\varepsilon'\right).
$$
Since all the functions are continuous and identitiy~\eqref{eq:ImportantIdentity}
holds, there exists a 
$\delta>0$
such that 
$x_0-\delta>\wt x^*$
and the strict inequalities hold
\begin{eqnarray}
\label{eq:Estimate_For_Lambda}
-\wt h^*_{\mathrm{BS}}\left(x; \sigma_\infty(x)^2-\varepsilon'\right)\> <\>
-\wt h^*(x) \><\>
-\wt h^*_{\mathrm{BS}}\left(x; \sigma_\infty(x)^2+\varepsilon'\right)
\end{eqnarray}
for all
$x\in (x_0-\delta,x_0+\delta)$.
Theorem~\ref{thm:OptionPrices}
implies that the call option converges uniformly 
on the interval
$(x_0-\delta,x_0+\delta)$
to
$-\wt
h^*(x)=\lim_{t\to\infty}t^{-1} \log\EE\left[\left(\E^{X_t}-\E^{xt}\right)^+\right]$.
%Similarly
%Proposition~\ref{prop:BS}
In particular in the Black-Scholes model we get
$-\wt h^*_{\mathrm{BS}}\left(x; \sigma_\infty(x)^2\pm\varepsilon'\right)=
\lim_{t\to\infty}t^{-1} \log C(1,\E^{xt},t, \sigma_\infty(x)^2\pm\varepsilon')$
and the convergence is uniform
in
$x$
on 
$(x_0-\delta,x_0+\delta)$.
Since
$ \sigma_t(x)$
satisfies 
$\EE\left[\left(\E^{X_t}-\E^{xt}\right)^+\right]=C(1,\E^{xt},t, \sigma_t(x)^2)$
by definition, the inequalities 
in~\eqref{eq:Estimate_For_Lambda}
imply that 
$$
C(1,\E^{xt},t, \sigma_\infty(x)^2-\varepsilon')\> < \>
C(1,\E^{xt},t, \sigma_t(x)^2)\> < \>
C(1,\E^{xt},t, \sigma_\infty(x)^2+\varepsilon')
$$
for all 
$x\in(x_0-\delta,x_0+\delta)$
and all large
$t$.
Since the Black-Scholes formula is strictly increasing in 
$\sigma^2$
(i.e. vega is strictly positive), 
these inequalities imply~\eqref{eq:EpsilonInequality}.
This proves uniform convergence on any compact subset
$C$
of
$(\wt x^*,\infty)$.
The proof for a compact set
$C\subset(-\infty,\wt x^*)\setminus\{x^*\}$
is analogous.

We now consider convergence at the point
$\wt x^*$.
Pick any
$\varepsilon>0$
such that
$ \sigma_\infty(\wt x^*)^2=2\wt x^*>\varepsilon$.
%and 
%$ \sigma_\infty(\wt x^*)^2+\varepsilon$
Identity~\eqref{eq:BS_Polynomial}
implies that 
$$\wt h^*_{\mathrm{BS}}\left(\wt x^*; \sigma_\infty(\wt x^*)^2-\varepsilon\right)>
\wt h^*_{\mathrm{BS}}\left(\wt x^*; \sigma_\infty(\wt
x^*)^2\right)=\wt h^*(\wt x^*)=0 <
\wt h^*_{\mathrm{BS}}\left(\wt x^*; \sigma_\infty(\wt x^*)^2+\varepsilon\right).$$
The first inequality and the argument above imply that
$\sigma_\infty(\wt x^*)^2 - \varepsilon< \sigma_t(\wt x^*)^2$
for all large
$t$.
Since
$ \sigma_\infty(\wt x^*)^2+\varepsilon>2\wt x^*$,
the second inequality, 
Theorem~\ref{thm:OptionPrices}
%and Proposition~\ref{prop:BS}
yields
$$
1-C(1,\E^{\wt x^*t},t, \sigma_t(\wt x^*)^2)\> > \>
1-C(1,\E^{\wt x^*t},t, \sigma_\infty(\wt x^*)^2+\varepsilon)
$$
for all large
$t$.
This implies
$\sigma_t(\wt x^*)^2 <\sigma_\infty(\wt x^*)^2 + \varepsilon$
and hence proves the theorem for
$\wt x^*$.
The point
$x^*$
can be dealt with analogously. 
\end{proof}

The following corollary is a simple consequence of our results.
%Theorem~\ref{thm:AsymVol}.

\begin{corollary}
\label{cor:LimitSmila}
Let 
$(X,V)$ 
be a non-degenerate affine stochastic volatility process that satisfies the assumptions
of Theorem~\ref{thm:OptionPrices}.
Then there exists a L\'evy process
$Y$,
which satisfies assumptions of 
Theorem~\ref{thm:AsymVol}
as a degenerate affine stochastic volatility process, 
%Proposition~\ref{prop:BS},
such that the limiting smiles of the models
$\E^X$
and
$\E^Y$
are identical.
\end{corollary}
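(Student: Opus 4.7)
The strategy exploits the fact that formula~\eqref{eq:Def_Sigma_Infty} for the limiting smile depends on the model $\E^X$ only through the limiting cumulant generating function $h$ (via its Fenchel--Legendre transform $h^*$ and the relation $\wt h^*(x)=h^*(x)-x$). Thus it suffices to exhibit a L\'evy process $Y$ whose characteristic exponent coincides with $h$, because then $h_Y=h_X$ on $\bbR$, which forces $h_Y^*=h_X^*$, $\wt h_Y^*=\wt h_X^*$, and hence (by Theorem~\ref{thm:AsymVol} applied to both models) $\sigma_\infty^Y\equiv\sigma_\infty^X$.

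The construction is as follows. By Theorem~\ref{Thm:wm_convergence}, the function $h$ defined in~\eqref{eq:DefOfh} extends uniquely to the cumulant generating function of an infinitely divisible random variable $Z$ on $\bbR$. Let $(\delta,\sigma^2,\nu)$ be the L\'evy--Khintchine characteristic triplet of $Z$ and let $Y=(Y_t)_{t\geq 0}$ be the corresponding L\'evy process (so $Y_1\stackrel{d}{=}Z$). Then by~\eqref{eq:CharExpLevy} the characteristic exponent $h_Y$ of $Y$ equals $h$ on its effective domain, and $\E^Y$ is a degenerate affine stochastic volatility model in the sense of Definition~\ref{Def:ASVM}.

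What remains is to verify that $Y$ satisfies the hypotheses of Theorem~\ref{thm:AsymVol} in the degenerate case, i.e. those of Proposition~\ref{prop:OptionPricesLevy}. The martingale condition $h(1)=0$ (equivalent to~\eqref{eq:Exp_Levy_Drift}) holds because Lemma~\ref{Lem:existence}(b) gives $w(1)=0$ and A3 gives $F(1,0)=0$, so $h(1)=F(1,w(1))=0$. Essential smoothness of $h$ from Theorem~\ref{thm:LCGFh} (applicable by the standing assumption of the corollary) implies both that $\{0,1\}\subset\cD_h^\circ$, which translates into the exponential moment condition~\eqref{eq:Eff_Dom_Levy} on $\nu$ via the standard characterization of the effective domain of the cumulant generating function of an infinitely divisible law (e.g.\ \cite[Thm.~25.17]{Sato}), and that $h$ is steep at the boundary of $\cD_h^\circ$, which by differentiation of~\eqref{eq:CharExpLevy} is equivalent to~\eqref{eq:Lambda_Steep}. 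Hence Proposition~\ref{prop:OptionPricesLevy} applies to $Y$, so Theorem~\ref{thm:AsymVol} applies, and the limiting smile of $\E^Y$ coincides with that of $\E^X$.

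The only delicate point is the translation of the analytic properties of $h$ (essential smoothness, location of $\{0,1\}$ in the interior of $\cD_h$) into the structural conditions~\eqref{eq:Eff_Dom_Levy}--\eqref{eq:Lambda_Steep} on the L\'evy triplet; this is routine once one recalls that $h=\log\EE[\E^{uY_1}]$ is literally the cumulant generating function of the infinitely divisible marginal $Y_1$, so steepness of $h$ at a finite boundary point $u_\pm$ is exactly divergence of the corresponding exponentially weighted first moment of $\nu$.
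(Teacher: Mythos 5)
Your proof is correct and follows the same route as the paper's: take the infinitely divisible cumulant generating function $h$ (via Theorem~\ref{Thm:wm_convergence}), build the L\'evy process $Y$ from its L\'evy--Khintchine triplet, and invoke Theorem~\ref{thm:LCGFh} to see that the conditions~\eqref{eq:Eff_Dom_Levy}--\eqref{eq:Lambda_Steep} of Proposition~\ref{prop:OptionPricesLevy} hold, after which Theorem~\ref{thm:AsymVol} and formula~\eqref{eq:Def_Sigma_Infty} give coincidence of the smiles. You spell out slightly more detail than the paper (the martingale check $h(1)=0$ and the translation of essential smoothness into the L\'evy-triplet conditions), but the argument is identical in substance.
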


\begin{proof}
Let 
$h$
be the limiting cumulant generating function for
$(X,V)$.
Theorem~\ref{Thm:wm_convergence}
implies that 
$h$
is a cumulant generating function of an infinitely 
divisible random variable.
By Theorem~\ref{thm:LCGFh}, 
%implies that 
the characteristic triplet of 
$h$
satisfies conditions~\eqref{eq:Eff_Dom_Levy} and~\eqref{eq:Lambda_Steep}.
%for its characteristic triplet are satisfied.
Therefore, if we define a L\'evy process 
$Y$
with this characteristic triplet, 
Theorem~\ref{thm:AsymVol}
and formula~\eqref{eq:Def_Sigma_Infty}
imply that models 
$X$
and
$Y$
have identical limiting volatility smiles.
\end{proof}

\begin{remarks} (i) In other words Corollary~\ref{cor:LimitSmila}
states that in the limit, non-degenerate affine stochastic volatility models  
cannot generate the behaviour of implied volatility, which is different from 
that generated by the processes with constant volatility and stationary, 
independent increments. \\ 
\noindent (ii) Corollary~\ref{cor:LimitSmila} suggests the following natural 
open question: can any limiting smile of an exponential L\'evy model
be obtained as a limit of implied volatility smiles of a non-degenerate 
affine stochastic volatility process? It is not immediately clear how to approach this
problem because the characterisation of the limiting cumulant generating function 
$h$
of a model
$(X,V)$,
given in Theorems~\ref{Thm:wm_convergence}
and~\ref{thm:LCGFh},
does not give an explicit form 
%says noting about 
of L\'evy-Khintchine
triplet of
$h$.
\end{remarks}

\subsection{Examples of limiting smiles}
\label{subsec:PlotsOfSmiles}
We now apply the analysis to the examples of affine stochastic volatility models
described in Section~\ref{subsec:Examples}.
%Theorem~\ref{Thm:wm_convergence}
%implies the form of 
In each of the cases 
the limiting cumulant generating function
$h$
is available in
closed form.
If the assumptions
of 
Theorem~\ref{thm:LCGFh}
or
Corollaries~\ref{cor:LCGFh}~(i),~\ref{cor:LCGFh}~(ii)
are satisfied,
then
the convex dual
$h^*$
%of the limiting
%cumulant generating
%function
%$h$, given by 
%Theorem~\ref{Thm:wm_convergence},
is a good rate function and hence the formula in~\eqref{eq:Def_Sigma_Infty}
defines the limiting smile as maturity tends to infinity.

\subsubsection{Heston model}
\label{subsec:Hseton}
The characteristics
$F,R$
are given in~\eqref{Eq:FR_Heston}
and 
$\chi(u)=u\zeta\rho-\lambda$
(see~\eqref{eq:Heston_chi}).
Assumption~A5 
is satisfied if and only if 
%n particular implies 
$\chi(1) < 0$, which is equivalent to  
$\lambda > \zeta \rho$. 
Since
$\lambda \theta \neq 0$ 
it follows that
$w \mapsto F(0,w)$ is not 
identically $0$.
Since the assumptions of 
Corollary~\ref{cor:LCGFh}~(ii)
are satisfied,
Theorem~\ref{thm:AsymVol}
implies that the limiting smile
is given by the formula in~\eqref{eq:Def_Sigma_Infty},
where
%and the function
%$\xi$
%is
%$\chi(u)=u\zeta\rho-\lambda$
%(see~\eqref{eq:Heston_chi}).
\begin{eqnarray}
%\label{eq:w_Heston}
%w(u) & =  & 
\label{eq:h_Heston}
h(u) & = &  - \frac{\lambda \theta }{\zeta^2}\left(\chi(u) + \sqrt{\Delta(u)}\right)
\qquad\text{and}\quad 
\Delta(u)  = \chi(u)^2 - \zeta^2 (u^2 - u). 
\end{eqnarray}
%Theorem~\ref{thm:LCGFh}
%implies
%that
%$\cD_h= \{u: \Delta(u) \ge 0\}$,
%which clearly contains the set
%$\{0,1\}$.
This implies the main result in~\cite{FJ},~\cite{FJM_note}.
A first order asymptotic expansion for the large maturity smile in the 
Heston model was obtained in~\cite{FJM} using saddle point methods.

%Suppose that $\lambda \theta \neq 0$, otherwise $w \mapsto F(0,w)$ is identically $0$.
%Condition~\eqref{Cond4} holds, and hence the theorem applies.

\subsubsection{Heston model with state-independent jumps}
The functions 
$F,R$
are given in~\eqref{Eq:HEston_Jumps}
%and~\eqref{eq:R_HEston_Jumps}
and 
$\chi(u)=u\zeta\rho-\lambda$.
As in Section~\ref{subsec:Hseton},
$\lambda > \zeta \rho$
implies that 
$(X,V)$ 
defined Section~\ref{sec:Heston_Jumps} is a non-degenerate affine stochastic
volatility model
that satisfies~A5.
As before assumption 
$\lambda \theta \neq 0$
implies that 
$w\mapsto F(0,w)$
is non-zero.
$\widetilde{\kappa}(u)$,
defined in~\eqref{eq:Kappa_Def},
is a cumulant generating function of the compensated 
pure-jump L\'evy process 
$J$.
Assume that there
exists 
$\kappa_- < 0$ 
such that 
%the 
%following holds:
$|\widetilde{\kappa}(u)|<\infty$
for $u>\kappa_-$,
$|\widetilde{\kappa}(u)|=\infty$
for 
$u<\kappa_-$
and~\eqref{eq:Lambda_Steep}
holds for 
$u_-=\kappa_-$
and
$u_+=\infty$
(e.g. if the distribution of the absolute jump heights is exponential
with parameter
$\alpha>0$,
then
$\kappa_-=-\alpha$).
Under these assumptions on state-independent
jumps, the function 
$F$
in~\eqref{eq:F_HEston_Jumps}
is steep and 
$\{(0,0),(1,0)\}\subset\cD_F^\circ$.
Hence 
%Lemma~\ref{Lem:existence}
Theorem~\ref{thm:LCGFh}
implies that the limiting cumulant generating function is of the form
$$
h(u) = - \frac{\lambda \theta }{\zeta^2}\left(\chi(u) + \sqrt{\Delta(u)}\right) + \widetilde{\kappa}(u),
$$
where 
$\Delta(u)$
is as
in~\eqref{eq:h_Heston},
%and
%the effective domain
%$\cD_h$
%contains 
%$(0,1)$
%in its interior. 
%Furthermore we have an explicit representation for
%$\cD_h^\circ$
%given by
%$(\kappa_-,\infty)\cap\{u: \Delta(u) > 0\}$.
and
Theorem~\ref{thm:AsymVol}
yields the limiting smile 
formula in~\eqref{eq:Def_Sigma_Infty}.
Note also that
condition~\eqref{eq:Lambda_Steep}
on the jump measure
is not necessary if 
$\Delta(u)<0$
for some 
$u>\kappa_-$,
since in this case
$F$
in~\eqref{eq:F_HEston_Jumps}
is automatically
steep.

%Figure~************** gives the limiting smile when
%jumps are exponential or NIG.

%Suppose that 
%$\lambda \theta \neq 0$. Let $\kappa(u)$ be the
%cumulant generating function of the compensated jump measure. If $\kappa(u)$ is steep and $\{0,1\}
%\in \cD_\kappa^\circ$ then Theorem~\ref{thm:LCGFh} applies. Even if $\kappa(u)$ is not steep, $h(u)$ may be
%found to be steep in some cases by the following argument: The limiting cumulant generating function
%$h$ in the Heston model with jumps is given by
%\[h(u) = h(u)_\text{Heston} + \kappa(u)\]
%where $h(u)_\text{Heston}$ is the limiting cumulant generating function of the corresponding
%Heston model, and is always steep at the boundary. Hence if the effective domains satisfy
%$\cD_h^\text{Heston} \subseteq \cD_{\kappa}$, then $h$ will also be steep regardless of $\kappa$.

\subsubsection{A model of Bates with state-dependent jumps}
The functions 
$F,R$
are given in~\eqref{Eq:Bates}.
%and~\eqref{eq:R_Bates}
Again we assume 
$\lambda > \zeta \rho$
%(hence A1~--~A5 hold)
and 
$\lambda \theta \neq 0$,
which implies that 
$(X,V)$
defined in Section~\ref{subsec:Bates}
is a non-degenerate
affine stochastic
volatility model
that satisfies A5.
It is clear
from~\eqref{eq:F_Bates}
that the assumptions 
of Theorem~\ref{thm:LCGFh}
on 
$F$ 
are satisfied.
Let
$\widetilde{\kappa}(u)$
be as in~\eqref{eq:R_Bates}
and assume that
either
$|\widetilde{\kappa}(u)|<\infty$
for all
$u\in\bbR$
or there exists
$\kappa_-\in\bbR$
such that 
$|\widetilde{\kappa}(u)|<\infty$
for $u>\kappa_-$,
$|\widetilde{\kappa}(u)|=\infty$
for 
$u<\kappa_-$
and 
$\lim_{n\to\infty}|\widetilde{\kappa}(u_n)|=\infty$
for any sequence 
$(u_n)_{n\in\bbN}$
with
$u_n\downarrow\kappa_-$.
Then 
$\cD_R$
is open in
$\bbR^2$
(see~\eqref{eq:R_Bates})
and
Theorem~\ref{thm:LCGFh}
implies that
the limiting
cumulant generating function takes the form
$$
h(u) =-\frac{ \lambda\theta }{\zeta^2}\left(\chi(u)+\sqrt{\Delta(u)}\right) ,\qquad\text{where}\quad
\Delta(u) = \chi(u)^2 - \zeta^2 (u^2 - u + 2 \widetilde{\kappa}(u)).
$$
%where 
%$\wt \kappa(u)$
%is the characteristic exponent state-dependent jumps.
%The interior of the effective domain of 
%$h$
%is again an intersection of the set 
%$\{u: \Delta(u) > 0\}$
%with the interior of the effective domain of 
%$\wt \kappa(u)$.
%Theorem~\ref{thm:AsymVol}
%and formula~\eqref{eq:Def_Sigma_Infty}
%give the limiting smile. 

%Suppose that $\lambda \theta \neq 0$. Let $\kappa(u)$ be
%the compensated cumulant generating function of the state dependent jump measure. If
%$\cD_\kappa$ is open then assumptions of Theorem~\ref{thm:LCGFh} are satisfied.

\subsubsection{The Barndorff-Nielsen-Shephard model}
The functions 
$F,R$
are given in~\eqref{Eq:BNS}.
%and~\eqref{eq:R_BNS}.
Since $\chi(u)=-\lambda < 0$
and the jump measure is non-trivial
(i.e. $\nu\neq0$),
the process
$(X,V)$
defined in Section~\ref{subsec:BNS_Model}
is a non-degenerate affine stochastic volatility process.
Assume that $\kappa(u)$ 
defined in~\eqref{eq:Kappa_BNS}
is either finite for all 
$u\in\bbR$
or 
there exists 
$\kappa_+>0$
with 
$|\kappa(u)|<\infty$
for 
$u<\kappa_+$,
$|\kappa(u)|=\infty$
for 
$u>\kappa_+$
and
and~\eqref{eq:Lambda_Steep}
holds for 
$u_-=\kappa_-$
and
$u_+=\infty$
(e.g. if the distribution of the absolute jump heights is exponential
with parameter
$\alpha>0$,
then
$\kappa_-=-\alpha$).
%be the compensated cumulant
%generating function of the driving jump process (see~\eqref{eq:Kappa_BNS}). 
%If $\kappa(u)$ is steep and $\{0,1\} \in
%\cD_\kappa^\circ$ then Theorem~\ref{thm:LCGFh} applies.
Then 
$F$
(see~\eqref{eq:F_BNS})
satisfies the assumptions of Theorem~\ref{thm:LCGFh}
and the limiting cumulant generating function is of the form
$$
h(u) = \lambda \kappa \left(\frac{u^2}{2\lambda} + u\left(\rho - \frac{1}{2\lambda}\right)\right) -
u\lambda\kappa(\rho).
$$
%Theorem~\ref{thm:AsymVol}
%and formula~\eqref{eq:Def_Sigma_Infty}
%give the limiting smile. 
%If 
%$(\kappa_-,\kappa_+)$
%is 
%$\cD_\kappa^\circ$,
%then the interior of the effective domain
%$\cD_h$
%is given by
%$(u_-,u_+)$
%where
%$u_+$
%%is the zero, which is larger than 
%$1$,
%of the quadratic equation 
%$ 
%\frac{u^2}{2\lambda} + u\left(\rho - \frac{1}{2\lambda}\right)-\kappa_+=0.
%$
%Analogously 
%$u_-$
%is the zero of a similar equation, which is smaller than $0$.

\subsection{How close are the formula $\sigma_\infty(x)$ and the implied volatility $\sigma_t(x)$ for
large maturity?}
\label{subsec:HowClose}
In this section we plot the difference 
$|\sigma_\infty(x)-\sigma_t(x)|$
for 
$t\in\{10,15\}$
and 
$x\in[-0.1,0.1]$
for the models with jumps from Section~\ref{subsec:PlotsOfSmiles}
(see Figure~\ref{fig:Errors}).
In the case 
$t$
equals 
10 years
the error is approximately 
$45$
basis points (bp)
with the strike 
$K$
ranging from
$30\%$
to 
$200\%$
of the spot.
At the maturity of 15 years
the error
is approximately 
$20$ bp
and
$K$
ranges between 
$20\%$
and
$400\%$
of the spot.
%We now provide numerical evidence for the models above.
%For each of the models detailed above, we compute the implied volatility for the maturities $10$ and
%$15$ years
%as well as the limiting implied volatility given by Theorem 15.
%We then plot the differences between the latter and the true implied volatility.
%The left plot in Figure~\ref{fig:Errors} represent such a difference when the maturity is $10$ years,
%and the right plot accounts for it when the maturity is equal to $15$ years.

In the cases of 
Heston with state-independent jumps and Bates with state-dependent jumps
we took the following diffusion parameters 
\begin{equation*}%\label{eq:HestonParam}
\lambda=1.15,\quad
\zeta=0.2,\quad
\theta=0.04,\quad
\rho=-0.4
\end{equation*}
and the L\'evy measure
$\nu\left(\dd \xi_1\right)=\alpha\E^{\alpha\xi_1}I_{\left\{\xi_1<0\right\}}\dd\xi_1$
with 
$\alpha=0.6$.
The compensated cumulant generating function 
$\wt \kappa(u)$
%for the 
%spectrally negative L\'evy process 
%$J$
(see~\eqref{eq:Kappa_Def}
for the definition of 
$\wt \kappa(u)$
in Section~\ref{sec:Heston_Jumps} 
and note that it takes the same form in
%$\wt \kappa(u)$
%is the same in 
Section~\ref{subsec:Bates})
is in this case given by
$$\widetilde{\kappa}(u)= \frac{u\left(u-1\right)}{\left(u+\alpha\right)\left(\alpha+1\right)}
\qquad\text{for all }u\in\left(-\alpha,\infty\right).$$
%Note that the value $\xi=0$ does not entail any loss of generality since it only accounts for a
%continuous (Gaussian) path,
%which is already included in the continuous part of the Heston process.

In the case of the BNS model we took 
a pure-jump subordinator 
$J$
with L\'evy measure
$\nu\left(\dd \xi_2\right)=ab\E^{-b\xi_2}I_{\left\{\xi_2>0\right\}}\dd\xi_2$.
The cumulant generating function~\eqref{eq:Kappa_BNS}
is
given by 
$\kappa(u)=a u/(b-u)$
for 
$u<b$.
We used the following values for the parameters
\begin{equation*}%\label{eq:GammaBNSParam} 
a=1.4338,\quad
b=11.6641,\quad
%v=0.0145,\quad
\lambda=0.5783,\quad
\rho=-1.2606,
\end{equation*}
which were
taken from~\cite[Section 7.3]{Schoutens} 
where the model was calibrated to the options on the S\&P~500. 

%For the Bates model we take $\alpha=0.6$.

\begin{figure}
\begin{center}
\subfigure{\includegraphics[scale=0.35]{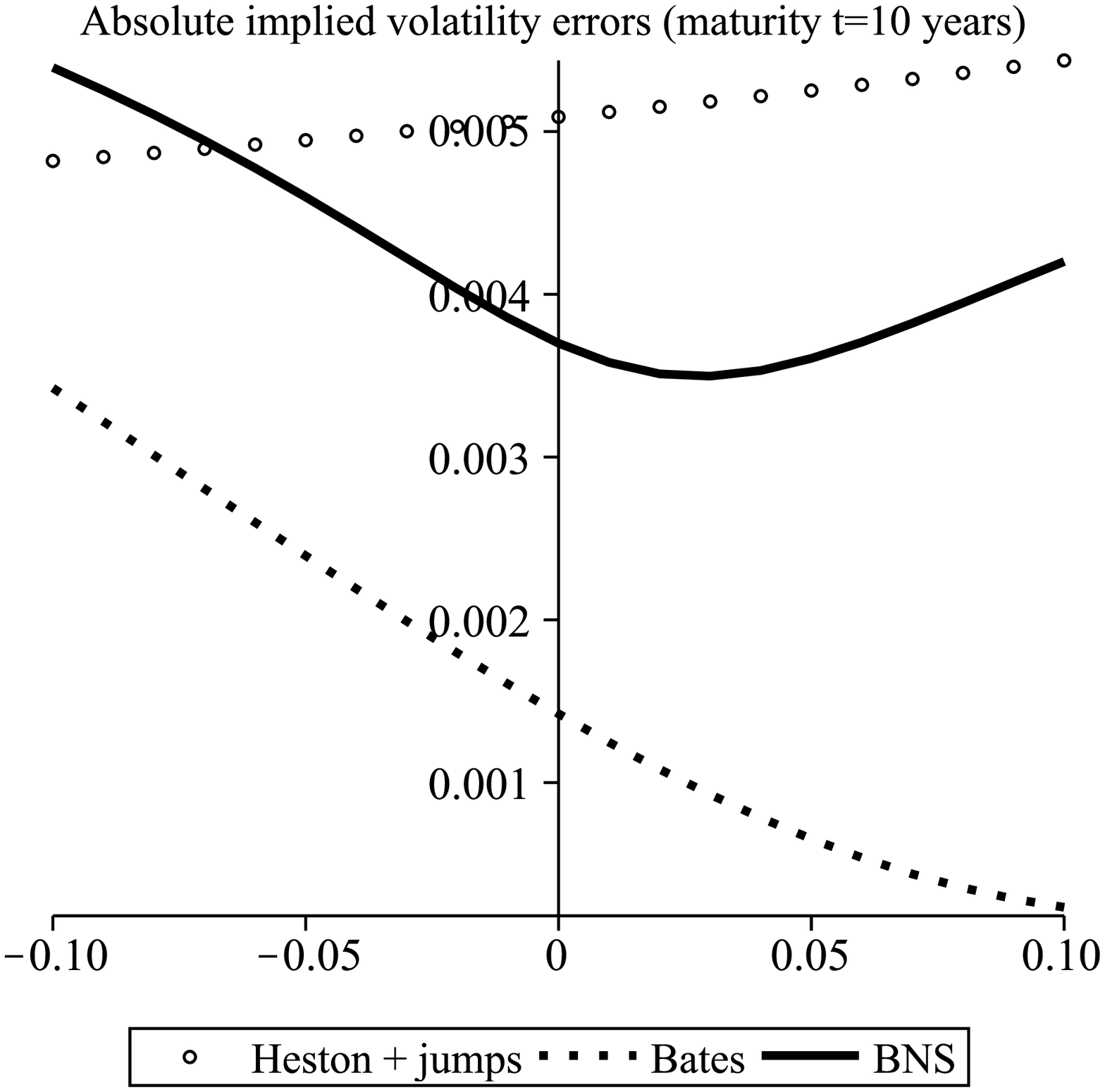}}
\subfigure{\includegraphics[scale=0.35]{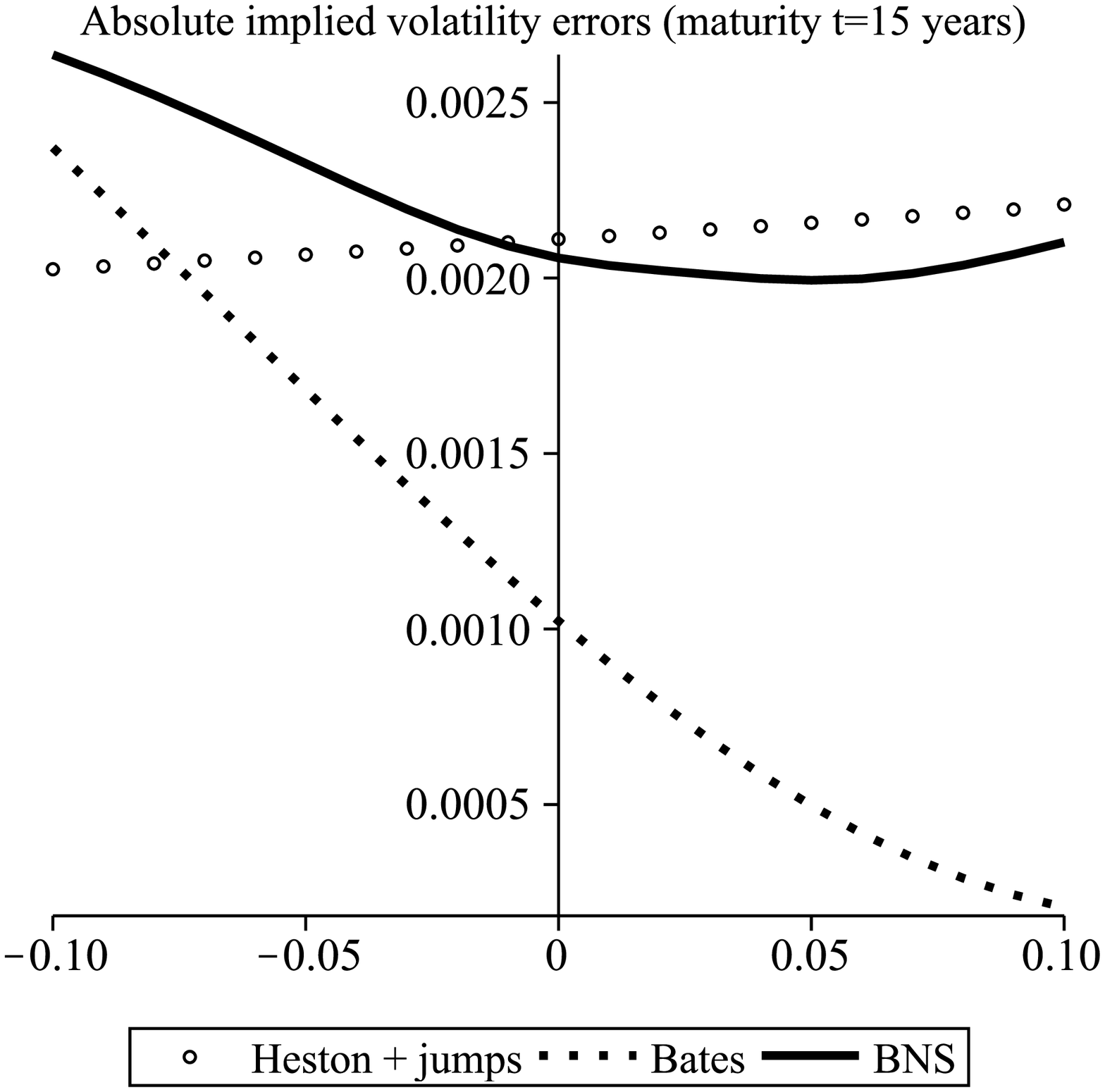}}
\caption{\footnotesize This figure contains the plots of the function
$x\mapsto |\sigma_\infty(x)-\sigma_t(x)|$
in the interval
$x\in[-0.1,0.1]$
for the models with jumps from Section~\ref{subsec:PlotsOfSmiles}
and maturities 
$t\in\{10,15\}$.
The values of the model parameters used are given 
in Section~\ref{subsec:HowClose}.}
\label{fig:Errors}
\end{center}
\end{figure}

%========================================================================================
%\section{Conclusion}
%\label{sec:counter}

%========================================================================================

\bibliographystyle{alpha}
\bibliography{references}

\end{document}